\newcommand{\val}[2][]{\mu_{#1} ( #2 )} 
\newcommand{\mmax}[1]{\text{max} ( #1 )}
\newcommand{\mmin}[1]{\text{min} ( #1 )}
\renewcommand{\emptyset}{\varnothing}
\newcommand{\psn}[1]{\textsc{PSN} ( #1 )}
\newenvironment{unnumberedthm}[1]
  {\innercustomthm}
  {\endinnercustomthm}
\newtheorem{thm}{Theorem}[]
\newtheorem{prop}[thm]{Proposition}
\newtheorem{lem}[thm]{Lemma}
\theoremstyle{definition}
\title{Approximate Envy-Freeness in Graphical Cake Cutting}
\author{
Sheung Man Yuen
\qquad
Warut Suksompong\\National University of Singapore
}
\date{\vspace{-3ex}}
\begin{document}

\maketitle

\begin{abstract}
We study the problem of fairly allocating a divisible resource in the form of a graph, also known as graphical cake cutting.
Unlike for the canonical interval cake, a connected envy-free allocation is not guaranteed to exist for a graphical cake.
We focus on the existence and computation of connected allocations with low envy.
For general graphs, we show that there is always a $1/2$-additive-envy-free allocation and, if the agents' valuations are identical, a $(2+\epsilon)$-multiplicative-envy-free allocation for any $\epsilon > 0$.
In the case of star graphs, we obtain a multiplicative factor of $3+\epsilon$ for arbitrary valuations and $2$ for identical valuations.
We also derive guarantees when each agent can receive more than one connected piece.
All of our results come with efficient algorithms for computing the respective allocations.
\end{abstract}

\section{Introduction}

Cake cutting refers to the classic problem of fairly allocating a divisible resource such as land or advertising spaces---playfully modeled as a ``cake''---among agents who may have different values for different parts of the resource \citep{RobertsonWe98,Procaccia13}.
The most common fairness criteria in this literature are \emph{proportionality} and \emph{envy-freeness}.
Proportionality demands that if there are $n$~agents among whom the cake is divided, then every agent should receive at least $1/n$ of her value for the entire cake.
Envy-freeness, on the other hand, requires that no agent would rather have another agent's piece of cake than her own.
Early work in cake cutting established that a proportional and envy-free allocation that assigns to each agent a connected piece of cake always exists, regardless of the number of agents or their valuations over the cake \citep{DubinsSp61,Stromquist80,Su99}.

Although existence results such as the aforementioned guarantees indicate that a high level of fairness can be achieved in cake cutting, they rely on a typical assumption in the literature that the cake is represented by an interval.
This representation is appropriate when the resource corresponds to machine processing time or a single road, but becomes insufficient when one wishes to divide more complex resources such as networks.
For example, one may wish to divide road networks, railway networks, or power cable networks among different companies for the purpose of construction or maintenance.
In light of this observation, \citet{BeiSu21} introduced a more general model called \emph{graphical cake cutting}, wherein the cake can be represented by any connected graph.
With a graphical cake, a connected proportional allocation may no longer exist---see \Cref{fig:star} (left).
Nevertheless, these authors showed that more than half of the proportionality guarantee can be retained: any graphical cake admits a connected allocation such that every agent receives at least $1/(2n-1)$ of her entire value.

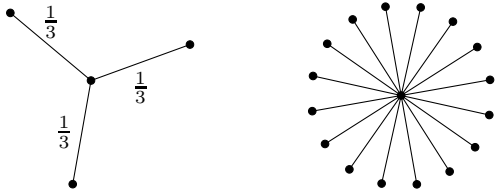
\begin{figure}[t]
\centering
  \begin{subfigure}[t]{.32\textwidth}
    \centering 
    \begin{tikzpicture}
    \usetikzlibrary{positioning}
    \node (v) at (0,0) {};
    \node at (20:1.4cm) (v1) {};
    \node at (-100:1.4cm) (v2) {};
    \node at (140:1.4cm) (v3) {};
    \draw [black, fill=black]
        (v) circle [radius=0.05] 
        (v1) circle [radius=0.05]
        (v2) circle [radius=0.05]
        (v3) circle [radius=0.05];
    \draw [-]
        (v1.center) -- (v.center)  node[pos=0.5][below]{$\frac{1}{3}$} 
        (v2.center) -- (v.center)  node[pos=0.5][left]{$\frac{1}{3}$} 
        (v3.center) -- (v.center)  node[pos=0.5][above]{$\frac{1}{3}$};
    \end{tikzpicture}
  \end{subfigure}
  \begin{subfigure}[t]{.32\textwidth}
    \centering 
    \begin{tikzpicture}
    \usetikzlibrary{positioning}
    \node (v) at (0,0) {};
    \node at (10:1.2cm) (v1) {};
    \node at (-12.5:1.2cm) (v2) {};
    \node at (-35:1.2cm) (v3) {};
    \node at (-57.5:1.2cm) (v4) {};
    \node at (-80:1.2cm) (v5) {};
    \node at (-102.5:1.2cm) (v6) {};
    \node at (-125:1.2cm) (v7) {};
    \node at (-147.5:1.2cm) (v8) {};
    \node at (-170:1.2cm) (v9) {};
    \node at (167.5:1.2cm) (v10) {};
    \node at (145:1.2cm) (v11) {};
    \node at (122.5:1.2cm) (v12) {};
    \node at (100:1.2cm) (v13) {};
    \node at (77.5:1.2cm) (v14) {};
    \node at (55:1.2cm) (v15) {};
    \node at (32.5:1.2cm) (v16) {};
    \draw [black, fill=black]
        (v) circle [radius=0.05] 
        (v1) circle [radius=0.05]
        (v2) circle [radius=0.05]
        (v3) circle [radius=0.05]
        (v4) circle [radius=0.05] 
        (v5) circle [radius=0.05]
        (v6) circle [radius=0.05]
        (v7) circle [radius=0.05]
        (v8) circle [radius=0.05] 
        (v9) circle [radius=0.05]
        (v10) circle [radius=0.05]
        (v11) circle [radius=0.05]
        (v12) circle [radius=0.05] 
        (v13) circle [radius=0.05]
        (v14) circle [radius=0.05]
        (v15) circle [radius=0.05]
        (v16) circle [radius=0.05];
    \draw [-]
        (v1.center) -- (v.center) -- (v9.center)
        (v2.center) -- (v.center) -- (v10.center)
        (v3.center) -- (v.center) -- (v11.center)
        (v4.center) -- (v.center) -- (v12.center)
        (v5.center) -- (v.center) -- (v13.center)
        (v6.center) -- (v.center) -- (v14.center)
        (v7.center) -- (v.center) -- (v15.center)
        (v8.center) -- (v.center) -- (v16.center);
    \end{tikzpicture}
  \end{subfigure}
\caption{(Left) A star graph with three edges of equal length. Two agents with identical valuations distributed uniformly over the three edges cannot each receive a connected piece worth at least $1/2$ of the whole cake at the same time.
(Right) A star graph with many edges to be divided between two agents. If sharing of vertices is disallowed, then the agent who does not receive the center vertex will be restricted to at most one edge, and will incur envy equal to almost the value of the entire cake.} \label{fig:star}
\end{figure}

The result of \citet{BeiSu21} demonstrates that approximate proportionality is attainable in graphical cake cutting.
However, the allocation that their algorithm produces may lead to high \emph{envy} between the agents.
In particular, while each agent~$i$ is guaranteed $1/(2n-1)$ of her value, it is possible that the algorithm assigns the remaining $(2n-2)/(2n-1)$ of the value to another agent~$j$ from $i$'s perspective, so that $i$ envies $j$ by almost the entire value of the cake (for large~$n$) when measured additively, and by a factor linear in $n$ when measured multiplicatively.
Note that envy-freeness is a much more stringent benchmark than proportionality---for instance, although there exists a simple protocol for computing a connected proportional allocation of an interval cake \citep{DubinsSp61}, no finite protocol can compute a connected envy-free allocation of it \citep{Stromquist08}, and even without the connectivity requirement, the only known envy-free protocol requires an enormous number of queries \citep{AzizMa16}.
The goal of our work is to investigate the existence of connected allocations of a graphical cake with low envy, as well as to design algorithms for computing such allocations.

\subsection{Our Results}
We assume that the cake is represented by the edges of a connected graph, and each edge can be subdivided into segments to be allocated to different agents.
Each agent is to receive a connected piece, though we will also briefly explore relaxations of this constraint towards the end. 
The whole cake must be allocated, and each agent's value for it is additive and normalized to~$1$; the value of each agent does not need to be uniform within each edge or across different edges.
Following \citet{BeiSu21}, we also assume that each vertex can be shared by multiple agents.\footnote{Without this assumption, one cannot obtain nontrivial envy-free guarantees---see the caption of \Cref{fig:star} (right) for a brief discussion.}
We consider both additive envy---for $\alpha\in [0,1]$, an allocation is \emph{$\alpha$-additive-EF} if no agent envies another agent by an \emph{amount} of more than $\alpha$---and multiplicative envy---for $\alpha\ge 1$, an allocation is \emph{$\alpha$-EF} if no agent envies another agent by a \emph{factor} of more than $\alpha$.
An $\alpha$-EF allocation is also $\left(\frac{\alpha-1}{\alpha+1}\right)$-additive-EF; we refer to \Cref{prop:relationships} for details.

In \Cref{sec:nonidenvalue}, we consider agents with (possibly) non-identical valuations.
We show that for any graph, there exists a $1/2$-additive-EF allocation, and such an allocation can be computed by iteratively allocating to each agent a share that other agents do not value too highly.
If the graph is a star, we present an algorithm that, for any $\epsilon > 0$, finds a $(3 + \epsilon)$-EF allocation (which is therefore nearly $1/2$-additive-EF as well) by allowing agents to repeatedly relinquish their current share for a higher-value share, and allocating the remaining shares by following certain rules.
Our two algorithms generalize ideas from algorithms for the interval cake by \citet{GoldbergHoSu20} and \citet{ArunachaleswaranBaRa19}, respectively.
We remark here that star graphs are of particular interest in graphical cake cutting because they constitute perhaps the most intuitive generalization of the well-studied interval cake, and therefore provide a natural platform for attempts to extend techniques and results from the interval-cake setting.\footnote{Star graphs (and path graphs) are also often studied in the context of indivisible items (see \Cref{subsec:related}). 
In graphical cake cutting, all path graphs are equivalent to the classic interval cake, which is why the role of star graphs is further highlighted.}

\begin{table}[t]
\centering
\begin{tabular}{c|cc}
& \textbf{General graphs} & \textbf{Star graphs} \\ \hline \rule{0cm}{3ex}
\textbf{Non-identical valuations} & $1/2$-additive-EF  (Thm.~\ref{thm:nonidenvalue-general}) & $(3+\epsilon)$-EF  (Thm.~\ref{thm:nonidenvalue-star}) \\ \rule{0cm}{3ex}
\textbf{Identical valuations} & $(2+\epsilon)$-EF  (Thm.~\ref{thm:idenvalue_general-2pluseps}) & $2$-EF  (Thm.~\ref{thm:idenvalue_star})
\end{tabular}
\caption{Summary of results in Sections \ref{sec:nonidenvalue} and \ref{sec:idenvalue}.} \label{tab:summary}
\end{table}

Next, in \Cref{sec:idenvalue}, we demonstrate how the bounds for non-identical valuations can be improved in the case of identical valuations; this case captures scenarios in which there is an objective valuation among agents.\footnote{We discuss further motivation for investigating this case at the beginning of \Cref{sec:idenvalue}.}
For arbitrary graphs, we devise an algorithm that computes a $(2+\epsilon)$-EF allocation (which is therefore nearly $1/3$-additive-EF).
Our algorithm is inspired by the work of \citet{ChuWuWa10} on partitioning edges of a graph (see \Cref{subsec:related}), and involves repeatedly adjusting the shares along a path from the minimum share to the maximum share so that the shares become more balanced in value.
For star graphs, we provide a simpler algorithm that returns a $2$-EF allocation using a bag-filling idea.
As we discuss after \Cref{prop:relationships}, an approximate proportionality result of \citet{BeiSu21} implies that both of our guarantees in this section are (essentially) tight.

Finally, in \Cref{sec:beyond}, we explore the fairness guarantees when each agent can receive more than one connected piece. We introduce the notion of \emph{path similarity number} to discuss the relationship between connected interval cake cutting and (non-connected) graphical cake cutting.

Our results in Sections \ref{sec:nonidenvalue} and \ref{sec:idenvalue} are summarized in \Cref{tab:summary}.
All of our algorithms can be implemented in the standard cake-cutting model of \citet{RobertsonWe98} in time polynomial in $n$, the size of the graph, and, if applicable, $1/\epsilon$.

\subsection{Further Related Work} \label{subsec:related}
Cake cutting is a topic of constant interest for researchers in mathematics, economics, and computer science alike.
For an overview of its intriguing history, we refer to the books by \citet{BramsTa96} and \citet{RobertsonWe98}, as well as the book chapter by \citet{Procaccia16}.

The cake-cutting literature traditionally assumes that the cake is given by an interval, and connectivity of the cake allocation is often desired in order to avoid giving agents a ``union of crumbs'' \citep{Stromquist80,Stromquist08,Su99,BeiChHu12,CechlarovaPi12,CechlarovaDoPi13,AumannDo15,ArunachaleswaranBaRa19,GoldbergHoSu20,SegalhaleviSu21,SegalhaleviSu23,BarmanKu22}.\footnote{While connectivity is the most frequently studied constraint in cake cutting, other constraints, such as geometric and separation constraints, have also been explored \citep{Suksompong21}.}
Besides \citet{BeiSu21}, a few authors have recently addressed the division of a graphical cake.
\citet{IgarashiZw23} focused on envy-freeness but made the crucial assumption that vertices cannot be shared between agents---as discussed in the caption of \Cref{fig:star} (right), with their assumption, one cannot obtain nontrivial guarantees even for star graphs and identical valuations.
\citet{DeligkasEiGa22} explored the complexity of deciding whether an envy-free allocation exists for a given instance (and, if so, finding one), both when vertices can and cannot be shared, but did not consider approximate envy-freeness.
\citet{ElkindSeSu21-Graph} investigated another fairness notion called maximin share fairness in graphical cake cutting.

Several recent papers have examined connectivity constraints in the allocation of \emph{indivisible items} represented by vertices of a graph \citep{BouveretCeEl17,IgarashiPe19,Suksompong19,LoncTr20,DeligkasEiGa21,BeiIgLu22,BeiLaLu24,BiloCaFl22,CaragiannisMiSh22,GahlawatZe23}.
In particular, \citet{CaragiannisMiSh22} assumed that some vertices can be shared by different agents; this assumption allowed them to circumvent the strong imbalance in the case of star graphs.
A number of authors considered the problem of dividing \emph{edges} of a graph, where, unlike in graphical cake cutting, each edge is treated as an indivisible object \citep{WuWaKu07,ChuWuWa10,ChuWuCh13}.

Another line of work also combines cake cutting and graphs, but the graph represents the acquaintance relation among agents \citep{AbebeKlPa17,BeiQiZh17,BeiSuWu20,GhalmeHuMa22,Tuckerfoltz23}.
Despite the superficial similarity, this model is very different from graphical cake cutting, and there are no implications between results on the two models.

\section{Preliminaries} \label{sec:prelim}
Let $G = (V, E)$ be a connected undirected graph representing the cake. 
Each edge $e \in E$, isomorphic to the interval $[0, 1]$ of the real numbers, is denoted by $e = [v_1, v_2] = [v_2, v_1]$ where $v_1, v_2 \in V$ are the endpoints of edge $e$. 
If $x_1$ and $x_2$ are points \emph{on} an edge, then the segment between them is denoted by $[x_1, x_2]$ or $[x_2, x_1]$---we sometimes call it an \emph{interval}. 
We shall restrict our attention to closed intervals only.

Two intervals of a cake $G$ are considered \emph{disjoint} if their intersection is a finite set of points.
A \emph{share} of a cake $G$ is a finite union of pairwise disjoint (closed) intervals of $G$---where the intervals may belong to different edges---such that it is connected, i.e., for any two points in the share, there exists a path between the two points that only traverses the share.\footnote{For brevity, we use the term ``share'' for what is sometimes referred to as a ``connected piece of cake''.} 
As with intervals, two shares of a cake $G$ are considered \emph{disjoint} if their intersection is a finite set of points.

Let $N = \{1, 2, \ldots, n\}$ be a set of $n \geq 2$ agents. 
Each agent $i \in N$ has a \emph{valuation function} (or \emph{utility function}) $\mu_i$, which maps each share of $G$ to a nonnegative real number. 
Following the cake-cutting literature \citep{Procaccia16}, we assume that each valuation function $\mu_i$ is 
\begin{itemize}
    \item \emph{normalized}: $\val[i]{G} = 1$, 
    \item \emph{divisible}: for each interval $[x_1, x_2]$ and $\lambda \in [0, 1]$, there exists a point $y \in [x_1, x_2]$ such that $\val[i]{[x_1, y]} = \lambda \cdot \val[i]{[x_1, x_2]}$, and 
    \item \emph{additive}: for any two disjoint intervals $I_1$ and $I_2$, we have $\val[i]{I_1 \cup I_2} = \val[i]{I_1} + \val[i]{I_2}$.
\end{itemize}
An \emph{instance} of graphical cake cutting consists of a graph $G$, a set of agents $N$, and valuation functions $( \mu_i )_{i \in N}$.

Given an instance, a \emph{partial allocation} of $G$ is a list $\mathcal{A} = (A_1, \ldots, A_n)$ of pairwise disjoint shares of the cake, where $A_i$ is agent $i$'s share---we say that $A_i$ is \emph{allocated} to agent~$i$. 
A share is \emph{unallocated} if its intersection with every agent's share is a finite set of points.
An \emph{allocation} of $G$ is a partial allocation such that every point in the cake is allocated to some agent. 
For a parameter $\alpha$, a partial allocation $\mathcal{A}$ is
\begin{itemize}
    \item \emph{$\alpha$-additive-EF} if $\val[i]{A_i} \geq \val[i]{A_j} - \alpha$ for all $i, j \in N$, 
    \item \emph{$\alpha$-EF} if $\val[i]{A_i} \geq \val[i]{A_j} / \alpha$ for all $i, j \in N$, 
    \item \emph{envy-free} if $\mathcal{A}$ is $1$-EF (or, equivalently, $0$-additive-EF), 
    \item \emph{$\alpha$-proportional} if $\val[i]{A_i} \geq 1 / (\alpha n)$ for all $i \in N$, and
    \item \emph{proportional} if $\mathcal{A}$ is $1$-proportional.
\end{itemize}

In order for algorithms to access the cake valuations, we assume that they can make \textsc{Cut} and \textsc{Eval} queries available in the standard model of \citet{RobertsonWe98}.

For completeness, we state the relationships between different fairness notions.

\begin{prop} \label{prop:relationships}
Let $\mathcal{A}$ be an allocation for $n \geq 2$ agents, and let $\alpha \geq 1$. 
\begin{itemize}
    \item If $\mathcal{A}$ is $\alpha$-EF, then it is $\left(\alpha - \frac{\alpha - 1}{n}\right)$-proportional.
    \item If $\mathcal{A}$ is $\alpha$-EF, then it is $\left( \frac{\alpha - 1}{\alpha + 1} \right)$-additive-EF.
    \item If $\mathcal{A}$ is $\alpha$-proportional, then it is $\left(1 - \frac{2}{\alpha n}\right)$-additive-EF.
\end{itemize}
\end{prop}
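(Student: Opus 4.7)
The plan is to handle the three bullet points separately; each is a short direct calculation using only normalization, additivity, and pairwise-disjointness of the shares.

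For the first bullet, I would fix an agent $i$ and sum the $\alpha$-EF inequalities $\mu_i(A_i) \ge \mu_i(A_j)/\alpha$ over all $j \ne i$. Using that $\sum_j \mu_i(A_j) = \mu_i(G) = 1$, this gives $(n-1)\mu_i(A_i) \ge (1 - \mu_i(A_i))/\alpha$, which rearranges to $\mu_i(A_i) \ge 1/(\alpha(n-1)+1)$. A direct check shows $\alpha(n-1) + 1 = (\alpha - (\alpha-1)/n)\cdot n$, so this is precisely $\left(\alpha - \frac{\alpha-1}{n}\right)$-proportionality.

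For the second bullet, fix agents $i$ and $j$ and write $x = \mu_i(A_i)$, $y = \mu_i(A_j)$. The two available constraints are $y \le \alpha x$ (from $\alpha$-EF) and $x + y \le 1$ (from normalization plus disjointness). I would then maximize $y - x$ on the feasible region: at the corner where both constraints are tight, $x = 1/(\alpha+1)$ and $y = \alpha/(\alpha+1)$, giving $y - x = (\alpha-1)/(\alpha+1)$; a case split on which constraint is active confirms this is the maximum, so $y - x \le (\alpha-1)/(\alpha+1)$ always, which is the stated additive envy bound.

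For the third bullet, $\alpha$-proportionality gives $\mu_i(A_i) \ge 1/(\alpha n)$, and by normalization together with disjointness of the $A_k$'s, $\mu_i(A_j) \le 1 - \mu_i(A_i) \le 1 - 1/(\alpha n)$. Subtracting gives $\mu_i(A_j) - \mu_i(A_i) \le 1 - 2/(\alpha n)$, as required.

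None of the three parts presents a genuine obstacle; the only subtle step is the optimization in the second bullet, where one must notice that the $\alpha$-EF bound alone yields only $(\alpha-1)\mu_i(A_i)$, and that the extra constraint $\mu_i(A_i)+\mu_i(A_j) \le 1$ is what caps this at $(\alpha-1)/(\alpha+1)$.
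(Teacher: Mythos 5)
Your proposal is correct and follows essentially the same route as the paper: the first and third bullets are the same direct calculations, and your small LP-style maximization of $\mu_i(A_j)-\mu_i(A_i)$ under the constraints $\mu_i(A_j)\le\alpha\,\mu_i(A_i)$ and $\mu_i(A_i)+\mu_i(A_j)\le 1$ is just a reframing of the paper's argument, which first caps $\mu_i(A_j)$ at $\frac{\alpha}{\alpha+1}$ and then applies the $\alpha$-EF bound. No gaps.
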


The proof of \Cref{prop:relationships} can be found in \Cref{app:relationships}.

\citet{BeiSu21} demonstrated that for every $n \geq 2$, there exists an instance in which no allocation is $\alpha$-proportional for any $\alpha < 2 - 1/n$, even for identical valuations and star graphs.
By \Cref{prop:relationships}, one cannot obtain a better guarantee than $2$-EF for such instances.

We now state a useful lemma about the existence of a share that has sufficiently high value for one agent and, at the same time, not exceedingly high value for other agents.

\begin{lem} \label{lem:divide}
Let $H$ be a connected subgraph of a graphical cake, and suppose that $H$ is worth $\beta_0$ to some agent in a subset $N' \subseteq N$. 
Then, for any positive $\beta \leq \beta_0$ and any vertex $r$ of $H$, there exists an algorithm, running in time polynomial in $n$ and the size of $H$, that finds a partition of $H$ into two (connected) shares such that the first share is worth at least $\beta$ to some agent in $N'$ and less than $2 \beta$ to every agent in~$N'$, and the second share contains the vertex $r$.
\end{lem}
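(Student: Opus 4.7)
The plan is to argue via a continuous moving-knife process combined with the intermediate value theorem. I will exhibit a monotone family of partitions $\{(S_1(t), S_2(t)) : t \in [0,1]\}$ of the cake $H$ in which $S_1(0)$ is a single point, $S_1(1) = H$ with $S_2(1) = \{r\}$, each $S_1(t)$ and $S_2(t)$ is a connected share, $r \in S_2(t)$ for every $t \in [0,1]$, and every valuation map $t \mapsto \val[i]{S_1(t)}$ is continuous (which follows from the divisibility axiom applied to each edge the sweep passes through). Given such a family, set $f(t) := \max_{i \in N'} \val[i]{S_1(t)}$; then $f$ is continuous with $f(0) = 0$ and $f(1) = \max_{i \in N'} \val[i]{H} \geq \beta_0 \geq \beta$, so by the intermediate value theorem there is a smallest $t^*$ with $f(t^*) = \beta$. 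At $t^*$, the agent $i^* \in N'$ attaining the maximum has $\val[i^*]{S_1(t^*)} = \beta$, while every $i \in N'$ satisfies $\val[i]{S_1(t^*)} \leq f(t^*) = \beta < 2\beta$. Setting $S_1 := S_1(t^*)$ and $S_2 := S_2(t^*)$ then yields the desired partition.

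The main obstacle is constructing the sweep family so that both shares remain connected throughout while the valuations vary continuously. I would build it recursively on a DFS spanning tree $T$ of $H$ rooted at $r$, using the standard fact that every non-tree edge of $H$ is then a back-edge joining a vertex to one of its ancestors. Let $c_1, \ldots, c_d$ denote the children of $r$ in $T$; for each $j$, let $H_j$ be the subgraph consisting of the subtree $T_{c_j}$, the tree edge $(r, c_j)$, and every back-edge with an endpoint in $T_{c_j}$ (whose other endpoint must then lie in $\{r\} \cup T_{c_j}$, since the common ancestor of two vertices in distinct subtrees is $r$ itself). The $H_j$'s pairwise meet only at $r$, so the global sweep processes $H_1, \ldots, H_d$ in succession, with $r$ serving as a shared boundary vertex between the two shares during transitions. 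Within each $H_j$, the sweep starts at a deepest leaf of $T_{c_j}$ and proceeds bottom-up; whenever it engulfs a vertex $w$ that is the descendant endpoint of a back-edge $(w, a)$ with $a$ still in $S_2$, it then continuously advances the split point of that back-edge from $w$ toward $a$ before moving on. The base case of a single edge $(r, v)$ is a standard moving knife from $v$ toward $r$.

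For an efficient implementation in the Robertson-Webb model, I would simulate the sweep one edge at a time: whenever the sweep begins traversing an edge $e$, use \textsc{Eval} for each agent in $N'$ to predict whether absorbing all of $e$ would push $f$ past $\beta$; if not, absorb $e$ and continue along the traversal; otherwise, a single \textsc{Cut} query for the appropriate agent locates the exact stopping point on $e$. Since each of the $O(|E(H)|)$ edges is visited a bounded number of times and each visit uses $O(n)$ queries, the overall query complexity, and hence the running time, is polynomial in $n$ and the size of $H$.
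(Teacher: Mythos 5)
There is a genuine gap, and it is fatal to the whole strategy rather than a repairable detail: the continuous monotone family $(S_1(t),S_2(t))$ that your intermediate-value argument requires does not exist for general trees, because the set of values attainable by a connected share whose complement is also connected (and contains $r$) is not an interval. Consider a star with center $c$ and three leaves $r,a,b$, with a single agent whose valuation is uniform ($1/3$ per edge). Any partition of this graph into two connected shares must give one share a connected subinterval of a single edge incident to a leaf, and the other share everything else; hence the share not containing $c$ has value in $[0,1/3]$ and the share containing $c$ has value in $[2/3,1]$ --- the value $1/2$ is unattainable. Concretely, your sweep breaks at the branch vertex: once $S_1(t)$ has absorbed all of $[a,c]$ and begins to advance into $[c,b]$ from the $c$ end, the complement $S_2(t)$ becomes the union of a subinterval of $[c,b]$ not containing $c$ and the edge $[c,r]$, which is disconnected; advancing from the $b$ end instead disconnects $S_1(t)$. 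The same obstruction recurs at every internal vertex with two or more children, and also at the transitions between your blocks $H_1,\dots,H_d$ (after $H_1$ is fully absorbed, a sweep starting at a deep leaf of $T_{c_2}$ leaves $S_1$ disconnected until it reaches $c_2$). Note also that your conclusion is strictly stronger than the lemma --- a first share worth \emph{exactly} $\beta$ to some agent and \emph{at most} $\beta$ to every agent --- and this stronger statement is provably false: applied with $\beta=1/2$ to the three-edge star of \Cref{fig:star} (left) with two agents having identical uniform valuations, it would produce a connected proportional allocation, which that example rules out. The factor $2$ in the lemma is not an artifact of a lossy argument; it is forced by the branch vertices.

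The paper's proof embraces this discreteness instead of fighting it. After unfolding $H$ into a tree rooted at $r$ (duplicating vertices to break cycles), it walks down from $r$ to a vertex $v$ whose subtree is worth at least $\beta$ to some agent in $N'$ while every child subtree of $v$ is worth less than $\beta$ to all agents in $N'$. If some single branch $S_{v,w}$ (child subtree plus the edge to $v$) already reaches $\beta$ for someone, a cut point on the edge $[w,v]$ is found as in your moving-knife step, giving a share worth exactly $\beta$ to some agent and at most $\beta$ to all. Otherwise whole child branches are bundled greedily until the union first reaches $\beta$ for some agent; since the union was below $\beta$ for everyone before the last branch was added and that branch is itself below $\beta$ for everyone, the bundle is worth less than $2\beta$ to every agent. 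This bundling step is exactly the move your continuous sweep cannot simulate, and it is where the upper bound $2\beta$ (rather than $\beta$) comes from. Your Robertson--Webb implementation sketch would adapt fine to that discrete procedure, but as written your proof does not establish the lemma.
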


\citet[Lemma~4.9]{BeiSu21} made this claim for the special case where all agents have the same value for~$H$ and no vertex~$r$ is specified.
We will use \Cref{lem:divide} as a subroutine in \textsc{IterativeDivide} (\Cref{alg:iterativedivide}) and \textsc{BalancePath} (\Cref{alg:balancepath}); \textsc{IterativeDivide} considers the case where different agents may have different values for $H$, while \textsc{BalancePath} requires the condition on the vertex $r$ in order to maintain connectivity along the minimum-maximum path.
We shall use \textsc{Divide}$(H, N', \beta, r)$ to denote the ordered pair of the two corresponding shares as described in the lemma.
The idea behind the proof of \Cref{lem:divide} is similar to that of the special case shown by \citet{BeiSu21}: we convert $H$ into a tree rooted at the vertex~$r$ by removing cycles iteratively---keeping the edges and duplicating the vertices if necessary---then traverse the tree from $r$ until a vertex $v$ with a subtree of an appropriate size is reached, and finally identify some connected subgraph of the subtree as the first share while assigning the remaining portion as the second share. 
The full details, including the pseudocode, are given in \Cref{app:divide}.

\section{Possibly Non-Identical Valuations} \label{sec:nonidenvalue}
In this section, we allow agents to have different valuations.
For arbitrary graphs, we present an algorithm that computes an approximately envy-free allocation of a graphical cake when measured additively. 
For the case where the graph is a star, we give an algorithm that finds an allocation wherein the envy is bounded by a multiplicative factor of roughly~$3$.

\subsection{General Graphs}
A priori, it is not even clear whether there exists a constant $\alpha < 1$ independent of $n$ such that an $\alpha$-additive-EF allocation always exists. 
We now describe the algorithm, \textsc{IterativeDivide} (\Cref{alg:iterativedivide}), which finds a $1/2$-additive-EF allocation for arbitrary graphs and non-identical valuations, using ideas similar to the algorithm by \citet{GoldbergHoSu20} for computing a $1/3$-additive-EF allocation of an interval cake.
Choose any arbitrary vertex~$r$ of $G$, and start with the entire graph $G$ and all agents in contention. 
If there is only one agent remaining, allocate the remaining graph to that agent. 
If the remaining graph is worth less than $\beta = 1/4$ to every remaining agent,\footnote{While the value of $\beta$ is the same for all iterations here, we write $\beta_i$ in the pseudocode because we will later consider a generalization in which $\beta$ can be different for different iterations.}  allocate an empty graph to any one of the remaining agents and remove this agent.
Otherwise, apply the algorithm \textsc{Divide} on the remaining graph and the remaining agents with threshold $\beta$. 
Allocate the first share to any agent who values that share at least $\beta$, and remove this agent along with her share. 
Repeat the procedure with the remaining graph until the whole graph is allocated. 
We claim that the resulting allocation is indeed $1/2$-additive-EF.

\begin{algorithm}[tb]
    \caption{\textsc{IterativeDivide}$(G, N)$.} \label{alg:iterativedivide}
    \textbf{Input}: Graph $G$, set of agents $N = \{1, \ldots, n\}$.\\
    \textbf{Output}: Allocation $(A_1, \ldots, A_n)$. \\
    \textbf{Initialization}: $r \leftarrow$ any vertex of $G$; \ $H_n \leftarrow G$; \ $N' \leftarrow N$. \\
    \vspace{-3.5mm}
    \begin{algorithmic}[1]
        \FOR{$i = 1, \ldots, n - 1$}
            \STATE $\beta_i \leftarrow 1/4$ \label{ln:beta}
            \IF{there exists $i' \in N'$ such that $\val[i']{H_n} \geq \beta_i$} \label{ln:if}
                \STATE $(H_i, H_n) \leftarrow \textsc{Divide}(H_n, N', \beta_i, r)$ \label{ln:divide}
                \STATE $i^* \leftarrow$ any agent in $N'$ who values $H_i$ at least $\beta_i$
            \ELSE
                \STATE $H_i \leftarrow \emptyset$
                \STATE $i^* \leftarrow$ any agent in $N'$
            \ENDIF
            \STATE $A_{i^*} \leftarrow H_i$
            \STATE $N' \leftarrow N' \setminus \{ i^* \}$
        \ENDFOR
        \STATE $A_j \leftarrow H_n$, where $j$ is the remaining agent in $N'$
        \STATE \textbf{return} $(A_1, \ldots, A_n)$
    \end{algorithmic}
\end{algorithm}

\begin{thm} \label{thm:nonidenvalue-general}
Given an instance of graphical cake cutting, there exists an algorithm that computes a $1/2$-additive-EF allocation in time polynomial in $n$ and the size of $G$.
\end{thm}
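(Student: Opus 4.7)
The plan is to first verify, by straightforward induction on iterations, that \textsc{IterativeDivide} produces a valid connected allocation of the whole cake, and then to establish $1/2$-additive-EF by a case analysis on the order in which the envier and the envied agent are removed. Let $H^{(k)}$ denote the remaining graph at the start of iteration $k$, let $N^{(k)} \subseteq N$ be the agents still in contention at that point, and let $A^k$ denote the share produced at iteration $k$. By \Cref{lem:divide}, each call to \textsc{Divide} partitions $H^{(k)}$ into two connected pieces with the second piece always containing the pivot vertex $r$; induction then yields that every $H^{(k)}$ is connected, every allocated share is connected, the shares are pairwise disjoint, and their union is $G$. The polynomial running time is immediate from the loop bound of $n-1$ iterations, at most $n$ \textsc{Eval} queries per iteration, and the polynomial cost of each \textsc{Divide} call guaranteed by \Cref{lem:divide}.

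For the envy bound, fix distinct agents $a, b \in N$ and let $k(a), k(b)$ denote the iterations at which each receives a share (using $k = n$ for the agent who keeps $H^{(n)}$). If $k(b) < k(a)$, then $a$ was still in $N^{(k(b))}$ when $A_b$ was created, so the \textsc{Divide} guarantee gives $\mu_a(A_b) < 2\beta_{k(b)} = 1/2$ (or $A_b = \emptyset$, in which case $\mu_a(A_b)=0$); either way $\mu_a(A_a) \geq 0 > \mu_a(A_b) - 1/2$. If instead $k(a) < k(b)$, I would split according to whether the if-condition in line~\ref{ln:if} held at iteration $k(a)$. If it held, then $\mu_a(A_a) \geq \beta_{k(a)} = 1/4$, and since $A_b \subseteq H^{(k(a)+1)}$,
\[
\mu_a(A_b) \leq \mu_a(H^{(k(a)+1)}) = \mu_a(H^{(k(a))}) - \mu_a(A_a) \leq 1 - \tfrac{1}{4} = \tfrac{3}{4},
\]
yielding $\mu_a(A_b) - \mu_a(A_a) \leq 1/2$. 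If the if-condition failed at iteration $k(a)$, then $\mu_a(H^{(k(a))}) < 1/4$ (because $a \in N^{(k(a))}$) and $H^{(k(a)+1)} = H^{(k(a))}$, so $\mu_a(A_b) < 1/4$ while $\mu_a(A_a) = 0$.

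The subtle step is the second case: once $a$ has been removed, \Cref{lem:divide} no longer bounds $\mu_a$-values of subsequently allocated shares, so the argument must rely instead on a conservation-of-value bound—the total $\mu_a$-value of everything allocated after iteration $k(a)$ is at most $1 - \mu_a(A_a)$. The threshold $\beta = 1/4$ chosen on line~\ref{ln:beta} is precisely the fixed point of this trade-off: $A_a$ is worth at least $\beta$ to $a$ while the remainder of the cake is worth at most $1 - \beta$, so the additive envy is at most $(1-\beta) - \beta = 1 - 2\beta = 1/2$, exactly matching the bound $2\beta = 1/2$ that \textsc{Divide} guarantees in the opposite case.
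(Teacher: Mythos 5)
Your proof is correct and takes essentially the same approach as the paper's: both rest on \Cref{lem:divide} with threshold $\beta = 1/4$, combining the observation that a share worth at least $1/4$ to its recipient leaves at most $3/4$ for anyone else with the $<2\beta$ upper bound on shares carved out while the envier is still in contention. Your pairwise case analysis by removal order is just a reorganization of the paper's split on whether the envier belongs to $N_0$ (and, if anything, handles the post-removal conservation argument slightly more explicitly).
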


\begin{proof}
We claim that the algorithm \textsc{IterativeDivide} (\Cref{alg:iterativedivide}) satisfies the condition.
It is clear that the algorithm can be implemented in polynomial time; it remains to check that the allocation returned by the algorithm is $1/2$-additive-EF.
Let $i \in N$, and let $N_0 \subseteq N$ be the subset of agents who were allocated shares that correspond to the first share of some \textsc{Divide} procedure called by \textsc{IterativeDivide}. 
If $i \in N_0$, then agent~$i$ receives a share worth at least $\beta = 1/4$ to her by \Cref{lem:divide}, so every other agent receives a share worth at most $1 - 1/4 = 3/4$ to agent $i$, and agent $i$'s envy is at most $3/4 - 1/4 = 1/2$. 
Else, $i \notin N_0$, and every agent in~$N_0$ receives a share worth less than $2\beta = 1/2$ to agent $i$ by \Cref{lem:divide}, while every agent in $N \setminus N_0$ receives a share worth less than $\beta = 1/4 < 1/2$ to agent $i$, so agent $i$'s envy is again at most $1/2$.
\end{proof}

While an additive envy of $1/2$ can be seen as high, the left example of \Cref{fig:star} shows that an envy of $1/3$ is inevitable.
Moreover, even for an interval cake, the (roughly) $1/4$-additive approximation of \citet{BarmanKu22} is the current best as far as polynomial-time computability is concerned.

Although \textsc{IterativeDivide} guarantees that the envy between each pair of agents is at most $1/2$, it is possible that some agents receive an empty share from the algorithm. 
In the remainder of this paper, we present algorithms that find approximately envy-free allocations up to constant multiplicative factors for star graphs as well as for agents with identical valuations. 
Any such allocation ensures positive value for every agent and, by \Cref{prop:relationships}, is also approximately envy-free when measured additively.

\subsection{Star Graphs} \label{subsec:nonidenvalue-star}
The case of star graphs presents a natural generalization of the canonical interval cake and, as can be seen in \Cref{fig:star}, already highlights some of the challenges that graphical cake cutting poses.
For this class of graphs, we devise an algorithm that, for any constant $\epsilon > 0$, computes a $(3 + \epsilon)$-EF allocation in polynomial time. 
The algorithm consists of four phases. 
It starts with an empty partial allocation and finds a small star of ``stubs'' near the center vertex (Phase~1). 
It then repeatedly finds an unallocated share worth slightly more than some agent's share, and allows that agent to relinquish her existing share for this new share---care must be taken to ensure that other agents do not have too much value for this new share (Phase~2).
This new share could be a segment of an edge (Phase~2a) or a union of multiple complete edges (Phase~2b).
This phase is repeated until there are no more unallocated shares suitable for agents to trade with.
Finally, the unallocated shares are appended to the agents' existing shares (Phases 3 and 4).
See \Cref{fig:nonidenvalue-star} for an illustration of each phase.
We remark that Phases~2a and 3 of our algorithm are adapted from the algorithm of \citet{ArunachaleswaranBaRa19} for finding a $(2 + \epsilon)$-EF allocation of an interval cake. 

\begin{figure*}[t]
\centering
  \begin{subfigure}[b]{.24\textwidth}
    \centering 
    \begin{tikzpicture}[scale=0.86]
    \usetikzlibrary{positioning}
    \node[label=below:\small $v$] at (0,0) (v) {};
    \node at (180:1.5cm) (v1) {};
    \node at (120:1.5cm) (v2) {};
    \node at (60:1.5cm) (v3) {};
    \node at (0:1.5cm) (v4) {};
    \node at (-60:1.5cm) (v5) {};
    \node at (-120:1.5cm) (v6) {};
    \node at (180:0.4cm) (x1) {};
    \node at (120:0.4cm) (x2) {};
    \node at (60:0.4cm) (x3) {};
    \node at (0:0.4cm) (x4) {};
    \node at (-60:0.4cm) (x5) {};
    \node at (-120:0.4cm) (x6) {};
    \node at (60:0.2cm) (word2) {};
    \node at (30:1.3cm) (word1) {};
    \node at (30:1.6cm) (word1text) {\small $\leq \frac{\epsilon'}{m}$};
    \draw [black, fill=black]
        (v) circle [radius=0.05]
        (v1) circle [radius=0.05]
        (v2) circle [radius=0.05]
        (v3) circle [radius=0.05]
        (v4) circle [radius=0.05]
        (v5) circle [radius=0.05]
        (v6) circle [radius=0.05]
        (x1) circle [radius=0.05]
        (x2) circle [radius=0.05]
        (x3) circle [radius=0.05]
        (x4) circle [radius=0.05]
        (x5) circle [radius=0.05]
        (x6) circle [radius=0.05];
    \draw []
        (v.center) -- (v1.center)  node[pos=1.2]{\small $v_1$} node[pos=0.45, above]{\small $x_1$}
        (v.center) -- (v2.center)  node[pos=1.2]{\small $v_2$} node[pos=0.5, right]{\small $x_2$}
        (v.center) -- (v3.center)  node[pos=1.2]{\small $v_3$} node[pos=0.4, right]{\small $x_3$}
        (v.center) -- (v4.center)  node[pos=1.2]{\small $v_4$} node[pos=0.45, below]{\small $x_4$}
        (v.center) -- (v5.center)  node[pos=1.2]{\small $v_5$} node[pos=0.5, left]{\small $x_5$}
        (v.center) -- (v6.center)  node[pos=1.2]{\small $v_6$} node[pos=0.4, left]{\small $x_6$};
    \draw [->] 
        (word1.south) to [out=-90,in=0] (word2.east);
    \end{tikzpicture}
    \caption{Phase 1}
  \end{subfigure}
  \begin{subfigure}[b]{.23\textwidth}
    \centering 
    \begin{tikzpicture}[scale=0.86]
    \usetikzlibrary{positioning}
    \node at (0,0) (v) {};
    \node at (180:1.5cm) (v1) {};
    \node at (120:1.5cm) (v2) {};
    \node at (60:1.5cm) (v3) {};
    \node at (0:1.5cm) (v4) {};
    \node at (-60:1.5cm) (v5) {};
    \node at (-120:1.5cm) (v6) {};
    \node at (180:0.4cm) (x1) {};
    \node at (120:0.4cm) (x2) {};
    \node at (60:0.4cm) (x3) {};
    \node at (0:0.4cm) (x4) {};
    \node at (-60:0.4cm) (x5) {};
    \node at (-120:0.4cm) (x6) {};
    \node at (180:1.1cm) (y11) {};
    \node at (180:0.8cm) (y12) {};
    \node at (120:1.0cm) (y2) {};
    \draw [white]
        (v.center) -- (v1.center)  node[pos=1.2, black]{\small $v_1$}
        (v.center) -- (v2.center)  node[pos=1.2, black]{\small $v_2$}
        (v.center) -- (v3.center)  node[pos=1.2, black]{\small $v_3$}
        (v.center) -- (v4.center)  node[pos=1.2, black]{\small $v_4$}
        (v.center) -- (v5.center)  node[pos=1.2, black]{\small $v_5$}
        (v.center) -- (v6.center)  node[pos=1.2, black]{\small $v_6$};
    \draw []
        (x1.center) -- (v.center)
        (x2.center) -- (v.center)
        (x3.center) -- (v.center)
        (v4.center) -- (v.center)
        (v5.center) -- (v.center)
        (v6.center) -- (v.center);
    \draw [dotted, thick]
        (y11.center) -- (y12.center)
        (y2.center) -- (x2.center)
        (v3.center) -- (x3.center);
    \draw [|-|, double, blue]
        (v1.center) -- (y11.center)  node[pos=0.5, below, black]{\small $A_1$};
    \draw [|-|, double, brown]
        (y12.center) -- (x1.center)  node[pos=0.5, below, black]{\small $A_4$};
    \draw [|-|, double, teal]
        (v2.center) -- (y2.center)  node[pos=0.3, right, black]{\small $A_2$};
    \draw [|-, double, violet]
        (v6.center) -- (v.center);
    \draw [|-, double, violet]
        (v5.center) -- (v.center)  node[pos=0.5, right, black]{\small $A_3$};
    \draw [|-, double, violet]
        (v4.center) -- (v.center);
    \draw [black, fill=black]
        (v) circle [radius=0.05]
        (v1) circle [radius=0.05]
        (v2) circle [radius=0.05]
        (v3) circle [radius=0.05]
        (v4) circle [radius=0.05]
        (v5) circle [radius=0.05]
        (v6) circle [radius=0.05]
        (x1) circle [radius=0.05]
        (x2) circle [radius=0.05]
        (x3) circle [radius=0.05]
        (x4) circle [radius=0.05]
        (x5) circle [radius=0.05]
        (x6) circle [radius=0.05];
    \end{tikzpicture}
    \caption{Phase 2}
  \end{subfigure}
  \begin{subfigure}[b]{.26\textwidth}
    \centering
    \begin{tikzpicture}[scale=0.86]
    \usetikzlibrary{positioning}
    \node[label=below:\small $v_k$,label=above:\small (before)] at (0,0) (v1) {};
    \node at (0.4,0) (y11) {};
    \node at (1.0,0) (y12) {};
    \node at (1.3,0) (y13) {};
    \node at (1.8,0) (y14) {};
    \node at (2.1,0) (y15) {};
    \node at (2.8,0) (y16) {};
    \node[label=above:\small $x_k$] at (3.2,0) (x1) {};
    \node[label=below:\small $v$] at (3.5,0) (v) {};
    
    \node at (0.1, 0) (n1) {};
    \node at (1.15, 0) (n2) {};
    \node at (2.1, 0) (n3) {};
    \node at (3.1, 0) (n4) {};
    \node at (0.73, 0) (P1) {};
    \node at (1.55, 0) (P2) {};
    \node at (2.45, 0) (P3) {};
    
    \node[label=below:\small $v_k$,label=above:\small (after)] at (0,-2) (dv1) {};
    \node at (0.4,-2) (dy11) {};
    \node at (1.0,-2) (dy12) {};
    \node at (1.3,-2) (dy13) {};
    \node at (1.8,-2) (dy14) {};
    \node at (2.1,-2) (dy15) {};
    \node at (2.8,-2) (dy16) {};
    \node[label=above:\small $x_k$] at (3.2,-2) (dx1) {};
    \node[label=below:\small $v$] at (3.5,-2) (dv) {};
    
    \draw [black, fill=black]
        (v1) circle [radius=0.05]
        (x1) circle [radius=0.05]
        (v) circle [radius=0.05]
        (dv1) circle [radius=0.05] 
        (dx1) circle [radius=0.05]
        (dv) circle [radius=0.05];
    \draw [|-|, double, blue]
        (y11.center) -- (y12.center)  node[pos=0.5, below, black]{\small $A_1$};
    \draw [|-|, double, teal]
        (y13.center) -- (y14.center)  node[pos=0.5, below, black]{\small $A_2$};
    \draw [|-|, double, violet]
        (y15.center) -- (y16.center)  node[pos=0.5, below, black]{\small $A_3$};
    \draw [dotted, thick]
        (v1.center) -- (y11.center);
    \draw [dotted, thick]
        (y12.center) -- (y13.center);
    \draw [dotted, thick]
        (y14.center) -- (y15.center);
    \draw [dotted, thick]
        (y16.center) -- (x1.center);
    \draw []
        (x1.center) -- (v.center);   
    \draw [->] 
        (n1) to [out=60,in=120] (P1);    
    \draw [->] 
        (n2.north) to [out=135,in=45] (P1.north);    
    \draw [->] 
        (n3) to [out=135,in=45] (P2.north);    
    \draw [->] 
        (n4) to [out=135,in=45] (P3.north);
    \draw [|-|, double, blue]
        (dv1.center) -- (dy13.center)  node[pos=0.5, below, black]{\small $A_1$};
    \draw [|-|, double, teal]
        (dy13.center) -- (dy15.center)  node[pos=0.5, below, black]{\small $A_2$};
    \draw [|-|, double, violet]
        (dy15.center) -- (dx1.center)  node[pos=0.5, below, black]{\small $A_3$};
    \draw []
        (dx1.center) -- (dv.center);
    \draw[dashed]
        (y13.center) -- (dy13.center)
        (y15.center) -- (dy15.center);
    \end{tikzpicture}
    \caption{Phase 3}
  \end{subfigure}
  \begin{subfigure}[b]{.23\textwidth}
    \centering 
    \begin{tikzpicture}[scale=0.86]
    \usetikzlibrary{positioning}
    \node[label=below:\small $v$] at (0,0) (v) {};
    \node at (180:1.5cm) (v1) {};
    \node at (120:1.5cm) (v2) {};
    \node at (60:1.5cm) (v3) {};
    \node at (0:1.5cm) (v4) {};
    \node at (-60:1.5cm) (v5) {};
    \node at (-120:1.5cm) (v6) {};
    \node at (180:0.4cm) (x1) {};
    \node at (120:0.4cm) (x2) {};
    \node at (60:0.4cm) (x3) {};
    \node at (0:0.4cm) (x4) {};
    \node at (-60:0.4cm) (x5) {};
    \node at (-120:0.4cm) (x6) {};
    \node at (180:0.8cm) (y11) {};
    \draw [white]
        (v.center) -- (v1.center)  node[pos=1.2, black]{\small $v_1$}
        (v.center) -- (v2.center)  node[pos=1.2, black]{\small $v_2$}
        (v.center) -- (v3.center)  node[pos=1.2, black]{\small $v_3$}
        (v.center) -- (v4.center)  node[pos=1.2, black]{\small $v_4$}
        (v.center) -- (v5.center)  node[pos=1.2, black]{\small $v_5$}
        (v.center) -- (v6.center)  node[pos=1.2, black]{\small $v_6$};
    \draw []
        (v1.center) -- (x1.center)
        (v2.center) -- (x2.center)
        (v4.center) -- (v.center)
        (v5.center) -- (v.center)
        (v6.center) -- (v.center);
    \draw [ultra thick]
        (x1.center) -- (v.center)
        (x2.center) -- (v.center)
        (v3.center) -- (v.center)  node[pos=0.5, right]{\small $H$};
    \draw [|-|, double, blue]
        (v1.center) -- (y11.center);
    \draw [|-|, double, brown]
        (y11.center) -- (x1.center);
    \draw [|-|, double, teal]
        (v2.center) -- (x2.center);
    \draw [|-, double, violet]
        (v6.center) -- (v.center);
    \draw [|-, double, violet]
        (v5.center) -- (v.center);
    \draw [|-, double, violet]
        (v4.center) -- (v.center);  
    \draw [black, fill=black]
        (v) circle [radius=0.05]
        (v1) circle [radius=0.05]
        (v2) circle [radius=0.05]
        (v3) circle [radius=0.05]
        (v4) circle [radius=0.05]
        (v5) circle [radius=0.05]
        (v6) circle [radius=0.05]
        (x1) circle [radius=0.05]
        (x2) circle [radius=0.05]
        (x3) circle [radius=0.05]
        (x4) circle [radius=0.05]
        (x5) circle [radius=0.05]
        (x6) circle [radius=0.05];
    \end{tikzpicture}
    \caption{Phase 4}
  \end{subfigure}
\caption{(a) The points $x_k$ are found, where $[x_k, v]$ is worth at most $\epsilon'/m$ to every agent. (b) The unallocated intervals (dotted lines) are the ones to be considered in Phase 2a. (c)~The unallocated intervals (dotted lines) are appended leftwards in $v_k$'s direction, except for the one containing $v_k$ which is appended rightwards. (d) The remaining unallocated portion $H$ (bold lines) is a share connected by $v$.} \label{fig:nonidenvalue-star}
\end{figure*}
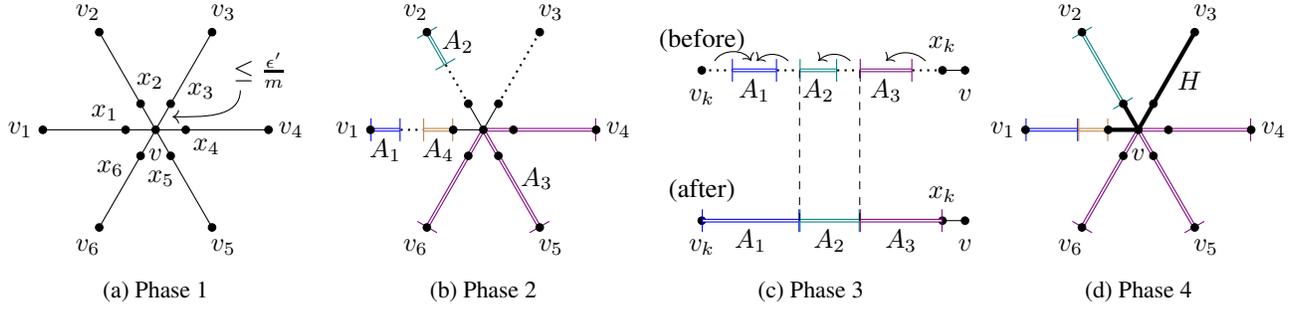

Let $G = (V, E)$ be a star graph centered at vertex $v$ with $m \geq 2$ edges. 
Label the other vertices $v_k$ and the edges $e_k = [v_k, v]$ for $k \in \{1,\dots, m\}$.
Fix any $\epsilon \in (0, 1)$.
\\\\
\noindent \textbf{Phase 1: Preparation. }
Define $\epsilon' = \frac{\epsilon}{16nm}$.
Initialize an empty partial allocation $\mathcal{A} = (A_1, \ldots, A_n)$. 
For each edge $e_k$, find a point $x_k \in [v_k, v]$ such that the segment $[x_k, v]$ is worth at most $\epsilon'/m$ to all agents. 
Define $e_k^1 = [v_k, x_k]$ and  $e_k^2 = [x_k, v]$ for $k \in \{1,\dots, m\}$, and let $E^1$ and $E^2$ be the sets containing all $e_k^1$'s and all $e_k^2$'s, respectively.
Note that $E^2$ is worth at most $\epsilon'$ to every agent.
\\\\
\noindent \textbf{Phase 2: Increase agents' shares incrementally. }
If there is a segment $e_k^1 \in E^1$ such that some unallocated interval within $e_k^1$ is worth at least $\val[i]{A_i} + \epsilon'$ to some agent $i$, \textbf{go to Phase 2a}. Otherwise, consider the segments in $E^1$ that are entirely unallocated. 
If the union of these segments is worth at least $\val[i]{A_i} + \epsilon'$ to some agent $i$, \textbf{go to Phase 2b}.
Otherwise, Phase 2 ends; \textbf{go to Phase 3}.
\begin{itemize}
    \item \textbf{Phase 2a: Allocate a subinterval of some $e_k^1$. }
    Pick an unallocated interval $I \subseteq e_k^1$ that is worth at least $\val[i]{A_i} + \epsilon'$ to some agent $i$, and assume without loss of generality that it cannot be extended in either direction without overlapping an allocated share or $e_k^2$.
    Suppose that $I = [a,b]$, where $a$ is closer to $v_k$ than $b$ is.
    If $a = v_k$, find the point $z \in I$ closest to $a$ such that $[a, z]$ is worth exactly $\val[i^*]{A_{i^*}} + \epsilon'$ to some agent~$i^*$, and let $A_{i^*} = [a, z]$, i.e., agent $i^*$ relinquishes her existing share for this new share.
    Else, $a \ne v_k$; find the point $z \in I$ closest to $b$ such that $[z, b]$ is worth exactly $\val[i^*]{A_{i^*}} + \epsilon'$ to some agent~$i^*$, and let $A_{i^*} = [z, b]$.
    \textbf{Repeat Phase 2.}
    \item \textbf{Phase 2b: Allocate multiple edges in $E$. } Let $K_0$ be the set of all indices~$k$ such that the entire segment $e_k^1$ is unallocated.
    Initialize $K = \emptyset$, and add the indices from $K_0$ to $K$ one by one until $\{ e_k^1 \mid k \in K \}$ is worth at least $\val[i^*]{A_{i^*}} + \epsilon'$ to some agent $i^*$. 
    Let $A_{i^*}$ be the union of $e_k$ over all $k \in K$, i.e., agent $i^*$ relinquishes her existing share for this new share. 
    Note that this new share is connected by the center vertex $v$. \textbf{Repeat Phase~2.}
\end{itemize}

\noindent \textbf{Phase 3: Append unallocated subintervals within $e_k^1$. }
Let $N_1\subseteq N$ consist of all agents who last received a subinterval of some $e_k^1$ via Phase 2a, and $N_2\subseteq N$ consist of all agents who last received two or more complete edges in~$E$ via Phase 2b (we will show later that, in fact, $N_1\cup N_2 = N$). 
For each $e_k^1$ of which some agent from $N_1$ is allocated a subinterval, and for each unallocated interval $I = [a, b] \subseteq e_k^1$ that cannot be extended in either direction without overlapping an allocated share or $E^2$, where $a$ is closer to $v_k$ than $b$ is, append $I$ to the share of the agent who is allocated the point $a$ (i.e., append towards $v_k$'s direction).
The only time this is not possible is when $a = v_k$, in which case we append to the share of the agent who is allocated the point~$b$. 
\\\\
\noindent \textbf{Phase 4: Append $H$. }
Consider the remaining unallocated portion $H$ of the graph.
Note that for each $k \in \{1, \ldots, m\}$, we have $H \cap e_k = \{v\}$ or $e_k^2$ or $e_k$---this means that $H$ is connected by the center vertex $v$.
If $N_2$ is nonempty, append $H$ to the share of an arbitrary agent in $N_2$.
Else, if some segment $e_k^1$ is allocated to at least two agents, give $H$ to the agent who has been allocated the point~$x_k$.
Otherwise, we know that every agent is allocated exactly one segment in $E^1$---give $H$ to the agent who traded her share last in Phase~2 (in particular, Phase~2a).
\smallskip

We claim that this algorithm yields a $(3 + \epsilon)$-EF allocation.
By \Cref{prop:relationships}, such an allocation is roughly $1/2$-additive-EF and (by taking $\epsilon = 1/n$) $3$-proportional as well.

\begin{thm} \label{thm:nonidenvalue-star}
Given an instance of graphical cake cutting consisting of a star graph with $m$ edges, there exists an algorithm that, for any $\epsilon > 0$, computes a $(3 + \epsilon)$-EF allocation in time polynomial in $n$, $m$, and $1/\epsilon$.
\end{thm}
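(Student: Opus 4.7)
The plan is to prove three properties in turn: (i) the algorithm terminates in polynomial time; (ii) it returns a valid allocation in which every agent obtains a nonempty, connected share (equivalently, $N_1\cup N_2=N$); and (iii) the resulting allocation is $(3+\epsilon)$-EF.

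For (i), each iteration of Phase~2 strictly increases $\val[i^*]{A_{i^*}}$ by at least $\epsilon'$, so, since $\val[i]{G}=1$ for every $i$, the total number of Phase~2 iterations is at most $n/\epsilon'=O(n^2m/\epsilon)$; Phases~1, 3, and~4 together use only $\mathrm{poly}(n,m)$ cut/evaluation operations. For the connectivity part of (ii), each Phase~2 share is either a subinterval of some $e_k^1$ or a union of whole edges meeting at~$v$, and both Phase~3 and Phase~4 append intervals adjacent to existing shares: $H$ is connected at~$v$, and the designated recipient's share always reaches~$v$ (directly if in~$N_2$, or through~$x_k$ after Phase~3 in the remaining subcases). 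For $N_1\cup N_2=N$, I would argue by contradiction: if $A_i=\emptyset$ throughout for some~$i$, then the Phase~2 termination conditions force every unallocated maximal subinterval of any $e_k^1$ and the union of all entirely unallocated $e_k^1$'s to be worth less than $\epsilon'$ to~$i$; combined with the envy-style bounds from part~(iii) (with $V_i=0$) and the fact that there are only $O(n+m)$ such subintervals, summing yields $\val[i]{G}=O((n+m)\epsilon')$, which contradicts $\val[i]{G}=1$ for $\epsilon'=\epsilon/(16nm)$ and $\epsilon$ small enough (and the large-$\epsilon$ case can be handled separately by a trivial protocol).

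For (iii), write $V_i:=\val[i]{A_i^{\text{Phase 2}}}$ and bound $\val[i]{A_j^{\text{final}}}$ for each $j\ne i$. For $j\in N_1$, the Phase~2a cutoff (choosing the \emph{closest} point at which some agent crosses the threshold), together with the monotonicity of $\val[\ell]{A_\ell}$ in $\ell$'s own valuation, gives $\val[i]{A_j^{\text{Phase 2}}}\le V_i+\epsilon'$; Phase~3 appends at most two maximal unallocated subintervals (each worth less than $V_i+\epsilon'$ by the end-of-Phase~2 condition), and Phase~4 may append $H$ (worth less than $V_i+2\epsilon'$, since the entirely unallocated $e_k^1$'s contribute $<V_i+\epsilon'$ and $E^2$ contributes $\le\epsilon'$). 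A key structural observation is that the recipient of $H$ receives at most one Phase~3 append: in Case~A the recipient lies in $N_2$ and has no in-$e_k^1$ append, and in Cases~B and~C the recipient is the rightmost agent of some $e_k^1$ and absorbs only the piece $[b_p,x_k]$. In particular, the ``island'' configuration (a sole agent's Phase~2a share sitting strictly inside $e_k^1$, which would force two appends) cannot arise for the last Phase~2a agent in Case~C, because any adjacent agent holding a neighboring share at the time would have to relinquish via Phase~2b and hence enter $N_2$, contradicting $N_2=\emptyset$. For $j\in N_2$, I decompose $A_j^{\text{Phase 2}}$ into the $e_k^1$'s added in Phase~2b before the last one (worth $<V_i+\epsilon'$ by the Phase~2b cutoff), the final $e_{k^*}^1$ (worth $<V_i+\epsilon'$ because Phase~2a did not trigger on it), and the $e_k^2$'s ($\le\epsilon'$ in total), yielding $2V_i+3\epsilon'$, and with $H$ possibly appended at most $3V_i+5\epsilon'$. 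Combining the cases, $\val[i]{A_j^{\text{final}}}\le 3V_i+5\epsilon'$ for every $j\ne i$.

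To convert the additive slack into a multiplicative guarantee, I would use the same global accounting to lower-bound~$V_i$: from $1=\sum_\ell\val[i]{A_\ell^{\text{final}}}\le V_i+(n-1)(3V_i+5\epsilon')$ we get $V_i\ge\Omega(1/n)$ once $\epsilon'$ is sufficiently small, hence $5\epsilon'/V_i=O(\epsilon/m)\le\epsilon$ after (if necessary) enlarging the absolute constant $16$ in $\epsilon'=\epsilon/(16nm)$, and the envy ratio is at most $3+\epsilon$. The main obstacle is the case analysis in part~(iii), and in particular the structural argument ruling out the ``island-plus-$H$'' configuration (without which the bound would degrade to $4V_i+O(\epsilon')$), together with having to coordinate the upper bound on $\val[i]{A_j^{\text{final}}}$ with the lower bound on $V_i$ through the same end-of-Phase~2 conditions so that the constants fit the claimed $3+\epsilon$ factor.
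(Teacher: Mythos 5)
Your proposal is correct and follows essentially the same route as the paper's proof: the same per-phase envy bounds (the paper's Lemma~\ref{lem:envy}), the same global accounting to show every agent's Phase-2 share has value $\Omega(1/(nm))$ (Lemma~\ref{lem:positive}), and the same key structural case analysis showing the recipient of $H$ absorbs at most one Phase-3 append, yielding $\val[i]{A_j}\le 3\val[i]{P_i}+O(\epsilon')$ and hence $(3+\epsilon)$-EF after tuning the constant in $\epsilon'$. Your constants are slightly looser ($5\epsilon'$ versus the paper's $4\epsilon'$) and your exclusion of the ``island'' configuration is phrased via membership in $N_2$ rather than the paper's observation that no agent trades at all after the last Phase-2a trade, but both arguments are sound and you correctly flag that the constant $16$ can be enlarged to absorb the slack.
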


For the sake of exposition, we shall introduce notations to differentiate the partial allocations at different stages of the algorithm.
For an integer $t\ge 0$, let $\mathcal{P}^t = (P_1^t, \ldots, P_n^t)$ be the partial allocation after $t$ iterations of Phase 2, and let $\mathcal{P} = (P_1, \ldots, P_n)$ be the partial allocation at the start of Phase 3 (we show in \Cref{lem:valid} that it is a valid partial allocation).
For any share~$P$, let $\widehat{P} = \bigcup_{k=1}^m (P \cap e_k^1)$. 
The final allocation (i.e., after Phase~4) shall be denoted $\mathcal{A} = (A_1, \ldots, A_n)$. 
We establish the approximate envy-freeness of $\mathcal{A}$ via a series of intermediate results.

\begin{lem} \label{lem:valid}
$\mathcal{P}$ is a valid partial allocation, and is equal to $\mathcal{P}^t$ for some $t \leq \frac{16n^2m}{\epsilon}$.
\end{lem}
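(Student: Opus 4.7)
The statement has two parts — validity of $\mathcal{P}$ as a partial allocation, and an upper bound on the number of Phase~2 iterations — and the plan is to handle them separately. Validity is naturally proved by induction on the iteration count $t$; the iteration bound is a potential-function argument tracking the total utility assigned across all agents.

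For validity, the base case $\mathcal{P}^0$ is the empty partial allocation. In the inductive step, only $P_{i^*}^{t-1}$ is modified in a single iteration. In Phase~2a, the new share is a single interval $[a,z]$ (or $[z,b]$) sitting inside an unallocated interval $I \subseteq e_k^1$, so it is connected and disjoint from the other shares. In Phase~2b, the new share is $\bigcup_{k \in K} e_k$, which is connected through the center vertex $v$; the subtle point is showing that the entire edges $e_k$ (and in particular the stubs $e_k^2$) are unallocated in $\mathcal{P}^{t-1}$. For this I would maintain, alongside the induction, the invariant that a stub $e_k^2$ is allocated if and only if the entire segment $e_k^1$ is. This invariant is preserved because Phase~2a only ever allocates or releases subintervals of some $e_j^1$, leaving stubs untouched, while Phase~2b acquires and (when later relinquished) releases $e_k^1$ and $e_k^2$ in tandem. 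Consequently, if $k \in K_0$ — meaning $e_k^1$ is entirely unallocated — then $e_k^2$ is also unallocated, so the new share of $i^*$ is genuinely disjoint from the surviving shares.

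For the iteration bound, I would use the potential $\Phi^t := \sum_{i \in N} \val[i]{P_i^t}$, which is at most $n$ by normalization and equals $0$ at $t = 0$. It suffices to show $\Phi^t \geq \Phi^{t-1} + \epsilon'$. In Phase~2a, the cut point $z$ is chosen so that $\val[i^*]{P_{i^*}^t} = \val[i^*]{P_{i^*}^{t-1}} + \epsilon'$ exactly. In Phase~2b, the stopping condition yields $\val[i^*]{\bigcup_{k \in K} e_k^1} \geq \val[i^*]{P_{i^*}^{t-1}} + \epsilon'$, and since $P_{i^*}^t = \bigcup_{k \in K} e_k \supseteq \bigcup_{k \in K} e_k^1$, monotonicity of $\val[i^*]{\cdot}$ gives the same lower bound for $\val[i^*]{P_{i^*}^t}$. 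No other agent's share is touched in the iteration, so the potential rises by at least $\epsilon'$. Therefore the number of Phase~2 iterations is at most $n/\epsilon' = 16n^2m/\epsilon$, and $\mathcal{P}$ is exactly $\mathcal{P}^T$ for this terminal $T$.

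The only real obstacle is the stub-coupling observation in Phase~2b, which is what underwrites both the connectivity and disjointness of the new share $\bigcup_{k \in K} e_k$; once this invariant is in hand, the rest is routine bookkeeping since each iteration is a single-agent trade-up by a prescribed minimum amount.
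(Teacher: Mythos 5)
Your proposal is correct and follows essentially the same route as the paper: induction on the iteration count for validity (with the key observation that the stubs $e_k^2$ can only ever be allocated together with the whole edge $e_k$ via Phase~2b, so $k \in K_0$ forces $e_k^2$ to be unallocated), plus counting $\epsilon'$-increments to get the bound $n/\epsilon' = 16n^2m/\epsilon$ --- your potential $\Phi^t$ is just the paper's per-agent increment count summed over agents. One small caveat: the invariant you phrase as an ``if and only if'' fails in the direction ``$e_k^1$ entirely allocated $\Rightarrow$ $e_k^2$ allocated'' (Phase~2a can exhaust $e_k^1$ without ever touching the stub), but only the converse direction is needed, and that is exactly what your preservation argument --- and the paper's one-line justification --- actually establishes.
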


\begin{proof}
To show that $\mathcal{P}$ is a valid partial allocation, we prove by induction that $\mathcal{P}^t$ is a valid partial allocation for every~$t$. 
In particular, we check that each $P_i^t$ is connected, and the agents' shares in $\mathcal{P}^t$ are pairwise disjoint. 
Clearly, the empty partial allocation, $\mathcal{P}^0$, is a valid partial allocation. 
Now, assume that $\mathcal{P}^t$ is a valid partial allocation; we will prove the validity of $\mathcal{P}^{t+1}$. 
At the $(t+1)^\text{th}$ iteration of Phase~2, some agent $i^*$ trades her share in either Phase 2a or 2b, while all other agents' shares remain unchanged, so we only need to check that the share of agent~$i^*$ is connected and disjoint from other agents' shares.
\begin{itemize}
    \item If agent $i^*$ trades her share in Phase 2a, then she receives a connected subinterval of some $e_k^1$; furthermore, this subinterval is disjoint from other agents' shares since it is a subset of some unallocated interval $I = [a, b]$.
    \item If agent $i^*$ trades her share in Phase 2b, then she receives a collection of edges $e_k$ which are connected by the vertex $v$; furthermore, for any $k \in \{1,\dots, m\}$, if $e_k^1$ is unallocated, then $e_k^2$ is unallocated as well, so the $e_k$'s received by agent~$i^*$ are not allocated to any other agent.
\end{itemize}
Hence, $\mathcal{P}^{t+1}$ is a valid partial allocation. This completes the induction.

In each iteration of Phase 2, some agent increases the value of her share by at least~$\epsilon'$. 
Since the value of each agent's share starts from~$0$ and cannot exceed~$1$, the total number of increments is at most $1/\epsilon'$ for each agent. 
As there are $n$ agents, the total number of iterations of Phase 2 is at most $n/\epsilon' = 16n^2m/\epsilon$.
\end{proof}

\begin{lem} \label{lem:envy}
Fix any $i \in N$.
\begin{itemize}
    \item For any $j \in N_1$, we have $\val[i]{P_j} \leq \val[i]{P_i} + \epsilon'$.
    \item For any $j \in N_2$, we have $\val[i]{\widehat{P}_j} \leq 2(\val[i]{P_i} + \epsilon')$.
\end{itemize}
\end{lem}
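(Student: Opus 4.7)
The plan is to track each agent's share value through Phase~2 and exploit the minimality built into Phases 2a and 2b. The key preliminary observation is a monotonicity property: whenever an agent $i$ trades in Phase~2, her new share is worth exactly $\val[i]{P_i^{\text{old}}} + \epsilon'$ to her, which is strictly more than before; and when some other agent trades, $i$'s share is untouched. Hence $\val[i]{P_i^t}$ is non-decreasing in $t$, so $\val[i]{\tilde{P}_i} \le \val[i]{P_i}$ whenever $\tilde{P}_i$ denotes $i$'s share at any earlier iteration of Phase~2.

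For the first bullet, fix $j \in N_1$ and let $t^*$ be the iteration at which $j$ last traded (this must be a Phase 2a iteration). Write $\tilde{P}_i = P_i^{t^*-1}$ for every $i$, and assume $P_j = [a, z]$ with $a$ closer to $v_k$; the case $P_j = [z, b]$ is symmetric. By the Phase 2a rule, $z$ is the point of $I$ closest to $a$ at which $\val[i^*]{[a, z]} = \val[i^*]{\tilde{P}_{i^*}} + \epsilon'$ holds for some agent $i^*$. By divisibility, $\val[i]{[a, z']}$ depends continuously on $z'$, so if we had $\val[i]{[a, z]} > \val[i]{\tilde{P}_i} + \epsilon'$ for some $i$, the intermediate value theorem would yield a $z' \in [a, z)$ with $\val[i]{[a, z']} = \val[i]{\tilde{P}_i} + \epsilon'$, contradicting the minimality of $z$. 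Therefore $\val[i]{P_j} = \val[i]{[a, z]} \le \val[i]{\tilde{P}_i} + \epsilon' \le \val[i]{P_i} + \epsilon'$.

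For the second bullet, fix $j \in N_2$ and consider the Phase 2b iteration at which $j$ last traded. Let $K$ be the index set selected there, and let $k^*$ be the last index added, so $K = K' \cup \{k^*\}$ (with $K'$ possibly empty). By the Phase 2b stopping rule, just before adding $k^*$ we had $\val[i]{\bigcup_{k \in K'} e_k^1} < \val[i]{\tilde{P}_i} + \epsilon'$ for every agent $i$. Moreover, Phase 2b is reached only when the Phase 2a trigger fails; since $k^* \in K_0$ means the segment $e_{k^*}^1$ was entirely unallocated at that iteration, and since $e_{k^*}^1$ is an unallocated subinterval of itself, we obtain $\val[i]{e_{k^*}^1} < \val[i]{\tilde{P}_i} + \epsilon'$ for every $i$ as well. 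Adding the two inequalities gives $\val[i]{\widehat{P}_j} = \val[i]{\bigcup_{k \in K'} e_k^1} + \val[i]{e_{k^*}^1} < 2(\val[i]{\tilde{P}_i} + \epsilon') \le 2(\val[i]{P_i} + \epsilon')$.

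The main subtlety I expect is the minimality/continuity step in the first bullet: one must be careful that the two orientations $[a, z]$ and $[z, b]$ are handled symmetrically, that points have valuation mass zero so that equalities and strict inequalities line up cleanly, and that the trivial case $j = i$ does not break the bookkeeping. All of these concerns should be discharged by a direct appeal to the divisibility of the valuation functions.
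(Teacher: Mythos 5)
Your proof is correct and in substance the same as the paper's: the paper establishes the two inequalities by induction over the iterations of Phase~2, and its two induction cases are exactly your ingredients---monotonicity of $\val[i]{P_i^t}$ when an agent other than $i$ trades, and, when $j$ trades, the minimality of the Phase~2a cut point (respectively, the one-by-one construction of $K$ in Phase~2b combined with the failed Phase~2a trigger bounding the last added segment). Your ``last trade of $j$ plus monotonicity'' formulation is just an unrolled version of that induction, so the two arguments coincide.
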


\begin{proof}
We prove that the statements are true for each $\mathcal{P}^t$ by induction on~$t$. 
Let $N_1^t$ and $N_2^t$ be the sets of agents whose share in $\mathcal{P}^t$ was last obtained via Phase~2a and Phase~2b, respectively. 
Note that $N_1^0 = N_2^0 = \emptyset$, and if agent $i^*$ obtains her share via Phase~2a in the $t^\text{th}$ iteration of Phase~2, then $N_1^t = N_1^{t-1} \cup \{i^* \}$ and $N_2^t = N_2^{t-1} \setminus \{i^* \}$; an analogous statement holds if $i^*$ obtains her share via Phase~2b. 
For the induction, we need to prove the following two statements.
\begin{itemize}
    \item For any $j \in N_1^t$, we have $\val[i]{P_j^t} \leq \val[i]{P_i^t} + \epsilon'$.
    \item For any $j \in N_2^t$, we have $\val[i]{\widehat{P}_j^t} \leq 2(\val[i]{P_i^t} + \epsilon')$.
\end{itemize}
The statements are clearly true for the empty partial allocation $\mathcal{P}^0$, as all shares have zero value.
Now, assume that the statements are true for $\mathcal{P}^t$; we shall prove the same for $\mathcal{P}^{t+1}$. 
Only the share of one agent~$i^*$ has changed, so we can focus on the case where either $i = i^*$ or $j = i^*$.
Since the statements trivially hold for $i = j$, we may assume that $i \neq j$.
\begin{itemize}
    \item If $i = i^*$, then we have $P_j^{t+1} = P_j^t$ and $\val[i]{P_i^{t+1}} \geq \val[i]{P_i^t}$, so both statements hold for $t + 1$.
    \item If $j = i^*$, then agent $j$ trades her share in either Phase 2a or Phase 2b in the $(t + 1)^\text{th}$ iteration of Phase~2.
    \begin{itemize}
        \item If $j \in N_1^{t+1}$, then $j$ trades in Phase~2a.
        By our procedure in Phase~2a, agent~$j$'s share is not worth more than $\val[i]{P_i^t} + \epsilon' = \val[i]{P_i^{t+1}} + \epsilon'$ to agent~$i$; otherwise agent $i$ would have gotten a strict subinterval of $P_j^{t+1}$ instead.
        Therefore, the first statement holds for $t+1$.
        \item If $j \in N_2^{t+1}$, then $j$ trades in Phase~2b. 
        Recall that the set~$K$ in Phase~2b was formed by adding indices from $K_0$ one by one. 
        Let $K_1$ be the subset of~$K$ without the last index added. 
        Then each of $\{ e_k^1 \mid k \in K_1\}$ and $\{ e_k^1 \mid k \in K \setminus K_1 \}$ is worth less than $\val[i]{P_i^t} + \epsilon' = \val[i]{P_i^{t+1}} + \epsilon'$ to agent $i$, so their union, $\widehat{P}_j^{t+1}$, is worth less than $2(\val[i]{P_i^{t+1}} + \epsilon')$ to agent $i$.
        Therefore, the second statement holds for $t+1$.
    \end{itemize}
\end{itemize}
This completes the induction. By \Cref{lem:valid}, $\mathcal{P} = \mathcal{P}^t$ for some $t$, and so the statements in \Cref{lem:envy} hold.
\end{proof}

\begin{lem} \label{lem:positive}
Every agent receives a share in $\mathcal{P}$ worth at least $\frac{1}{4nm}$ to her.
\end{lem}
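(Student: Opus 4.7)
The plan is to fix an arbitrary agent $i \in N$, set $V_i := \val[i]{P_i}$, and derive a lower bound on $V_i$ by upper-bounding the total value $\val[i]{G} = 1$ as a sum of contributions from the individual pieces of the (partial) allocation and the unallocated region, each controlled in terms of $V_i + \epsilon'$.

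First I would collect bounds on the various contributions. By \Cref{lem:envy}, every $j \in N_1$ satisfies $\val[i]{P_j} \leq V_i + \epsilon'$, and every $j \in N_2$ satisfies $\val[i]{\widehat{P}_j} \leq 2(V_i + \epsilon')$; agents outside $N_1 \cup N_2$ (if any) have empty shares and contribute nothing. The Phase~1 construction ensures $\val[i]{E^2} \leq \epsilon'$, which caps in one shot the $E^2$-part of every $N_2$-share together with the $E^2$-part of the unallocated region. For the remaining unallocated portion inside $E^1$, the termination conditions of Phase~2 guarantee that (i)~the union of entirely unallocated segments $e_k^1$ has value less than $V_i + \epsilon'$ to $i$ (otherwise Phase~2b would trigger), and (ii)~every maximal unallocated subinterval inside a partially allocated $e_k^1$ has value less than $V_i + \epsilon'$ to $i$ (otherwise Phase~2a would trigger).

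The main obstacle is bounding the number of gaps described in (ii). I would use the elementary fact that an $e_k^1$ containing $n_k$ disjoint allocated subintervals admits at most $n_k + 1$ maximal unallocated gaps, and sum this bound over the at most $m$ partially allocated edges, using $\sum_k n_k \leq |N_1|$ (each $N_1$-agent contributes exactly one subinterval to exactly one $e_k^1$). This produces at most $|N_1| + m \leq n + m$ gaps in total. Assembling all of the bounds then yields
\[
1 \;=\; \val[i]{G} \;\leq\; (2|N_1| + 2|N_2| + m + 1)(V_i + \epsilon') + \epsilon' \;\leq\; (2n + m + 1)(V_i + \epsilon') + \epsilon'.
\]

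To conclude, I would suppose for contradiction that $V_i < 1/(4nm)$, substitute into the displayed inequality, and verify that the right-hand side is strictly less than $1$ for $n, m \geq 2$ and $\epsilon \in (0,1)$ using $\epsilon' = \epsilon/(16nm)$ (the dominant term $(2n+m+1)/(4nm)$ is bounded by $\tfrac{1}{2m} + \tfrac{1}{4n} + \tfrac{1}{4nm} \leq \tfrac{7}{16}$, while the $\epsilon'$-terms are easily absorbed). The resulting contradiction forces $V_i \geq 1/(4nm)$, proving the lemma.
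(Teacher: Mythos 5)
Your proposal is correct and follows essentially the same route as the paper's proof: both decompose agent $i$'s total value $1$ into the allocated portions of $E^1$ (bounded via \Cref{lem:envy}), the unallocated portions of $E^1$ (bounded via the Phase~2 termination conditions together with the ``at most $n_k+1$ gaps per segment'' count, giving at most $n+m$ intervals), and $E^2$ (bounded by $\epsilon'$), and then derive a contradiction from $\val[i]{P_i} < \frac{1}{4nm}$. The only differences are minor bookkeeping choices---you bound $N_1$-shares by $V_i+\epsilon'$ rather than $2(V_i+\epsilon')$ and absorb the fully unallocated segments via the Phase~2b condition---and your final arithmetic checks out.
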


\begin{proof}
Suppose by way of contradiction that some agent $i$ receives a share worth less than $\frac{1}{4nm}$ to her. Since $\epsilon < 1$ and $\epsilon' = \frac{\epsilon}{16nm}$, we have $\val[i]{P_i} + \epsilon' < \frac{1}{4nm} + \epsilon' < \frac{5}{16nm}$. Every part of the graph $G$ can be classified into one of the following three cases.
\begin{itemize}
    \item \textbf{Case 1: Within $E^1$ and within an agent's share.} \\
    By \Cref{lem:envy}, for each~$j$, the value of $\widehat{P}_j$ is at most $2(\val[i]{P_i} + \epsilon') < \frac{5}{8nm}$ to agent~$i$. (If $j\not\in N_1\cup N_2$, then $P_j$ is empty.)
    As there are $n$ agents, the union of these $\widehat{P}_j$'s is worth at most $n\cdot \left( \frac{5}{8nm} \right) < \frac{5}{8}$ to agent $i$.
    \item \textbf{Case 2: Within an unallocated subinterval of some segment in $E^1$.} \\
    For each $k \in \{1,\dots, m\}$, if $n_k$ agents are allocated some subinterval of segment $e_k^1 \in E^1$, then there are at most $n_k + 1$ unallocated subintervals on the same segment.
    Therefore, altogether there are at most $n + m$ unallocated subintervals within all the segments in $E^1$. 
    Since Phase~2 terminated, each of these subintervals is worth less than $\val[i]{P_i} + \epsilon' < \frac{5}{16nm}$ to agent $i$. 
    Therefore, the total value of these subintervals to agent~$i$ is less than $(n + m) \left(\frac{5}{16nm}\right) \leq \frac{5}{16}$, where the inequality holds because $n,m\ge 2$.
    \item \textbf{Case 3: Within $E^2$.} \\
    By definition, $E^2$ is worth at most $\epsilon' < \frac{1}{16}$ to agent~$i$.
\end{itemize}
The whole cake is thus worth less than $\frac{5}{8} + \frac{5}{16} + \frac{1}{16} = 1$ to agent~$i$, which is a contradiction.
\end{proof}

\Cref{lem:positive} implies that no agent receives an empty share in $\mathcal{P}$, that is, $N_1\cup N_2 = N$.
With this lemma in hand, we are now ready to prove \Cref{thm:nonidenvalue-star}.

\begin{proof}[Proof of \Cref{thm:nonidenvalue-star}]
Without loss of generality, we may assume that $\epsilon \in (0, 1)$. 
The running time claim holds because each iteration of each phase runs in time polynomial in $n$ and $m$, and the number of iterations of Phase 2 is polynomial in $n$, $m$, and $1/\epsilon$ by \Cref{lem:valid}.

Fix any $i, j \in N$; we shall first show that $A_j$ is worth at most $3\val[i]{P_i} + 4\epsilon'$ to agent~$i$. 
To this end, we consider three cases for $j$.
Recall the definition of $H$ from Phase~4 of the algorithm.
\begin{itemize}
    \item \textbf{Case 1: $j \in N_1$ and $A_j$ does not contain $H$.} \\
    By \Cref{lem:envy}, we have $\val[i]{P_j} \leq \val[i]{P_i} + \epsilon'$.
    Note that at the start of Phase 3, every unallocated subinterval of any $e_k^1$ is worth less than $\val[i]{P_i} + \epsilon'$ to agent $i$; otherwise Phase 2 would have continued. 
    Since agent~$j$ is allocated at most two such subintervals, we have $\val[i]{A_j} \leq 3(\val[i]{P_i} + \epsilon') \leq 3\val[i]{P_i} + 4\epsilon'$. \smallskip
    
    \item \textbf{Case 2: $j \in N_1$ and $A_j$ contains $H$.} \\
    Since $j \in N_1$, $P_j$ is a subinterval of some $e_k^1$. 
    Let $P_j = [a, b]$, where $a$ is closer to~$v_k$ than $b$ is. 
    We claim that in this case, unlike in Case~1, agent $j$ receives at most one unallocated subinterval of $e_k^1$ in Phase 3.
    Note that since $A_j$ contains $H$, no agent held a subinterval of $[b,x_k]$ at the start of Phase~3.
    \begin{itemize}
        \item If $e_k^1$ is allocated to at least two agents during Phase~3, then some other agent held a subinterval of $[v_k,a]$ at the start of Phase~3---let such a subinterval closest to $a$ be $[y,z]$, where $v_k\le y < z\le a$.
        The unallocated interval $[z, a]$ at the start of Phase~3 (if it is nonempty) is appended to the share of the agent who held $[y, z]$ at the start of Phase 3. 
        Thus, agent~$j$ receives only the unallocated subinterval $[b, x_k]$ of $e_k^1$ in Phase~3.
        \item If $e_k^1$ is allocated only to agent~$j$ during Phase 3, then since $A_j$ contains $H$, by the description of how $H$ is allocated in Phase~4, agent~$j$ was the last agent who traded her share in Phase~2, in particular, Phase~2a.
        This means that either $e_k^1$ was entirely unallocated just before agent~$j$ received a share from it, or it was allocated only to $j$ at that point.
        Hence, $j$'s share is of the form $[a, b] = [v_k, b]$ or $[a, b] = [a, x_k]$.
        It follows that $j$ receives only one unallocated subinterval $[b, x_k]$ or $[v_k, a]$ during Phase~3.
    \end{itemize}
    In total, $A_j$ consists of $P_j$ (worth at most $\val[i]{P_i} + \epsilon'$ to agent~$i$, by \Cref{lem:envy}), at most one unallocated subinterval of some $e_k^1$ (worth at most $\val[i]{P_i} + \epsilon'$ to agent~$i$, as in Case 1), and $H$. 
    Now, $H$ is a union of the unallocated segments in $E^1$ and a subset of $E^2$. The unallocated segments in $E^1$ together are worth less than $\val[i]{P_i} + \epsilon'$ to agent $i$---otherwise Phase~2 (in particular, Phase~2b) would have continued---and any subset of $E^2$ is worth at most $\epsilon'$ to agent $i$, so $H$ is worth at most $\val[i]{P_i} + 2\epsilon'$ to agent $i$.
    As a consequence, $A_j$ is worth at most $3\val[i]{P_i} + 4\epsilon'$ to agent $i$. \smallskip
    
    \item \textbf{Case 3: $j \in N_2$.} \\
    By \Cref{lem:envy}, we have $\val[i]{\widehat{P}_j} \leq 2(\val[i]{P_i} + \epsilon')$. 
    The remaining portion $A_j \setminus \widehat{P}_j$ is a subset of the union of the unallocated segments in $E^1$ and the segments of $E^2$, which is worth at most $\val[i]{P_i} + 2\epsilon'$ to agent $i$ as detailed in the last paragraph of Case~2.
    This gives $\val[i]{A_j} \leq 3\val[i]{P_i} + 4\epsilon'$.
\end{itemize}
In summary, we have $\val[i]{A_j} \leq 3\val[i]{P_i} + 4\epsilon'$ in all cases.
Now, $\epsilon' = \frac{\epsilon}{16nm}$ by definition and $\val[i]{P_i} \geq \frac{1}{4nm}$ by \Cref{lem:positive}, which implies that $4\epsilon' \leq \epsilon \val[i]{P_i}$. 
It follows that 
\begin{align*}
\val[i]{A_j} &\leq 3\val[i]{P_i} + 4\epsilon' 
\leq 3\val[i]{P_i} + \epsilon \val[i]{P_i} 
= (3 + \epsilon)\val[i]{P_i} 
\leq (3 + \epsilon)\val[i]{A_i}.
\end{align*}
Since $i, j \in N$ were arbitrarily selected, the allocation~$\mathcal{A}$ is $(3 + \epsilon)$-EF, as desired.
\end{proof}

\section{Identical Valuations} \label{sec:idenvalue}
In this section, we focus on the case where the valuation functions of all agents are identical. 
While this case is uninteresting for interval cake cutting since a fully envy-free allocation can be trivially found, it becomes highly nontrivial when graphs are involved (see, for example, the left of \Cref{fig:star}). 
Indeed, a number of works on dividing edges or vertices of a graph can be interpreted as dealing with the identical-valuation setting \citep{WuWaKu07,ChuWuWa10,CaragiannisMiSh22}.
Moreover, this setting captures scenarios where there is an objective measure across agents, for example, when a town wants to divide the responsibility of maintaining its streets among contractors based on the lengths or numbers of residents on the streets.

As we mentioned in \Cref{sec:prelim}, an $\alpha$-EF allocation is not guaranteed to exist for any $\alpha < 2$, even with identical valuations and star graphs. 
We will show in this section that, for arbitrary graphs and any $\epsilon > 0$, it is possible to find an allocation that is $(2+\epsilon)$-EF, which means that the approximation factor of~$2$ is essentially tight.

To this end, we first discuss how we can find a $4$-EF allocation using a variation of the \textsc{IterativeDivide} algorithm that we saw in \Cref{sec:nonidenvalue}. 
This $4$-EF allocation will later be used as an input to an algorithm that computes a $(2 + \epsilon)$-EF allocation.
For star graphs, we also describe a simpler method for computing a $2$-EF allocation. 

Let us denote by~$\mu$ the common valuation function of the agents, and define $\mmax{\mathcal{A}} = \max_{i\in N} \val{A_i}$ and $\mmin{\mathcal{A}} = \min_{i\in N} \val{A_i}$ for any allocation $\mathcal{A} = (A_1, \ldots, A_n)$. 

\subsection{\texorpdfstring{$4$}{4}-EF}
In \textsc{IterativeDivide}, we used the threshold $\beta = 1/4$ in every call to \textsc{Divide} so as to allocate a share worth at least $1/4$ to some agent, which results in a $1/2$-additive-EF allocation.
Even with identical valuations, each iteration of \textsc{Divide} is unpredictable in the sense that the recipient could receive a share worth anywhere between $\beta$ and $2\beta$.
If $\beta$ is chosen to be more than $1/(2n-2)$ and the first $n-1$ agents all take shares of value close to $2\beta$, then the last agent will be left effectively empty-handed. 
In contrast, if $\beta$ is chosen to be at most $1/(2n-2)$ and the first $n-1$ agents all take shares of value only~$\beta$, then the last agent will receive a share of value at least $1/2$, which leads to an envy factor linear in $n$.

To resolve this problem, let us consider using an \emph{adaptive} threshold that takes the values of the previous shares into account. 
If the previous agents took large shares, then the threshold $\beta$ is reduced appropriately for the current agent, and vice versa. 
Without loss of generality, assume that for each $i\in\{1,\dots,n-1\}$, agent $i$ is the one who takes the first share generated by the $i^\text{th}$ iteration of \textsc{Divide}. 
By choosing $\beta_i = \frac{1}{2} (\frac{2i}{2n-1} - \sum_{j=1}^{i-1} \val{A_j})$ to be the threshold for the $i^\text{th}$ iteration of \textsc{Divide}, we claim that the resulting allocation is $4$-EF.\footnote{Even better, the allocation is actually $(4 - 1/2^{n-3})$-EF, as shown in the proof of \Cref{thm:idenvalue_general-4ef}.} 
Along the way, we shall see that the allocation is also $(2 - 1/n)$-proportional. 

\begin{thm} \label{thm:idenvalue_general-4ef}
Given an instance of graphical cake cutting consisting of $n$ agents with identical valuations, there exists an algorithm that computes a $4$-EF and $(2 - \frac{1}{n})$-proportional allocation in time polynomial in $n$ and the size of $G$.
\end{thm}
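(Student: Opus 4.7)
The plan is to run a variant of \textsc{IterativeDivide} where the line $\beta_i \leftarrow 1/4$ is replaced by the adaptive choice
\[
\beta_i \;=\; \tfrac{1}{2}\Bigl(\tfrac{2i}{2n-1} \;-\; \sum_{j<i}\mu(A_j)\Bigr).
\]
Relabel agents so that agent $i$ is the one selected at iteration $i$ (for $i < n$), and write $s_i = \mu(A_i)$ and $T_i = \sum_{j\le i} s_j$. Since valuations are identical, \Cref{lem:divide} guarantees that whenever \textsc{Divide} is invoked with threshold $\beta_i$, the first share returned satisfies $s_i \in [\beta_i, 2\beta_i)$.

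First I would verify by induction on $i$ that $T_{i-1} < \tfrac{2(i-1)}{2n-1}$, which is immediate since $s_j < 2\beta_j$ gives $T_j = T_{j-1}+s_j < T_{j-1} + \tfrac{2j}{2n-1} - T_{j-1} = \tfrac{2j}{2n-1}$. This has two consequences. First, $\beta_i = \tfrac{1}{2}\bigl(\tfrac{2i}{2n-1} - T_{i-1}\bigr) > \tfrac{1}{2n-1}$, so each of agents $1,\dots,n-1$ receives $s_i \ge \beta_i > \tfrac{1}{2n-1}$. Second, the inequality $\beta_i \le 1 - T_{i-1} = \mu(H_n)$ reduces to $i \le n$, so the if-branch of \textsc{IterativeDivide} is always taken and no agent is handed the empty share.

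The main estimate is a matching lower bound on $T_i$. Using $s_i \ge \beta_i$, one obtains the averaging recurrence $T_i \ge \tfrac{T_{i-1}}{2} + \tfrac{i}{2n-1}$, and a clean induction starting from $T_0 = 0$ yields
\[
T_i \;\ge\; \tfrac{2(i-1)}{2n-1} \qquad \text{for all } i\ge 0.
\]
From this the conclusion is immediate. For $i < n$,
\[
s_i \;<\; 2\beta_i \;=\; \tfrac{2i}{2n-1} - T_{i-1} \;\le\; \tfrac{4}{2n-1},
\]
while $\mu(A_n) = 1 - T_{n-1} \le 1 - \tfrac{2(n-2)}{2n-1} = \tfrac{3}{2n-1}$. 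Combined with $\mu(A_i) > \tfrac{1}{2n-1}$ for every $i$ (where $\mu(A_n) > \tfrac{1}{2n-1}$ uses the upper bound $T_{n-1} < \tfrac{2(n-1)}{2n-1}$), we get $\mu(A_j)/\mu(A_i) \le 4$ for every pair, i.e., $4$-EF, and $\mu(A_i) \ge \tfrac{1}{2n-1} = \tfrac{1}{(2-1/n)\,n}$, i.e., $(2-\tfrac{1}{n})$-proportional.

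The main obstacle is finding the right recurrence. A direct summation $T_{i-1} \ge (i-1)\cdot\tfrac{1}{2n-1}$ is too weak to push $2\beta_i = \tfrac{2i}{2n-1} - T_{i-1}$ below $\tfrac{4}{2n-1}$ once $i \ge 3$, so one must iterate the self-improving estimate that threads the lower bound $s_j \ge \beta_j$ into the definition of $\beta_{j+1}$. Polynomial running time then follows because each of the $n-1$ iterations of the main loop makes a single call to \textsc{Divide}, which by \Cref{lem:divide} runs in time polynomial in $n$ and the size of $G$.
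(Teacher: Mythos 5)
Your proposal is correct and takes essentially the same approach as the paper: the same adaptive threshold $\beta_i = \frac{1}{2}\bigl(\frac{2i}{2n-1} - \sum_{j<i}\mu(A_j)\bigr)$ in \textsc{IterativeDivide}, the same validity check for the \textsc{Divide} call, and the same induction on the cumulative values $T_i$. The only difference is that your invariant $T_i \ge \frac{2(i-1)}{2n-1}$ drops the paper's extra $2^{-(i-1)}\xi$ correction term, which the paper uses only to obtain the marginally sharper $(4 - 2^{-(n-3)})$-EF bound; for the stated $4$-EF claim your simpler invariant suffices.
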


\begin{proof}
We consider the algorithm \textsc{IterativeDivide} (\Cref{alg:iterativedivide})---without loss of generality, assume that the shares are allocated to the agents in ascending order of indices, i.e., $i^* = i$ for all $i \in \{ 1, \ldots, n-1 \}$ in the algorithm---and substitute $\beta_i$ with $\frac{1}{2} (\frac{2i}{2n-1} - \sum_{j=1}^{i-1} \val{A_j})$ (instead of $1/4$). 
We claim that this algorithm satisfies the condition of the theorem.
It is clear that the running time is polynomial in $n$ and the size of $G$, so it remains to check that the envy-freeness and proportionality claims are valid.
Let $(A_1, \ldots, A_n) = \textsc{IterativeDivide}(G, N)$, and let $\xi = \frac{1}{2n - 1}$. 
By induction, we shall prove the following statements for $i \in \{ 1, \ldots, n-1 \}$:
\begin{itemize}
    \item[(i)] $\; \displaystyle \xi \leq \val{A_i} < (4 - 2^{-(i-2)})\xi$;
    \item[(ii)] $\; \displaystyle (2i -2 + 2^{-(i-1)})\xi \leq \sum_{j=1}^{i} \val{A_j} < 2i\xi$.
\end{itemize}

For the base case $i = 1$, we have $\beta_1 = \xi \leq 1 = \val{G}$, which by \Cref{lem:divide} means that $\xi \leq \val{A_1} < 2\xi$, proving the two statements together.

For the inductive step, fix $i \in \{2, \ldots, n - 1\}$, and assume that the two statements hold for $i - 1$. 
We first check that the \textbf{if}-condition in Line~\ref{ln:if} of \textsc{IterativeDivide} is satisfied. 
To this end, we have to check that $\beta_i$ is positive and is at most the value of the remaining graph $1 - \sum_{j=1}^{i-1} \val{A_j}$. It follows from (ii) for $i - 1$ that $\sum_{j=1}^{i-1} \val{A_j} < 2(i-1)\xi$, so the former claim is true by $\beta_i = \frac{1}{2} (2i\xi - \sum_{j=1}^{i-1} \val{A_j}) > \xi > 0$ and the latter claim is true by $2\beta_i + \sum_{j=1}^{i-1} \val{A_j} = 2i\xi < 1$. The call to \textsc{Divide} in the following line is hence valid.
Therefore, by \Cref{lem:divide}, we have $\beta_i \leq \val{A_i} < 2\beta_i$.
We now prove the two inductive statements for~$i$.
\begin{itemize}
    \item By definition, $2\beta_i = 2i\xi - \sum_{j=1}^{i-1} \val{A_j}$, and by the inductive hypothesis, 
    \[
    (2(i-1)-2 + 2^{-(i-2)})\xi \leq \sum_{j=1}^{i-1} \val{A_j} < 2(i-1)\xi.
    \]
    Putting these together gives $2\xi < 2\beta_i \leq (4 - 2^{-(i-2)})\xi$. Thus, 
    \[
        \xi < \beta_i \leq \val{A_i} < 2\beta_i \leq (4 - 2^{-(i-2)})\xi,
    \]
    which proves (i).
    \item Now, subtracting $\beta_i \leq \val{A_i} < 2\beta_i$ from $2\beta_i$ gives $0 < 2\beta_i - \val{A_i} \leq \beta_i$. Then,
    \begin{align*}
        2i\xi - \sum_{j=1}^i \val{A_j} &= \left( 2i\xi - \sum_{j=1}^{i-1} \val{A_j} \right) - \val{A_i} 
        = 2\beta_i - \val{A_i};
    \end{align*}
    combining this with the previous inequality yields $0 < 2i\xi - \sum_{j=1}^i \val{A_j} \leq \beta_i$. Subtracting this from $2i\xi$ gives $2i\xi - \beta_i \leq \sum_{j=1}^i \val{A_j} < 2i\xi$. 
    Finally, combining this with the statement from the previous bullet point that $2\beta_i \leq (4 - 2^{-(i-2)})\xi$, that is, $\beta_i \leq (2 - 2^{-(i-1)})\xi$, we get 
    \[
        2i\xi - (2 - 2^{-(i-1)})\xi \leq \sum_{j=1}^i \val{A_j} < 2i\xi,
    \]
    which proves (ii).
\end{itemize}
This completes the induction.

By (ii) for $i = n-1$, we have
\[
(2(n-1)-2 + 2^{-(n-2)})\xi \leq \sum_{j=1}^{n-1} \val{A_j} < 2(n-1)\xi.
\]
Combining this with 
\[
\val{A_n} = 1 - \sum_{j=1}^{n-1} \val{A_j} =  (2n-1)\xi - \sum_{j=1}^{n-1} \val{A_j},
\]
we get $\xi < \val{A_n} \leq (3 - 2^{-(n-2)})\xi$. 
Together with (i), we see that the minimum value across all $\val{A_i}$'s is at least~$\xi$ and the maximum value is at most $(4 - 2^{-(n-3)})\xi$. This shows that the allocation is $(4 - 2^{-(n-3)})$-EF, which is also $4$-EF. 
Additionally, we have $\val{A_i} \geq \xi = \frac{1}{2n-1} = \frac{1}{(2-1/n)n}$ for all $i$, and so the allocation is $(2 - \frac{1}{n})$-proportional.
\end{proof}

As mentioned in \Cref{sec:prelim}, an $\alpha$-proportional allocation is not guaranteed to exist for any $\alpha < 2 - 1/n$, so the algorithm in \Cref{thm:idenvalue_general-4ef} attains the optimal proportionality approximation. 
In terms of envy-freeness, one may consider adjusting the values of $\beta_i$ in \textsc{IterativeDivide} so as to obtain an allocation with a better approximation factor than~$4$. 
While this might be possible, it seems unlikely that this approach could lead to $2$-EF, given that the guarantee from \Cref{lem:divide} already has a multiplicative gap of~$2$.
This motivates us to devise another algorithm that reduces the factor to arbitrarily close to~$2$.

\subsection{\texorpdfstring{$(2 + \epsilon)$}{(2 + eps)}-EF}
Let us first define a \emph{minimum-maximum path} of an allocation $\mathcal{A} = (A_1, \ldots, A_n)$ as a list $\mathcal{P} = (P_1, \ldots, P_d)$ (where $d\le n$) satisfying the following conditions:
\begin{itemize}
    \item For each $i \in \{1,\dots, d\}$, $P_i = A_j$ for some $j \in \{1,\dots, n\}$,
    \item $P_i \neq P_j$ for $1 \leq i < j \leq d$,
    \item For each $i \in \{1,\dots, d-1\}$, there exists at least one point belonging to both $P_i$ and $P_{i+1}$, and
    \item $\val{P_1} = \mmin{\mathcal{A}}$ and $\val{P_d} = \mmax{\mathcal{A}}$.
\end{itemize}
Intuitively, a minimum-maximum path is a list of shares that chains a minimum-valued one to a maximum-valued one in the underlying graph without crossing any share more than once.
Such a list can be found in polynomial time: we locate a minimum-valued share and a maximum-valued share of $\mathcal{A}$, find a path through the graph that connects both shares, and identify the shares corresponding to this path. 
If some share $A_i$ appears more than once, we repeatedly remove the part between the two occurrences of $A_i$ (including one of these occurrences). 
Let us use \textsc{MinMaxPath}$(\mathcal{A})$ to denote an arbitrary minimum-maximum path of $\mathcal{A}$.

We now describe the algorithm, \textsc{RecursiveBalance} (\Cref{alg:recursivebalance}), which finds a $(2 + \epsilon)$-EF allocation for any given $\epsilon > 0$. 
We employ similar ideas as the ones used by \citet{ChuWuWa10}---in their work, they partition the (indivisible) edges of a graph, whereas we have to account for the divisibility of the edges. 
Assume without loss of generality that $\epsilon \in (0, 1)$. 
Given an allocation $\mathcal{A}$, \textsc{RecursiveBalance} repeatedly replaces $\mathcal{A}$ with the allocation \textsc{Balance}$(\mathcal{A}, \epsilon)$, then terminates when $\mathcal{A}$ is $(2 + \epsilon)$-EF. 
The algorithm \textsc{Balance} (\Cref{alg:balance}) finds a minimum-maximum path $\mathcal{P}$ of $\mathcal{A}$, and replaces the shares in $\mathcal{A}$ that appear in $\mathcal{P}$ by \textsc{BalancePath}$(\mathcal{P}, \epsilon)$. 
Note that the order of shares in~$\mathcal{A}$ does not affect the fairness properties due to the identical valuation across all agents. 

\begin{algorithm}[t]
    \caption{\textsc{RecursiveBalance}$(\mathcal{A}, \epsilon)$.} \label{alg:recursivebalance}
    \textbf{Input}: Allocation $\mathcal{A} = (A_1, \ldots, A_n)$, $\epsilon \in (0, 1)$. \\
    \textbf{Output}: Allocation $\mathcal{A} = (A_1, \ldots, A_n)$. \\
    \vspace{-3.5mm}
    \begin{algorithmic}[1]
        \WHILE{$\frac{\mmax{\mathcal{A}}}{\mmin{\mathcal{A}}} > 2 + \epsilon$}
            \STATE $\mathcal{A} \leftarrow \textsc{Balance}(\mathcal{A}, \epsilon)$
        \ENDWHILE
        \STATE \textbf{return} $\mathcal{A}$
    \end{algorithmic}
\end{algorithm}

\begin{algorithm}[t]
    \caption{\textsc{Balance}$(\mathcal{A}, \epsilon)$.} \label{alg:balance}
    \textbf{Input}: Allocation $\mathcal{A} = (A_1, \ldots, A_n)$, $\epsilon \in (0, 1)$.\\
    \textbf{Output}: Allocation $\mathcal{A} = (A_1, \ldots, A_n)$.  \\
    \vspace{-3.5mm}
    \begin{algorithmic}[1]
        \STATE $\mathcal{P} \leftarrow \textsc{MinMaxPath}(\mathcal{A})$
        \STATE $\mathcal{A} \leftarrow (\mathcal{A} \setminus \mathcal{P}) \cup \textsc{BalancePath}(\mathcal{P}, \epsilon)$
        \STATE \textbf{return} $\mathcal{A}$
    \end{algorithmic}
\end{algorithm}

\begin{algorithm}[t]
    \caption{\textsc{BalancePath}$(\mathcal{P}, \epsilon)$.}    \label{alg:balancepath}
    \textbf{Input}: List of shares $\mathcal{P} = (P_1, \ldots, P_d)$, $\epsilon \in (0, 1)$. \\
    \textbf{Output}: List of $d$ shares $(P_1, \ldots, P_d)$. \\
    \textbf{Initialization}: $\gamma \leftarrow \val{P_d}$. \\
    \vspace{-3.5mm}
    \begin{algorithmic}[1]
    \FOR{$i = 1, \ldots, d - 1$}
        \IF{$\val{P_i} \geq \frac{\gamma}{2+\epsilon}$} \label{ln:case1}
            \STATE \textbf{return} $(P_1, \ldots, P_d)$ \label{ln:case1return}
        \ENDIF
        \STATE $P^* \leftarrow P_i \cup P_{i+1}$
        \IF{$\val{P^*} < \frac{2\gamma}{2+\epsilon}$} \label{ln:case2}
            \STATE $P_i \leftarrow P^*$
            \STATE $r \leftarrow$ any vertex in $P_d$
            \STATE $(P_{i+1}, P_d) \leftarrow \textsc{Divide}\left(P_d, N, \frac{\gamma}{3}, r\right)$
            \STATE \textbf{return} $(P_1, \ldots, P_d)$ \label{ln:case2return}
        \ENDIF
        \IF {$i = d - 1$}
            \STATE $r \leftarrow$ any vertex in $P_d$
        \ELSE
            \STATE $r \leftarrow$ any vertex in $P_{i+1} \cap P_{i+2}$
        \ENDIF
        \STATE $(P_i, P_{i+1}) \leftarrow \textsc{Divide}\left(P^*, N, \frac{\gamma}{2+\epsilon}, r\right)$ \label{ln:value}
    \ENDFOR
    \STATE \textbf{return} $(P_1, \ldots, P_d)$ \label{ln:case3return}
    \end{algorithmic}
\end{algorithm}

Now, \textsc{BalancePath} (\Cref{alg:balancepath}) does the bulk of the work. 
This algorithm adjusts the shares in $\mathcal{P} = (P_1, \ldots, P_d)$ so that their values meet certain criteria. 
Let $\gamma = \val{P_d}$ and $\widehat{P}_1 = P_1$.\footnote{We reuse $P_i$ for $P_i'$ and $\widehat{P}_i$ in the pseudocode for simplicity; however, we differentiate them in the main text for clarity.} 
For each $i$ from $1$ to $d - 1$, the algorithm does one of the following unless it is terminated prematurely via Case~1 or Case~2.

\begin{itemize}
    \item \textbf{Case 1: The value of $\widehat{P}_i$ is at least $\frac{\gamma}{2+\epsilon}$.} \\
    Set $P_i' = \widehat{P}_i$ and $P_j' = P_j$ for all $j \in \{i + 1,\dots, d\}$. Terminate the algorithm by returning $(P_1', \ldots, P_d')$.
    \item \textbf{Case 2: The value of $\widehat{P}_i$ is less than $\frac{\gamma}{2+\epsilon}$ and the value of $\widehat{P}_i \cup P_{i+1}$ is less than $\frac{2\gamma}{2+\epsilon}$.} \\
    Set $P_i' = \widehat{P}_i \cup P_{i+1}$. 
    Set $(P_{i+1}', P_d')$ to be the output of $\textsc{Divide}(P_d, N, \frac{\gamma}{3}, r)$, where $r$ is any point in~$P_d$; note that this call to \textsc{Divide} is valid because $P_d$ has value $\gamma$. 
    Set $P_j' = P_j$ for all $j \in \{i+2,\dots, d-1\}$. 
    Terminate the algorithm by returning $(P_1', \ldots, P_d')$.
    \item \textbf{Case 3: The value of $\widehat{P}_i$ is less than $\frac{\gamma}{2+\epsilon}$ and the value of $\widehat{P}_i \cup P_{i+1}$ is at least $\frac{2\gamma}{2+\epsilon}$.} \\
    Consider the graph $P^* = \widehat{P}_i \cup P_{i+1}$. Set $(P_i', \widehat{P}_{i+1})$ to be the output of $\textsc{Divide}(P^*, N, \frac{\gamma}{2+\epsilon}, r)$, where $r$ is any point belonging to both $P_{i+1}$ and $P_{i+2}$ (unless $i = d - 1$, in which case $r$ is any point in $P_d$); note that this call to \textsc{Divide} is valid because $P^*$ has value at least $\frac{2\gamma}{2+\epsilon}$.
    The choice of $r$ ensures that, if $i < d-1$, $\widehat{P}_{i+1}$ and $P_{i+2}$ share at least one point.
    Continue with the next~$i$ by incrementing it by $1$.
\end{itemize}
If the algorithm still has not terminated after $i = d - 1$, set $P_d' = \widehat{P}_d$ and return $(P_1', \ldots, P_d')$.

We claim that the algorithm \textsc{RecursiveBalance} terminates in polynomial time if it receives a $4$-EF allocation as input (provided by \Cref{thm:idenvalue_general-4ef}), and upon termination the algorithm returns a $(2 + \epsilon)$-EF allocation.

\begin{thm} \label{thm:idenvalue_general-2pluseps}
Given an instance of graphical cake cutting consisting of $n$ agents with identical valuations, there exists an algorithm that, for any $\epsilon > 0$, computes a $(2+\epsilon)$-EF allocation in time polynomial in $n$, $1/\epsilon$, and the size of $G$.
\end{thm}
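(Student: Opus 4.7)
The plan is to establish (i) correctness of the output of \textsc{RecursiveBalance} upon termination, and (ii) that the outer while-loop halts after polynomially many iterations. The intended overall algorithm is to first compute a $4$-EF allocation via \Cref{thm:idenvalue_general-4ef} and then feed it to \textsc{RecursiveBalance}. Part~(i) is immediate: the loop exits only when $\mmax{\mathcal{A}}/\mmin{\mathcal{A}} \leq 2+\epsilon$, which for identical valuations is exactly the $(2+\epsilon)$-EF condition.

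For per-iteration cost, \textsc{MinMaxPath} is polynomial as noted in the main text, while \textsc{BalancePath} performs a \textbf{for}-loop of length at most $n-1$, each iteration making a constant number of \textsc{Divide} calls (polynomial by \Cref{lem:divide}). Thus each \textsc{Balance} call runs in polynomial time, and the real task is to bound the number of while-loop iterations by a polynomial in $n$ and $1/\epsilon$.

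The approach is a potential-function argument. The key observations about \textsc{BalancePath} are as follows: in Case~3 termination, the share $P_d$ is split at $i = d-1$ into pieces $P_{d-1}'$ of value in $[\gamma/(2+\epsilon), 2\gamma/(2+\epsilon))$ and $P_d' = \widehat{P}_d$ of value strictly less than $\gamma$; in Case~2 termination, $P_d$ is split into pieces of value at most $2\gamma/3 < \gamma$; and in Case~1 termination, $P_d$ is unchanged, but each of the earlier path shares $P_1, \ldots, P_i$ is replaced by a share of value at least $\gamma/(2+\epsilon)$, which strictly exceeds the old $\mmin{\mathcal{A}}$. Off-path shares are unchanged in every case. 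Consequently, Cases~2 and~3 strictly reduce the count of shares attaining the global maximum $\gamma$, while Case~1 strictly reduces the count of shares attaining the global minimum.

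The main obstacle will be combining these two kinds of progress into a single monotone potential that is polynomially bounded. A natural candidate is a lexicographic tuple such as $(\mmax{\mathcal{A}},\, \#\{i : v_i = \mmax{\mathcal{A}}\},\, -\mmin{\mathcal{A}},\, -\#\{i : v_i = \mmin{\mathcal{A}}\})$, but since share values form a continuum, pure counting does not immediately give a polynomial bound. I would complement the counting with a charging scheme: consecutive Case~1 iterations can occur at most $n$ times before the count of minimum-valued shares drops to zero and the minimum strictly jumps upward, whereas each Case~2 or Case~3 iteration either drops $\mmax{\mathcal{A}}$ or reduces the number of shares attaining it by one. Since $\mmax{\mathcal{A}} \geq 1/n$ throughout (from $\sum_i v_i = 1$) and each definitive drop in $\mmax{\mathcal{A}}$ is at least $\Omega(\gamma\epsilon/(2+\epsilon))$, the total number of while-loop iterations is $\mathrm{poly}(n, 1/\epsilon)$, yielding the claimed running time.
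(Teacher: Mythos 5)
Your overall architecture matches the paper's: seed \textsc{RecursiveBalance} with the $4$-EF allocation from \Cref{thm:idenvalue_general-4ef}, note that correctness upon termination is immediate from the loop guard, observe that each call to \textsc{Balance} is polynomial, and reduce everything to bounding the number of while-loop iterations. Your qualitative case analysis of \textsc{BalancePath} is also essentially right: no iteration creates a new share of value $\gamma=\mmax{\mathcal{A}}$, Case~1 lifts the path's minimum share to at least $\gamma/(2+\epsilon)$, and Cases~2 and~3 destroy the share of value $\gamma$.

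The gap is in the quantitative step, and it is exactly the step you flag as the "main obstacle." Your charging scheme rests on the claim that "each definitive drop in $\mmax{\mathcal{A}}$ is at least $\Omega(\gamma\epsilon/(2+\epsilon))$," and this is false: when the last share of value $\gamma$ is destroyed, the new maximum is the largest value among the \emph{untouched} shares (off the minimum--maximum path, or on-path shares $P_{i+2},\dots,P_{d-1}$ left alone by an early termination), and nothing prevents such a share from having value $\gamma(1-\delta)$ for arbitrarily small $\delta>0$. So counting exact maximizers gives no lower bound on the size of the drop. Symmetrically, your Case~1 accounting only shows the minimum eventually "jumps upward," but you never bound how many such jumps can occur, and they too can be arbitrarily small. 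The paper closes both holes with one extra idea: instead of counting shares at the exact extremes, it counts shares whose values lie in the multiplicative bands $\mathcal{I}(\mathcal{A}) = [\mmax{\mathcal{A}}/(2+\epsilon), \mmax{\mathcal{A}}]$ and $\mathcal{J}(\mathcal{A}) = [2\mmax{\mathcal{A}}/(2+\epsilon), \mmax{\mathcal{A}}]$ (\Cref{lem:balance}). For the minimum side, $n$ \emph{consecutive} Case-1-type iterations force the count in $\mathcal{I}$ up to $n$, which is outright termination---no need to track where the minimum jumps to. For the maximum side, the key technical point (Case~(ii) of \Cref{lem:balance}) is that, with shares sorted by value, each such iteration shifts every share still in $\mathcal{J}$ down by one sorted index; after $n$ of these the band $\mathcal{J}(\mathcal{A}^t)$ is empty, forcing a guaranteed \emph{multiplicative} drop of $\mmax{}$ by $\tfrac{2}{2+\epsilon}$, and $O(1/\epsilon)$ such multiplicative drops suffice because a $4$-EF start gives $\mmax{\mathcal{A}^0}\le \frac{4}{n+3}$ while $\mmax{}\le\frac{2}{2n-1}$ already implies $2$-EF. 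You would also need to verify that the pseudo-$4$-EF invariant is preserved across iterations (so that the structural bounds of \Cref{lem:balancepath}, e.g.\ $\val{P_j}\ge\gamma/4$ for $j\ge 2$, remain available); your sketch does not address this. Without the band-counting and index-shift argument, or some substitute for it, the proposed proof does not yield a polynomial bound.
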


To establish the proof of \Cref{thm:idenvalue_general-2pluseps}, we will work ``inside-out'': establish properties of \textsc{BalancePath} (\Cref{alg:balancepath}), \textsc{Balance} (\Cref{alg:balance}), and \textsc{RecursiveBalance} (\Cref{alg:recursivebalance}) in this order. 
Throughout the proofs, we will assume that the inputs of the algorithms are \emph{pseudo $4$-EF}---the definition is given below.

For a parameter $\alpha\ge 1$, we say that an allocation $\mathcal{A}$ is \emph{pseudo $\alpha$-EF} if $\mmin{\mathcal{A}} > 0$ and $\mmin{\mathcal{A}'} \geq \mmax{\mathcal{A}'} / \alpha$, where $\mathcal{A}'$ is defined to be an allocation after removing one share of the lowest value in $\mathcal{A}$ (if there is more than one such share, the definition is independent of which of those shares is removed). 
In other words, the concept of ``pseudo $\alpha$-EF'' ignores the effect of one share with the lowest value. 
We apply the analogous definition to a minimum-maximum path~$\mathcal{P}$. 
Note that an $\alpha$-EF allocation is also pseudo $\alpha$-EF. 
For any minimum-maximum path $\mathcal{P}$ of $\mathcal{A}$, $\mathcal{P}$ is pseudo $\alpha$-EF if $\mathcal{A}$ is, and $\mathcal{P}$ is $\alpha$-EF if and only if $\mathcal{A}$ is.

First, we establish properties satisfied by the output of \textsc{BalancePath}.

\begin{lem}\label{lem:balancepath}
Let $\epsilon \in (0, 1)$, and let $\mathcal{P} = (P_1, \ldots, P_d)$ be a minimum-maximum path of some allocation $\mathcal{A}$ that is pseudo $4$-EF but not $(2+\epsilon)$-EF. Let $\gamma = \mmax{\mathcal{P}}$.
Then, the following statements regarding $\mathcal{P}$ hold:
\begin{itemize}
    \item $0 < \val{P_1} < \frac{\gamma}{2+\epsilon}$,
    \item $\frac{\gamma}{4} \leq \val{P_j} \leq \gamma$ for all $j \in \{2,\dots, d-1\}$, and
    \item $\val{P_d} = \gamma$.
\end{itemize}
Moreover, if $\mathcal{P}' = (P'_1, \ldots, P'_d)$ is the output of $\textsc{BalancePath}(\mathcal{P}, \epsilon)$, then at least one of the following three cases holds:
\begin{itemize}
    \item Case 1: There exists $i \in \{2,\dots, d-1\}$ such that
    \begin{itemize}
        \item $\frac{\gamma}{2+\epsilon} \leq \val{P'_j} < \frac{2\gamma}{2+\epsilon}$ for all $j \in \{1,\dots, i-1\}$,
        \item $\frac{\gamma}{2+\epsilon} \leq \val{P'_i} < \val{P_i}$, and 
        \item $\val{P'_j} = \val{P_j}$ for all $j \in \{i+1,\dots, d\}$.
    \end{itemize}
    \item Case 2: There exists $i \in \{1,\dots, d-2\}$ such that
    \begin{itemize}
        \item $\frac{\gamma}{4} \leq \val{P'_j} < \frac{2\gamma}{2+\epsilon}$ for all $j \in \{1,\dots, i+1\} \cup \{d\}$, and
        \item $\val{P'_j} = \val{P_j}$ for all $j \in \{i+2,\dots, d-1\}$.
    \end{itemize}
    \item Case 3: 
    \begin{itemize}
        \item $\frac{\gamma}{2+\epsilon} \leq \val{P'_j} < \frac{2\gamma}{2+\epsilon}$ for all $j \in \{1,\dots, d-1\}$, and 
        \item $0 < \val{P'_d} < \val{P_d} = \gamma$.
    \end{itemize}
\end{itemize}
\end{lem}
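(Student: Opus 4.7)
The plan is to verify the three initial bounds on $\mathcal{P}$ directly from the definitions, then trace \textsc{BalancePath} iteration by iteration via a simple invariant, classifying the output by which of the three return lines fires. For the initial bounds: $\val{P_d}=\gamma$ is the definition of $\gamma$; since $\val{P_1}=\mmin{\mathcal{A}}$, pseudo $4$-EF of $\mathcal{A}$ forces $\val{P_1}>0$ and the failure of $(2+\epsilon)$-EF forces $\val{P_1}<\gamma/(2+\epsilon)$; and for $j\in\{2,\dots,d-1\}$, removing $P_1$ from $\mathcal{A}$ leaves a sub-allocation that is $4$-EF and still contains $P_d$, so its maximum remains $\gamma$ and hence $\val{P_j}\ge\gamma/4$, while $\val{P_j}\le\gamma$ is immediate.

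Next I would introduce $\widehat{P}_i$ for the content of slot $i$ at the start of iteration $i$ (so $\widehat{P}_1=P_1$), and establish the invariant: whenever iteration $i$ reaches the Divide on Line~\ref{ln:value}, the resulting $P'_i$ satisfies $\val{P'_i}\in[\gamma/(2+\epsilon),2\gamma/(2+\epsilon))$ by \Cref{lem:divide}, and the content of slot $i+1$ at the start of iteration $i+1$ satisfies $\val{\widehat{P}_{i+1}}<\val{P_{i+1}}$; the latter follows from the identity $\val{\widehat{P}_{i+1}}-\val{P_{i+1}}=\val{\widehat{P}_i}-\val{P'_i}$ together with the condition $\val{\widehat{P}_i}<\gamma/(2+\epsilon)\le\val{P'_i}$ required to have bypassed Line~\ref{ln:case1}. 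If the algorithm returns at Line~\ref{ln:case1return} for some~$i$ (necessarily $i\ge2$ since $\val{\widehat{P}_1}=\val{P_1}<\gamma/(2+\epsilon)$), the invariant and the triggering condition yield the first two bullets of Case~1, and the fact that later slots are untouched gives the third.

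If instead the algorithm returns at Line~\ref{ln:case2return} for some~$i$, \Cref{lem:divide} places $\val{P'_{i+1}}$ and $\val{P'_d}$ inside $[\gamma/3,2\gamma/3]\subset[\gamma/4,2\gamma/(2+\epsilon))$ (using $\epsilon<1$); meanwhile $\val{P'_i}=\val{P^*}<2\gamma/(2+\epsilon)$ by the triggering condition, and $\val{P'_i}\ge\val{P_{i+1}}\ge\gamma/4$ by the initial bound (which applies since $i+1\le d-1$). The case $i=d-1$ is automatically excluded because then $\val{P^*}\ge\val{P_d}=\gamma\ge 2\gamma/(2+\epsilon)$ contradicts the Case~2 test. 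Combined with the invariant on the earlier slots, this gives Case~2 of the lemma.

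The most delicate part is Case~3, where the loop finishes without any early return. The range bounds for $P'_1,\dots,P'_{d-1}$ are exactly the invariant applied at each iteration. For $\val{P'_d}<\gamma$, the final Divide peels off $P'_{d-1}$ of value at least $\gamma/(2+\epsilon)>0$ from $P^*=\widehat{P}_{d-1}\cup P_d$, whose total value is strictly less than $\gamma/(2+\epsilon)+\gamma$ since $\val{\widehat{P}_{d-1}}<\gamma/(2+\epsilon)$ (we bypassed Line~\ref{ln:case1}). The subtle point is strict positivity $\val{P'_d}>0$: here the \emph{strict} upper bound $\val{P'_{d-1}}<2\gamma/(2+\epsilon)$ from \Cref{lem:divide}, combined with $\val{P^*}\ge\val{P_d}=\gamma\ge 2\gamma/(2+\epsilon)$, yields $\val{P'_d}=\val{P^*}-\val{P'_{d-1}}>0$.
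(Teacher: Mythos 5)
Your proposal is correct and follows essentially the same route as the paper's proof: the same use of pseudo $4$-EF and failure of $(2+\epsilon)$-EF for the initial bounds, the same per-iteration invariant based on the identity $\val{\widehat{P}_{i+1}} - \val{P_{i+1}} = \val{\widehat{P}_i} - \val{P'_i}$ together with \Cref{lem:divide}, the same classification of the output by which return line fires (including the exclusions $i \ge 2$ for Case~1 and $i \le d-2$ for Case~2), and the same derivation of $0 < \val{P'_d} < \gamma$ in Case~3. The only differences are cosmetic, e.g.\ using $\val{P^*} \ge \val{P_d} = \gamma$ instead of the bypassed test $\val{P^*} \ge \frac{2\gamma}{2+\epsilon}$ for positivity.
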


\begin{proof}
First, we prove the statements regarding $\mathcal{P}$. 
If $\mathcal{A}$ (and hence $\mathcal{P}$) is not $(2+\epsilon)$-EF, then the smallest share, $P_1$, has value less than $\frac{\gamma}{2+\epsilon}$.
Since $\mathcal{A}$ (and hence $\mathcal{P}$) is pseudo $4$-EF, $P_1$ must have positive value and all other $P_j$'s must have value at least $\frac{\gamma}{4}$ and at most $\gamma$. 
Finally, $\val{P_d} = \mmax{\mathcal{P}} = \gamma$.

Next, we prove the results related to $\mathcal{P}'$. Cases 1, 2 and~3 correspond to the output of \textsc{BalancePath} in Lines \ref{ln:case1return}, \ref{ln:case2return}, and \ref{ln:case3return}, respectively.\footnote{Please note the notational differences between the main text and the pseudocode. We refer to the notations in the main text.} 
For each~$i\in \{1,\dots,d-1\}$, if the value of $\widehat{P}_i$ is less than $\frac{\gamma}{2+\epsilon}$ and the value of $\widehat{P}_i \cup P_{i+1}$ is at least $\frac{2\gamma}{2+\epsilon}$, then $\frac{\gamma}{2+\epsilon} \leq \val{P_i'} < \frac{2\gamma}{2+\epsilon}$ by $\textsc{Divide}(P^*, N, \frac{\gamma}{2+\epsilon}, r)$ and \Cref{lem:divide}. 
Hence,
\[
\val{\widehat{P}_{i+1}} = \val{\widehat{P}_i \cup P_{i+1}} - \val{P_i'} > 0,\]
and since $\val{\widehat{P}_i} < \frac{\gamma}{2+\epsilon} \leq \val{P_i'} $, we have
\begin{equation}
\label{eq:Piplusone}
\val{\widehat{P}_{i+1}} = \val{P_{i+1}} - (\val{P_i'} - \val{\widehat{P}_i}) < \val{P_{i+1}}.
\end{equation}
Putting the two bounds on $\val{\widehat{P}_{i+1}}$ together, we get $0 < \val{\widehat{P}_{i+1}} < \val{P_{i+1}} \leq \gamma$.

If there is some smallest $i \in \{2,\dots, d-1\}$ such that the value of $\widehat{P}_i$ is at least $\frac{\gamma}{2+\epsilon}$ (note that this is not possible for $i = 1$ since $\val{\widehat{P}_1} = \val{P_1} < \frac{\gamma}{2+\epsilon}$), then we have $\frac{\gamma}{2+\epsilon} \leq \val{P'_j} < \frac{2\gamma}{2+\epsilon}$ for all $j \in \{1,\dots, i-1\}$ by \Cref{lem:divide}, and $\frac{\gamma}{2+\epsilon} \leq \val{P'_i} < \val{P_i}$ by \eqref{eq:Piplusone} and the definition of $i$ (since the two early termination conditions were not triggered for $i-1$ and $P'_i = \widehat{P}_i$). 
The remaining shares are unchanged, i.e., $\val{P'_j} = \val{P_j}$ for all $j \in \{i+1,\dots, d\}$. 
This corresponds to Case~1.

Else, if there is some smallest $i \in \{1,\dots, d-2\}$ such that the value of $\widehat{P}_i \cup P_{i+1}$ is less than $\frac{2\gamma}{2+\epsilon}$ (note that this is not possible for $i = d-1$ since $\val{\widehat{P}_{d-1} \cup P_d} \geq \val{P_d} = \gamma > \frac{2\gamma}{2+\epsilon}$), then we have 
\[
\frac{\gamma}{4} \leq \frac{\gamma}{2+\epsilon} \leq \val{P'_j} < \frac{2\gamma}{2+\epsilon}
\]
for all $j \in \{1,\dots, i-1\}$ by \Cref{lem:divide}. 
Now,
$
\frac{\gamma}{4} \leq \val{P_{i+1}} \leq \val{\widehat{P}_i \cup P_{i+1}} < \frac{2\gamma}{2+\epsilon},
$
where the first inequality follows from the first paragraph of this proof.
Since $P'_i$ is set to $\widehat{P}_i \cup P_{i+1}$, we have 
\[
\frac{\gamma}{4} \leq \val{P'_i} < \frac{2\gamma}{2+\epsilon}.
\]
Also, since $(P'_{i+1}, P'_d) = \textsc{Divide}(P_d, N, \frac{\gamma}{3}, r)$, it holds that
\[
\frac{\gamma}{4} \leq \frac{\gamma}{3} \leq \val{P_j'} \leq \frac{2\gamma}{3} < \frac{2\gamma}{2+\epsilon}
\]
for $j \in \{i+1, d\}$ by \Cref{lem:divide} and the fact that $\val{P_d} = \gamma$. 
The remaining shares are unchanged, i.e., $\val{P'_j} = \val{P_j}$ for all $j \in \{i+2,\dots, d-1\}$. This corresponds to Case~2.

Finally, suppose that the value of $\widehat{P}_i$ is less than $\frac{\gamma}{2+\epsilon}$ and the value of $\widehat{P}_i \cup P_{i+1}$ is at least $\frac{2\gamma}{2+\epsilon}$ for all $i \in \{1,\dots, d-1\}$.
By \Cref{lem:divide}, $\frac{\gamma}{2+\epsilon} \leq \val{P'_j} < \frac{2\gamma}{2+\epsilon}$ for all $j \in \{1,\dots, d-1\}$.
Moreover, since 
\[
\frac{2\gamma}{2+\epsilon} \le \val{\widehat{P}_{d-1} \cup P_d} \le \val{\widehat{P}_{d-1}} + \val{P_d} < \frac{\gamma}{2+\epsilon} + \gamma
\]
and
$\frac{\gamma}{2+\epsilon} \leq \val{P'_{d-1}} < \frac{2\gamma}{2+\epsilon}$, and 
$\val{P'_d} = \val{\widehat{P}_{d-1} \cup P_d} - \val{P'_{d-1}}$, we have $0 < \val{P'_d} < \gamma = \val{P_d}$. 
This corresponds to Case~3.
\end{proof}

Having analyzed the output of \textsc{BalancePath}, we next establish properties satisfied by the output of \textsc{Balance}.
For an allocation $\mathcal{A}$, let 
$\mathcal{I}(\mathcal{A}) = \left[ \frac{\max (\mathcal{A})}{2+\epsilon}, \max (\mathcal{A}) \right]$ and $\mathcal{J}(\mathcal{A}) = \left[ \frac{2 \max (\mathcal{A})}{2+\epsilon}, \max (\mathcal{A}) \right]$.
Let $\mathcal{N}(\mathcal{A}, I)$ be the number of shares in $\mathcal{A}$ having values in the interval $I$, and let $\mathcal{N}(\mathcal{A}) = \mathcal{N}(\mathcal{A}, \mathcal{I}(\mathcal{A}))$.

\begin{lem}\label{lem:balance}
Let $\epsilon \in (0, 1)$, and let $\mathcal{A} = (A_1, \ldots, A_n)$ be an allocation that is pseudo $4$-EF but not $(2+\epsilon)$-EF. 
Let $\mathcal{A}' = (A'_1, \ldots, A'_n)$ be the output of $\textsc{Balance}(\mathcal{A}, \epsilon)$. 
Without loss of generality, assume that both $\mathcal{A}$ and $\mathcal{A}'$ have the shares arranged in ascending order of values. 
Then $\mathcal{A}'$ is pseudo $4$-EF, and at least one of the following two cases holds:
\begin{itemize}
    \item Case~(i): $\mathcal{N}(\mathcal{A}', \mathcal{I}(\mathcal{A})) > \mathcal{N}(\mathcal{A})$, and for all $j$ such that $\mu(A'_j) \in \mathcal{J}(\mathcal{A})$, we have $\mu(A'_j) \leq \mu(A_j)$.
    \item Case~(ii): For all $j$ such that $\mu(A'_j) \in \mathcal{J}(\mathcal{A})$, we have $\mu(A'_j) \leq \mu(A_{j-1})$.
\end{itemize}
\end{lem}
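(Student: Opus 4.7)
The plan is to first note that $\textsc{Balance}(\mathcal{A},\epsilon)$ modifies only the shares lying on the chosen minimum--maximum path $\mathcal{P}=(P_1,\ldots,P_d)$; call the other shares $Q_1,\ldots,Q_{n-d}$, which are identical in $\mathcal{A}$ and $\mathcal{A}'$. Since $\mathcal{A}$ is pseudo $4$-EF with $\val{P_1}=\mmin{\mathcal{A}}$ and $\mmax{\mathcal{A}}=\gamma$, every $Q_k$ satisfies $\val{Q_k}\in[\gamma/4,\gamma]$. From there I invoke Lemma~\ref{lem:balancepath} to split into its three structural cases for the output of $\textsc{BalancePath}$.

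To show $\mathcal{A}'$ is pseudo $4$-EF, I check each subcase. In Cases~1 and~2 every new $P'_j$ value lies in $[\gamma/4,\gamma]$ and $\mmax{\mathcal{A}'}\leq\gamma$, so $\mathcal{A}'$ is outright $4$-EF. Case~3 is more delicate because $\val{P'_d}$ can be arbitrarily small, but every $Q_k$ and every other $P'_j$ still has value at least $\gamma/4$; so either $\val{P'_d}\geq\gamma/4$ (and $\mathcal{A}'$ is already $4$-EF) or $P'_d$ is the unique minimum of $\mathcal{A}'$ and deleting it leaves a $4$-EF subcollection.

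The heart of the argument is the Case~(i)/Case~(ii) dichotomy. Because the $Q_k$ are unchanged, for every $x\in\mathcal{J}(\mathcal{A})$ the signed count
\[
c_x \;:=\; \#\{k:\val{A_k}\geq x\}\;-\;\#\{k:\val{A'_k}\geq x\}
\]
reduces to $\sum_{j=1}^{d}\bigl(1_{\val{P_j}\geq x}-1_{\val{P'_j}\geq x}\bigr)$, and a standard order-statistic argument shows that $c_x\geq 0$ for every such $x$ implies $\val{A'_j}\leq\val{A_j}$ whenever $\val{A'_j}\in\mathcal{J}(\mathcal{A})$, while $c_x\geq 1$ gives the stronger $\val{A'_j}\leq\val{A_{j-1}}$ needed in Case~(ii). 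In Case~1 of Lemma~\ref{lem:balancepath}, each modified $P'_j$ for $j<i$ lies strictly below $2\gamma/(2+\epsilon)\leq x$ and $\val{P'_i}<\val{P_i}$, yielding $c_x\geq 0$; combined with the fact that $P_1$ moves from outside $\mathcal{I}(\mathcal{A})$ into it while no other share moves in the opposite direction, this gives $\mathcal{N}(\mathcal{A}',\mathcal{I}(\mathcal{A}))>\mathcal{N}(\mathcal{A})$ and hence Case~(i). The same reasoning covers Case~3 when $\val{P'_d}\geq\gamma/(2+\epsilon)$. In Case~2, and in Case~3 when $\val{P'_d}<\gamma/(2+\epsilon)$, every modified $P'_j$ is strictly below $2\gamma/(2+\epsilon)\leq x$, so the loss of $P_d=\gamma$ is uncompensated and $c_x\geq 1$, yielding Case~(ii). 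The main obstacle is Case~3, where $P_d$ is replaced by a possibly tiny $P'_d$ that threatens pseudo-$4$-EF; splitting on whether $\val{P'_d}$ clears the threshold $\gamma/(2+\epsilon)$ turns out to be exactly the dividing line between Cases~(i) and~(ii).
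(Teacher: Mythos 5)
Your proposal is correct and follows essentially the same route as the paper: apply Lemma~\ref{lem:balancepath}, note that off-path shares are unchanged and all new path shares except possibly $P'_d$ lie in $[\gamma/4,\gamma]$ (giving pseudo $4$-EF), and then group Case~1 together with Case~3 when $\val{P'_d}\ge\gamma/(2+\epsilon)$ into Case~(i) and Case~2 together with Case~3 when $\val{P'_d}<\gamma/(2+\epsilon)$ into Case~(ii). The only difference is presentational: your explicit counting quantity $c_x$ formalizes the sorted-order comparison that the paper invokes more tersely, and your verification of $c_x\ge 0$ versus $c_x\ge 1$ in the respective cases is sound.
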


\begin{proof}
Since $\mathcal{A}$ is pseudo $4$-EF but not $(2+\epsilon)$-EF, we can apply \Cref{lem:balancepath} on its minimum-maximum path denoted by $\mathcal{P}$.
Note that the values of the shares in $\mathcal{A} \setminus \mathcal{P}$ remain unchanged, while all shares in $\mathcal{P}'$ have values in the range $[\frac{\gamma}{4}, \gamma]$, except possibly $P'_d$ which has value in the range $(0, \gamma]$. 
This means that $\mathcal{A}'$ is pseudo $4$-EF. 

We now show that at least one of the two cases in the lemma statement holds.
First, suppose that either $\mathcal{P}'$ falls under Case~1 of \Cref{lem:balancepath}, or $\mathcal{P}'$ falls under Case~3 with the additional condition that $\frac{\gamma}{2+\epsilon} \leq \val{P_d'} < \gamma$.
There exists $i \in \{2,\dots, d\}$ such that only $P_1, \ldots, P_i$ are changed to $P_1', \ldots, P_i'$, while the rest of the shares remain unchanged.
Let $\widehat{\mathcal{P}} = (P_1, \ldots, P_i)$ and $\widehat{\mathcal{P}}' = (P_1', \ldots, P_i')$. 
We have $\val{P_1} \not\in \mathcal{I}(\mathcal{A})$ because $\mathcal{A}$ is not $(2+\epsilon)$-EF, so $\mathcal{N}(\widehat{\mathcal{P}}, \mathcal{I}(\mathcal{A})) \leq i-1$, while $\val{P_j'} \in \mathcal{I}(\mathcal{A})$ for all $j \in \{1,\dots, i\}$, so $\mathcal{N}(\widehat{\mathcal{P}}', \mathcal{I}(\mathcal{A})) = i$. 
As the rest of the shares in $\mathcal{P}$ and $\mathcal{A}$ remain unchanged, this implies $\mathcal{N}(\mathcal{A}', \mathcal{I}(\mathcal{A})) > \mathcal{N}(\mathcal{A})$. 
Furthermore, the only share in $\widehat{\mathcal{P}}'$ with a value that is potentially in $\mathcal{J}(\mathcal{A})$ is $P_i'$, but since $\val{P_i'} < \val{P_i}$, we have $\mu(A'_j) \leq \mu(A_j)$ for all $j$ such that $\mu(A'_j) \in \mathcal{J}(\mathcal{A})$. 
This corresponds to Case~(i).

Next, suppose that either $\mathcal{P}'$ falls under Case~2, or $\mathcal{P}'$ falls under Case~3 with the additional condition that $0 < \val{P_d'} < \frac{\gamma}{2+\epsilon}$.
There exists $i \in \{1,\dots, d-2\}$ such that only $P_1, \ldots, P_{i+1}$ and $P_d$ are changed to $P_1', \ldots, P_{i+1}'$ and $P_d'$.  
Then we have $\val{P_j'} \not\in \mathcal{J}(\mathcal{A})$ for all $j \in \{1,\dots, i+1\} \cup \{d\}$.
Hence, for $j$ such that $\mu(P'_j) \in \mathcal{J}(\mathcal{A})$, it holds that $j\in\{i+2,\dots,d-1\}$ and $\mu(P_j) = \mu(P'_j) \in \mathcal{J}(\mathcal{A})$.
Moreover, $\val{P_d} = \mmax{\mathcal{A}} \in \mathcal{J}(\mathcal{A})$.
Since both $\mathcal{A}$ and $\mathcal{A}'$ have the shares arranged in ascending order of values and the values of the shares in $\mathcal{A}\setminus\mathcal{P}$ remain unchanged, it follows that $\mu(A'_j) \leq \mu(A_{j-1})$ for all $j$ such that $\mu(A'_j) \in \mathcal{J}(\mathcal{A})$.
This corresponds to Case~(ii).
\end{proof}

By leveraging \Cref{lem:balance}, we can bound the number of calls to \textsc{Balance} in \textsc{RecursiveBalance}.

\begin{lem} \label{lem:recursivebalance}
Given an instance of graphical cake cutting consisting of $n$ agents with identical valuations, a $4$-EF allocation $\mathcal{A}$ of the cake, and any $\epsilon \in (0, 1)$, the algorithm \textsc{RecursiveBalance} terminates after at most $\frac{5n^2}{\epsilon}$ calls to $\textsc{Balance}$.
\end{lem}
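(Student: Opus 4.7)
The plan is to bound the number of iterations of \textsc{RecursiveBalance} via a two-level potential argument anchored on \Cref{lem:balance}. First I would verify by induction on $t$ that every allocation $\mathcal{A}^t$ passed to \textsc{Balance} during the execution is pseudo $4$-EF: the base case $t = 0$ is supplied by \Cref{thm:idenvalue_general-4ef} (a $4$-EF allocation is pseudo $4$-EF), and the inductive step is immediate from the ``output is pseudo $4$-EF'' clause in \Cref{lem:balance}. This invariant licenses the use of the Case~(i)/(ii) dichotomy in \Cref{lem:balance} at every iteration.

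Next, two structural observations are in order. First, let $M^t := \max(\mathcal{A}^t)$; by \Cref{lem:balancepath}, every share output by \textsc{BalancePath} has value at most $\gamma = \max(\mathcal{P}^t) \le M^t$, and shares outside the minimum-maximum path are unchanged, so $M^t$ is non-increasing in $t$. Second, unpacking \Cref{lem:balance}: in Case~(i), the count $\mathcal{N}(\mathcal{A}^{t+1}, \mathcal{I}(\mathcal{A}^t))$ strictly exceeds $\mathcal{N}(\mathcal{A}^t, \mathcal{I}(\mathcal{A}^t))$; in Case~(ii), a short pigeonhole on the shift inequality $\mu(A^{t+1}_j) \le \mu(A^t_{j-1})$ (with shares sorted ascending) yields $\mathcal{N}(\mathcal{A}^{t+1}, \mathcal{J}(\mathcal{A}^t)) < \mathcal{N}(\mathcal{A}^t, \mathcal{J}(\mathcal{A}^t))$---indeed, if the opposite held, then some share of $\mathcal{A}^{t+1}$ in $\mathcal{J}(\mathcal{A}^t)$ would be squeezed into a position whose predecessor in $\mathcal{A}^t$ lies outside $\mathcal{J}(\mathcal{A}^t)$, contradicting $\mu(A^{t+1}_j) \in \mathcal{J}(\mathcal{A}^t)$.

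With these in hand, I would set up the accounting as follows. Because $M^t$ is non-increasing and all shares of $\mathcal{A}^{t+1}$ are bounded above by $M^{t+1}$, every share of $\mathcal{A}^{t+1}$ lying in $\mathcal{I}(\mathcal{A}^t)$ also lies in $\mathcal{I}(\mathcal{A}^{t+1})$; hence $\Phi^t := \mathcal{N}(\mathcal{A}^t, \mathcal{I}(\mathcal{A}^t))$ strictly increases under every Case~(i) step and is bounded above by $n$. Case~(ii) steps are charged to either a strict decrease of $\mathcal{N}(\mathcal{A}^t, \mathcal{J}(\mathcal{A}^t))$ while $M^t$ is unchanged (at most $n$ per ``era'' of constant $M$) or to a drop of $M^t$ by a factor of at least roughly $2/(2+\epsilon)$. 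Since $M^t \in [1/n, 1]$ (the lower bound follows from $\sum_j \mu(A^t_j) = 1$), at most $O(n/\epsilon)$ such drops can occur, and summing over eras yields the stated bound $5n^2/\epsilon$.

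The main obstacle is merging the two progress measures $\mathcal{N}(\cdot, \mathcal{I}(\cdot))$ and $\mathcal{N}(\cdot, \mathcal{J}(\cdot))$ into a single monotone account without over- or under-counting when $M^t$ changes: both intervals $\mathcal{I}$ and $\mathcal{J}$ shift with $M$, and a Case~(ii) iteration need not strictly decrease $M^t$ (unchanged shares in $\mathcal{A}\setminus\mathcal{P}$ may still equal $M^t$, or an unchanged $P_j$ with $j\in\{i+2,\ldots,d-1\}$ could coincide with $\gamma$). Handling this requires tracking the shares with value exactly equal to the current maximum and showing that after at most $n$ such Case~(ii) events the maximum must strictly drop, whereupon the factor-$2/(2+\epsilon)$ estimate kicks in to supply the $1/\epsilon$ factor in the final bound.
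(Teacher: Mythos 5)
Your skeleton matches the paper's: the pseudo-$4$-EF invariant, the Case~(i)/(ii) dichotomy from \Cref{lem:balance}, monotonicity of the maximum, progress of $\mathcal{N}(\cdot,\mathcal{I}(\cdot))$ under Case~(i), and a pigeonhole on the shift inequality forcing $\mathcal{N}(\cdot,\mathcal{J}(\cdot))$ to drop under Case~(ii) are all exactly the ingredients the paper uses. However, there are two genuine gaps. First, the obstacle you flag at the end---that a single Case~(ii) step need not drop $M^t$ by a factor, and that $\mathcal{I}$ and $\mathcal{J}$ move with $M$---is not resolved by ``tracking shares equal to the current maximum.'' The paper's fix is to freeze the reference interval: for a fixed starting index $t$ it tracks $z_i=\mathcal{N}(\mathcal{A}^{t+i},\mathcal{J}(\mathcal{A}^t))$ against the \emph{fixed} band $\mathcal{J}(\mathcal{A}^t)$, shows $z$ is non-increasing under Case~(i) and strictly decreasing under Case~(ii) while positive (using $\max(\mathcal{A}^t)\ge\max(\mathcal{A}^{t+i})$ to transfer membership from $\mathcal{J}(\mathcal{A}^t)$ to $\mathcal{J}(\mathcal{A}^{t+i})$), so that after $n$ Case~(ii) events---interleaved arbitrarily with Case~(i) events---the band is empty and the maximum has provably fallen by a full factor of $\frac{2}{2+\epsilon}$. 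No ``era of constant $M$'' is needed or available.

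Second, and more importantly, your accounting does not reach the stated bound $\frac{5n^2}{\epsilon}$. Driving $M^t$ from $1$ down to its floor of $1/n$ requires $\Theta(\log n/\epsilon)$ factor-$\frac{2}{2+\epsilon}$ drops, each costing up to $n$ Case~(ii) steps, each of which may be preceded by up to $n-1$ Case~(i) steps; multiplying these out gives $O(n^2\log n/\epsilon)$ at best (and $O(n^3/\epsilon)$ with your loose $O(n/\epsilon)$ count of drops), not $5n^2/\epsilon$. Moreover, $M^t$ reaching $1/n$ is not itself a termination criterion. The paper closes both issues with two observations you are missing: since $\mathcal{A}^0$ is $4$-EF, $\max(\mathcal{A}^0)\le\frac{4}{n+3}$; and once $\max(\mathcal{A}^t)\le\frac{2}{2n-1}$ the allocation is automatically $2$-EF (otherwise the $n$ shares would sum to less than $1$). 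Hence only a \emph{constant} total multiplicative drop of $4$ is ever needed, achieved by $q=\lfloor 5/\epsilon\rfloor$ factor-drops, i.e.\ at most $qn$ Case~(ii) steps. The final count is then a single global dichotomy over a window of $qn^2$ steps---either at least $qn$ of them are Case~(ii), forcing $2$-EF, or fewer are, in which case pigeonhole yields $n$ consecutive Case~(i) steps and hence $\mathcal{N}=n$, i.e.\ $(2+\epsilon)$-EF---rather than a product over eras. Your argument as written proves termination in polynomial time but not the claimed $\frac{5n^2}{\epsilon}$ bound.
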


\begin{proof}
Let $\mathcal{A}^0 = \mathcal{A}$, and $\mathcal{A}^{t+1} = \textsc{Balance}(\mathcal{A}^t, \epsilon)$ for all $t \geq 0$. 
If there is some $t$ such that $\mathcal{A}^t$ is $(2+\epsilon)$-EF, then the algorithm \textsc{RecursiveBalance} terminates. 
By \Cref{lem:balance}, every $\mathcal{A}^t$ is pseudo $4$-EF, and if $\mathcal{A}^t$ is not $(2+\epsilon)$-EF, then either Case~(i) or Case~(ii) holds when $\mathcal{A}^t$ is used as an input to \textsc{Balance}, with output $\mathcal{A}^{t+1}$.
As a consequence of \Cref{lem:balance}, we have $\mmax{\mathcal{A}^{t+1}} \leq \mmax{\mathcal{A}^t}$.
For an allocation $\mathcal{B}$, let us say that $\mathcal{B}$ \emph{follows Case~(i)} (resp., \emph{Case~(ii)}) if the pair $(\mathcal{B}, \mathcal{B}')$ fulfills the conditions of Case~(i) (resp., Case~(ii)), where $\mathcal{B}' = \textsc{Balance}(\mathcal{B},\epsilon)$.

We claim that \textsc{RecursiveBalance} terminates if there are $n$ consecutive allocations following Case~(i). 
Consider some $\mathcal{A}^t$ and assume that for every $k \in \{0,\dots, n-1\}$, $\mathcal{A}^{t+k}$ follows Case~(i) and is not $(2+\epsilon)$-EF.
For each $k$, we have 
\[
\mathcal{N}(\mathcal{A}^{t+k+1}, \mathcal{I}(\mathcal{A}^{t+k})) > \mathcal{N}(\mathcal{A}^{t+k})\] 
by \Cref{lem:balance}, and \[
\mathcal{N}(\mathcal{A}^{t+k+1}) \geq \mathcal{N}(\mathcal{A}^{t+k+1}, \mathcal{I}(\mathcal{A}^{t+k}))
\]
holds due to the fact that $\mmax{\mathcal{A}^{t+k+1}} \leq \mmax{\mathcal{A}^{t+k}}$. 
This gives the chain 
\[
\mathcal{N}(\mathcal{A}^t) < \mathcal{N}(\mathcal{A}^{t+1}) < \cdots < \mathcal{N}(\mathcal{A}^{t+n}). 
\]
Since each $\mathcal{N}(\mathcal{A}^{t+k})$ is an integer bounded by $0$ and $n$, the chain forces $\mathcal{N}(\mathcal{A}^{t+k}) = k$ for all $k \in \{0,\dots, n\}$.
In particular, $\mathcal{N}(\mathcal{A}^{t+n}) = n$, which implies that $\mathcal{A}^{t+n}$ is $(2+\epsilon)$-EF, and the algorithm terminates.

Let $\mathcal{A}^t$ and $k > 0$ be given.
For each $i \in \{0,\dots, k\}$, consider the number of shares the allocation $\mathcal{A}^{t+i}$ has in the interval $\mathcal{J}(\mathcal{A}^t)$, i.e., consider $z_i = \mathcal{N}(\mathcal{A}^{t+i}, \mathcal{J}(\mathcal{A}^t))$. 
By \Cref{lem:balance}, if $\mathcal{A}^{t+i}$ follows Case~(i), then for all $j$ such that $\val{A_j^{t+i+1}}\in \mathcal{J}(\mathcal{A}^{t+i})$, we have $\val{A_j^{t+i+1}} \le \val{A_j^{t+i}}$.
Note that if $\val{A_j^{t+i+1}}\in \mathcal{J}(\mathcal{A}^{t})$, then it must also be that $\val{A_j^{t+i+1}}\in \mathcal{J}(\mathcal{A}^{t+i})$, because $\mmax{\mathcal{A}^{t}} \geq \mmax{\mathcal{A}^{t+i}} \geq \mmax{\mathcal{A}^{t+i+1}}$.
Hence, if $\mathcal{A}^{t+i}$ follows Case~(i), then for all~$j$ such that $\val{A_j^{t+i+1}}\in \mathcal{J}(\mathcal{A}^{t})$, we have $\val{A_j^{t+i+1}} \le \val{A_j^{t+i}}$.
This implies that $z_{i+1}\le z_i$ in this case.
Now, consider the case where $z_i > 0$ and $\mathcal{A}^{t+i}$ follows Case~(ii).
This means that for all $j$ such that $\mu(A_j^{t+i+1})\in\mathcal{J}(\mathcal{A}^{t+i})$, we have $\mu(A_j^{t+i+1}) \le \mu(A_{j-1}^{t+i})$.
Note again that if $\val{A_j^{t+i+1}}\in \mathcal{J}(\mathcal{A}^{t})$, then it must also be that $\val{A_j^{t+i+1}}\in \mathcal{J}(\mathcal{A}^{t+i})$, because $\mmax{\mathcal{A}^{t}} \geq \mmax{\mathcal{A}^{t+i}} \geq \mmax{\mathcal{A}^{t+i+1}}$.
Therefore, for all $j$ such that $\mu(A_j^{t+i+1})\in\mathcal{J}(\mathcal{A}^{t})$, we have $\mu(A_j^{t+i+1}) \le \mu(A_{j-1}^{t+i})$.
Combined with the assumption that $z_i > 0$, it follows that $z_{i+1} < z_i$ in this case.
Hence, if at least $n$ allocations among $\mathcal{A}^t, \mathcal{A}^{t+1}, \dots, \mathcal{A}^{t+k-1}$ follow Case~(ii), then $z_k = 0$ and consequently, $\mmax{\mathcal{A}^{t+k}} \leq \frac{2}{2+\epsilon}\mmax{\mathcal{A}^t}$.
Similarly, if at least $qn$ allocations among $\mathcal{A}^t, \mathcal{A}^{t+1}, \dots, \mathcal{A}^{t+k-1}$ follow Case~(ii) for some positive integer~$q$, then $\mmax{\mathcal{A}^{t+k}} \leq \left( \frac{2}{2+\epsilon} \right)^q \mmax{\mathcal{A}^t}$.

We are now ready to show that \textsc{RecursiveBalance} terminates after at most $qn^2$ calls to \textsc{Balance}, where $q = \left\lfloor 5/\epsilon \right\rfloor$. Suppose on the contrary that the algorithm still has not terminated at $\mathcal{A}^{qn^2}$.
If fewer than $qn$ allocations among $\mathcal{A}^0, \mathcal{A}^1, \dots, \mathcal{A}^{qn^2-1}$ follow Case~(ii), then by a counting argument, there must be $n$ consecutive allocations in this sequence that follow Case~(i), and hence the algorithm would have terminated. 
Therefore, at least $qn$ allocations among $\mathcal{A}^0, \mathcal{A}^1, \dots, \mathcal{A}^{qn^2-1}$ follow Case~(ii), so by the previous paragraph, $\mmax{\mathcal{A}^{qn^2}} \leq \left( \frac{2}{2+\epsilon} \right)^q \mmax{\mathcal{A}^0}$.
Since $\mathcal{A}^0$ is $4$-EF, we must have $\mmax{\mathcal{A}^0} \leq \frac{4}{n+3}$ (otherwise, every share has value more than $\frac{1}{n+3}$ and the total value of all $n$ shares is more than $\frac{4}{n+3} + (n-1)\cdot \frac{1}{n+3} = 1$, a contradiction). 
Now, the function $f(x) = (1+x)^{1/x}$ is decreasing in $(0,\infty)$---to see this, note that $\ln f(x) = \frac{1}{x}\ln(1+x) = \frac{\ln(1+x)-\ln 1}{x}$, which is the slope of the line through $(1,\ln 1)$ and $(1+x,\ln(1+x))$, and must be decreasing because the $\ln$ function is concave.
Hence, $( 1 + \frac{\epsilon}{2})^{2/\epsilon} = f(\epsilon/2) \ge f(1) = 2$, and so
\[
\left( \frac{2+\epsilon}{2} \right)^q = \left( 1 + \frac{\epsilon}{2} \right)^q \geq \left( 1 + \frac{\epsilon}{2} \right)^{4 / \epsilon} \geq 2^2 = 4,
\]
which implies that $\left(\frac{2}{2+\epsilon}\right)^q \leq \frac{1}{4}$.
It follows that 
\[
\mmax{\mathcal{A}^{qn^2}} \leq \left(\frac{1}{4}\right)\left(\frac{4}{n+3}\right) = \frac{1}{n+3} \leq \frac{2}{2n-1}.
\]
However, this means that $\mathcal{A}^{qn^2}$ is $2$-EF (otherwise, some share has value less than $\frac{1}{2n-1}$ and the total value of all $n$ shares is less than $\frac{1}{2n-1} + (n-1)\cdot\frac{2}{2n-1} = 1$, a contradiction) and hence $(2+\epsilon)$-EF. 
This is a contradiction since the algorithm should have terminated at $\mathcal{A}^{qn^2}$.
\end{proof}

We now use \Cref{lem:recursivebalance} to establish \Cref{thm:idenvalue_general-2pluseps}.

\begin{proof}[Proof of \Cref{thm:idenvalue_general-2pluseps}]
Without loss of generality, we may assume that $\epsilon \in (0, 1)$. 
Apply \textsc{IterativeDivide} to obtain a $4$-EF allocation (\Cref{thm:idenvalue_general-4ef}), then apply \textsc{RecursiveBalance} on this allocation to obtain a $(2 + \epsilon)$-EF allocation.
\textsc{IterativeDivide} and each iteration of \textsc{Balance} run in time polynomial in $n$, $1/\epsilon$, and the size of $G$, and the number of iterations of \textsc{Balance} is polynomial in $n$ and $1 / \epsilon$ by \Cref{lem:recursivebalance}.
The claimed overall running time follows.
\end{proof}

\subsection{Star Graphs}
Although \textsc{RecursiveBalance} provides a nearly $2$-EF allocation, the algorithm is rather complex and involves many steps.
Moreover, it seems difficult to improve the guarantee to \emph{exactly} $2$-EF via the algorithm, as the balancing of shares may be too slow and we may not reach a $2$-EF allocation in finite time.\footnote{For example, agents with shares of $1$~unit and $2 + \epsilon$ units respectively may repeatedly exchange and adjust their shares with each other (through multiple calls to \Cref{alg:balancepath}) but may never reach a multiplicative envy factor of $2$.}
We show next that, for the case of star graphs, it is possible to obtain a $2$-EF allocation using a simpler ``bag-filling'' approach.

Let $G$ be a star graph centered at vertex $v$. While there exists an edge $[u, v]$ in~$G$ with value at least $1/n$, find a point $x \in [u, v]$ such that $[u, x]$ has value $1/n$, and allocate $[u, x]$ to one of the agents. 
Remove this agent from consideration and remove the allocated segment from $G$; note that $G$ remains a star graph. 
Repeat this process on the remaining graph and agents until every edge has value less than $1/n$.
See \Cref{fig:stub} for an illustration.

At this point, we are left with a star of ``stubs''---edges with value less than $1/n$ each---and $k \in \{0, \dots, n\}$ agents who are not allocated any share yet.
If $k = 0$ we are done,\footnote{There may be a star of stubs of value $0$ left; in that case we append it to the last agent's share so that connectivity is retained.} so assume that $k\ge 1$. 
The total value of all stubs is exactly $k/n$, so there are more than $k$ stubs.
Make each stub a separate ``group'', then repeatedly merge two groups of the lowest value until there are exactly $k$ groups. 
Assign these $k$ groups to the $k$ agents; note that each group is connected by the vertex $v$. We claim that the resulting allocation is $2$-EF.

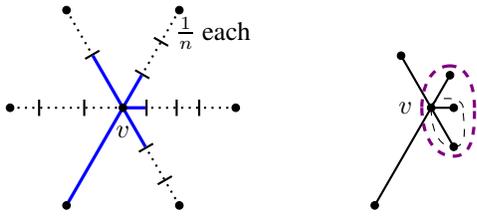
\begin{figure}[t]
\centering
  \begin{subfigure}[t]{.32\textwidth}
    \centering 
    \begin{tikzpicture}
    \usetikzlibrary{positioning}
    \node[label=below:$v$] at (0,0) (v) {};
    \node at (180:1.5cm) (v1) {};
    \node at (120:1.5cm) (v2) {};
    \node at (60:1.5cm) (v3) {};
    \node at (0:1.5cm) (v4) {};
    \node at (-60:1.5cm) (v5) {};
    \node at (-120:1.5cm) (v6) {};
    \node at (180:1.1cm) (x11) {};
    \node at (180:0.5cm) (x12) {};
    \node at (120:0.8cm) (x2) {};
    \node at (60:1.0cm) (x31) {};
    \node at (60:0.5cm) (x32) {};
    \node at (0:1.0cm) (x41) {};
    \node at (0:0.7cm) (x42) {};
    \node at (0:0.3cm) (x43) {};
    \node at (-60:1.1cm) (x51) {};
    \node at (-60:0.6cm) (x52) {};
    \draw [-|, dotted, thick]
        (v1.center) -- (x11.center);
    \draw [-|, dotted, thick]
        (x11.center) -- (x12.center);
    \draw [-, dotted, thick]
        (x12.center) -- (v.center);
    \draw [-|, dotted, thick]
        (v2.center) -- (x2.center);
    \draw [-|, dotted, thick]
        (v3.center) -- (x31.center)  node[pos=0.7, right]{$\frac{1}{n}$ each};
    \draw [-|, dotted, thick]
        (x31.center) -- (x32.center);
    \draw [-|, dotted, thick]
        (v4.center) -- (x41.center);
    \draw [-|, dotted, thick]
        (x41.center) -- (x42.center);
    \draw [-|, dotted, thick]
        (x42.center) -- (x43.center);
    \draw [-|, dotted, thick]
        (v5.center) -- (x51.center);
    \draw [-|, dotted, thick]
        (x51.center) -- (x52.center);
    \draw [very thick, blue]
        (v.center) -- (x2.center)
        (v.center) -- (x32.center)
        (v.center) -- (x43.center)
        (v.center) -- (x52.center)
        (v.center) -- (v6.center);
    \draw [black, fill=black]
        (v) circle [radius=0.05]
        (v1) circle [radius=0.05]
        (v2) circle [radius=0.05]
        (v3) circle [radius=0.05]
        (v4) circle [radius=0.05]
        (v5) circle [radius=0.05]
        (v6) circle [radius=0.05];
    \end{tikzpicture}
  \end{subfigure}
  \begin{subfigure}[t]{.32\textwidth}
    \centering 
    \begin{tikzpicture}
    \usetikzlibrary{positioning}
    \node[label=left:$v$] at (0,0) (v) {};
    \node at (180:1.5cm) (v1) {};
    \node at (120:1.5cm) (v2) {};
    \node at (60:1.5cm) (v3) {};
    \node at (0:1.5cm) (v4) {};
    \node at (-60:1.5cm) (v5) {};
    \node at (-120:1.5cm) (v6) {};
    \node at (180:1.0cm) (x11) {};
    \node at (180:0.6cm) (x12) {};
    \node at (120:0.8cm) (x2) {};
    \node at (60:1.0cm) (x31) {};
    \node at (60:0.5cm) (x32) {};
    \node at (0:1.1cm) (x41) {};
    \node at (0:0.7cm) (x42) {};
    \node at (0:0.3cm) (x43) {};
    \node at (-60:1.1cm) (x51) {};
    \node at (-60:0.6cm) (x52) {};
    \draw [black, fill=black]
        (v) circle [radius=0.05]
        (x2) circle [radius=0.05]
        (x32) circle [radius=0.05]
        (x43) circle [radius=0.05]
        (x52) circle [radius=0.05]
        (v6) circle [radius=0.05];
    \draw [thick]
        (v.center) -- (x2.center)
        (v.center) -- (x32.center)
        (v.center) -- (x43.center)
        (v.center) -- (x52.center)
        (v.center) -- (v6.center);
    \draw [dashed] 
        (v.center) to [out=-90,in=180] (x52.center)
        (x52.center) to [out=0,in=-90] (x43.east)
        (x43.east) to [out=90,in=90] (v.center);  
    \draw [dashed, very thick, violet] 
        (v.west) to [out=-90,in=180] (x52.south)
        (x52.south) to [out=0,in=-90] (x42.west)
        (x42.west) to [out=90,in=0] (x32.north)
        (x32.north) to [out=180,in=90] (v.west);
    \end{tikzpicture}
  \end{subfigure}
\caption{(Left) Segments of value $1/n$ each (dotted lines) are allocated to the agents. The remaining stubs (solid lines) have value less than $1/n$ each. (Right) Two groups of the lowest value are merged together repeatedly. The figure shows three final groups: one group of three stubs on the right and two groups of one stub each on the left.} \label{fig:stub}
\end{figure}

\begin{thm} \label{thm:idenvalue_star}
Given an instance of graphical cake cutting consisting of a star graph with $m$ edges and $n$ agents with identical valuations, there exists an algorithm that computes a $2$-EF allocation in time polynomial in $n$ and $m$.
\end{thm}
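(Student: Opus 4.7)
The plan is to analyze the algorithm in the two phases separately and then combine. In Phase~1, each served agent receives a segment of value exactly $1/n$; suppose $p$ agents are so served, leaving $k = n-p$ agents and a star whose remaining edges (``stubs'') each have value strictly less than $1/n$ and sum to $k/n$. If $k \in \{0,1\}$ every agent ends up with value exactly $1/n$ (in the $k=1$ case, the sole remaining agent inherits the full remainder), so the allocation is envy-free and we are done. I would therefore assume $k \ge 2$ from here on.

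The first observation I would establish is that throughout the merging, every group has value strictly less than $2/n$. This is an immediate induction on merges: if the two smallest groups $A, B$ at some moment (with $\val{A} \le \val{B}$) were to merge into a group of value at least $2/n$, then $\val{B} \ge 1/n$; but because there are more than $k$ groups at this moment, at least $k-1$ other groups also have value at least $\val{B} \ge 1/n$, which forces the total value to exceed $(k+1)/n$, contradicting the Phase-2 total of $k/n$.

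The central and most delicate claim, which I expect to be the main technical obstacle, is that at the moment exactly $k+1$ groups remain (just before the final merge), with values $v_1 \le \cdots \le v_{k+1}$, we have $v_{k+1} \le v_1 + v_2$. I would split by whether the group of value $v_{k+1}$ is still an untouched stub or was created by a prior merge. In the first case $v_{k+1} < 1/n$, and from $\sum_{i=1}^{k+1} v_i = k/n$ together with $v_i \le v_{k+1}$ for $i \ge 3$ we get $v_1 + v_2 \ge k/n - (k-1)v_{k+1} > 1/n > v_{k+1}$. In the second case, the group was formed at some prior moment $T$ by merging two groups $A', B'$ with $\val{A'} \le \val{B'}$; since $A', B'$ were the two smallest at $T$, every other group present at $T$ had value at least $\val{B'} \ge v_{k+1}/2$. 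Because later merges only combine groups (never split them), every current group other than the one of value $v_{k+1}$ is a disjoint union of such ``$T$-era'' groups and therefore still has value at least $v_{k+1}/2$. In particular $v_1 \ge v_{k+1}/2$, and hence $v_1 + v_2 \ge 2 v_1 \ge v_{k+1}$.

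To finish, I would translate these two claims into the $2$-EF guarantee. After the last merge the Phase-2 shares take values $\{v_1+v_2, v_3, \ldots, v_{k+1}\}$, and the maximum is $v_1 + v_2 \le 2 v_3$ because $v_1, v_2 \le v_3$. A brief case analysis using the aggregate identity $\sum v_i = k/n$ shows that $v_3 \le 1/n$ always, and that whenever $v_1 + v_2 < 1/n$ one has $v_3 > 1/(2n)$. Combining with the Phase-1 shares of value $1/n$, the overall minimum is $v_3$ and the overall maximum is $\max(1/n, v_1 + v_2)$; in the case $v_1 + v_2 \ge 1/n$ the ratio is $(v_1+v_2)/v_3 \le 2$, and in the case $v_1 + v_2 < 1/n$ the ratio is at most $(1/n)/v_3 < 2$, giving $\mmax{\mathcal{A}} \le 2\,\mmin{\mathcal{A}}$ in every case. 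Connectivity of each share and a polynomial running time are straightforward, since each Phase-2 group is a union of stubs meeting at the center vertex $v$, and the algorithm performs at most $n$ cuts in Phase~1 and at most $m - k$ merges (say, via a priority queue) in Phase~2.
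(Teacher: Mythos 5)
Your proposal is correct and takes essentially the same route as the paper: both analyze the same greedy ``merge the two smallest groups'' algorithm and rest on the two facts that the final merged group is the maximum share with value less than $2/n$ and that every other share is worth at least half of it. The paper packages this slightly more economically---it observes that newly merged groups have nondecreasing values (so the last merged group $g^* = g_1 \cup g_2$ is the maximum share) and that any surviving group must be worth at least $\val{g_2} \geq \val{g^*}/2$, since otherwise it would have been chosen for the final merge---which corresponds exactly to your central claim $v_{k+1} \le v_1 + v_2$ combined with $v_1 + v_2 \le 2v_3$ and your $2/n$ bound.
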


\begin{proof}
The running time claim is clear.

If no agent remains after assigning segments of value $1/n$ (i.e., $k=0$), then the allocation is envy-free and hence $2$-EF.

Assume that $k\ge 1$.
The total value of the star of stubs is $k/n$, so we cannot end the process with $k$ groups of value less than $1/n$ each.
Therefore, at some point, a newly-merged group has value at least $1/n$. 
Two groups of the lowest value are always the ones selected for the merge, so the value of each newly-merged group is at least that of any previously-merged group.
Let $g^*$ be the final group formed by merging two (now extinct) groups $g_1$ and $g_2$, and assume without loss of generality that $\val{g_1} \leq \val{g_2}$.
Not only does $g^*$ have the maximum value across all \emph{groups}, but the fact that $\val{g^*} \geq 1/n$ implies that it also has the maximum value across all \emph{shares} (including the $1/n$-value segments from the initial process).
Furthermore, it follows that $\val{g^*} \leq 2/n$; otherwise, we would have $\val{g_2} > 1/n$, in which case each of the $k$ groups would have value greater than $1/n$ and the sum of their values could not be exactly $k/n$.

We shall prove that the allocation is $2$-EF. 
Since $g^*$ has the maximum value across all shares, it suffices to prove that every agent receives a share worth at least $\val{g^*}/2$.
Let agent $i$ be given. 
If agent $i$ receives some group $g$, then $g$ must have value at least $\val{g_2}$ (otherwise $g$ would have been chosen for the final merge with $g_1$ instead of $g_2$).
This means that $\val{g} \geq (\val{g_1} + \val{g_2})/2 = \val{g^*}/2$. 
Else, agent $i$ receives a $1/n$-value segment; then the segment has value at least $\val{g^*}/2$ since $\val{g^*} \leq 2/n$. 
It follows that the allocation is $2$-EF, as desired.
\end{proof}

\section{Beyond One Connected Piece} \label{sec:beyond}

In previous sections, we made the crucial assumption that each agent necessarily receives a connected piece of the graphical cake.
We now relax this assumption and explore improvements in envy-freeness guarantees if each agent is allowed to receive a small number of connected pieces. 
As mentioned in \Cref{subsec:related}, extensive research has been done on (approximate) envy-free connected allocations of an \emph{interval cake}; it will be interesting to draw connections between these results and graphical cake cutting.

As an example, consider a star graph with edges $e_k$ for $k \in \{1, \ldots, m\}$. 
Rearrange the edges to form a path graph with the edges $e_1$ to $e_m$ going from left to right; the farthest end of each edge from the center vertex in the star graph is oriented towards the \emph{right} on the path graph.
Note that any segment along the path graph corresponds to at most two connected pieces in the star graph; see \Cref{fig:psn} for an illustration.\footnote{A segment can start and end at points within an edge and span across different edges---it need not start and end at vertices.}
It follows that if a connected $\alpha$-EF (resp., $\alpha$-additive-EF) allocation of an interval cake can be found in polynomial time, then there also exists a polynomial-time algorithm that finds an $\alpha$-EF (resp., $\alpha$-additive-EF) allocation of the star cake where each agent receives at most \emph{two} connected pieces. 

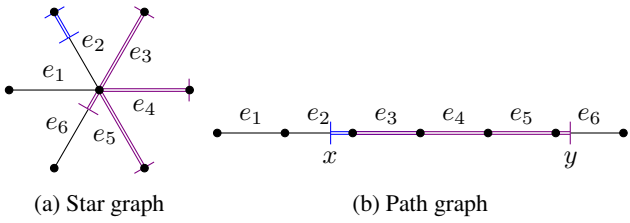
\begin{figure}[t]
\centering
  \begin{subfigure}[b]{.3\textwidth}
    \centering 
    \begin{tikzpicture}
    \usetikzlibrary{positioning}
    \node at (0,0) (v) {};
    \node at (180:1.2cm) (v1) {};
    \node at (120:1.2cm) (v2) {};
    \node at (60:1.2cm) (v3) {};
    \node at (0:1.2cm) (v4) {};
    \node at (-60:1.2cm) (v5) {};
    \node at (-120:1.2cm) (v6) {}; 
    \node at (120:0.8cm) (x2) {};
    \node at (-120:0.3cm) (x6) {};
    \draw [white]
        (v.center) -- (v1.center)  node[pos=0.5, above, black]{$e_1$} 
        (v.center) -- (v2.center)  node[pos=0.6, right, black]{$e_2$} 
        (v.center) -- (v3.center)  node[pos=0.45, right, black]{$e_3$} 
        (v.center) -- (v4.center)  node[pos=0.5, below, black]{$e_4$}
        (v.center) -- (v5.center)  node[pos=0.6, left, black]{$e_5$}
        (v.center) -- (v6.center)  node[pos=0.45, left, black]{$e_6$};
    \draw []
        (v.center) -- (v1.center)
        (v.center) -- (x2.center)
        (x6.center) -- (v6.center);
    \draw [|-|, double, blue]
        (x2.center) -- (v2.center);
    \draw [-|, double, violet]
        (v.center) -- (v3.center);
    \draw [-|, double, violet]
        (v.center) -- (v4.center);
    \draw [-|, double, violet]
        (v.center) -- (v5.center);
    \draw [-|, double, violet]
        (v.center) -- (x6.center);
    \draw [black, fill=black]
        (v) circle [radius=0.05]
        (v1) circle [radius=0.05]
        (v2) circle [radius=0.05]
        (v3) circle [radius=0.05]
        (v4) circle [radius=0.05]
        (v5) circle [radius=0.05]
        (v6) circle [radius=0.05];
    \end{tikzpicture}
    \caption{Star graph}
  \end{subfigure}
  \begin{subfigure}[b]{.5\textwidth}
    \centering 
    \begin{tikzpicture}
    \usetikzlibrary{positioning}
    \node at (180:2.7cm) (v0) {};
    \node at (180:1.8cm) (v1) {};
    \node at (180:0.9cm) (v2) {};
    \node at (0,0) (v3) {};
    \node at (0:0.9cm) (v4) {};
    \node at (0:1.8cm) (v5) {};
    \node at (0:2.7cm) (v6) {};
    \node[label=below:$x$] at (180:1.2cm) (x2) {};
    \node[label=below:$y$] at (0:2.0cm) (x6) {};
    \draw [white]
        (v0.center) -- (v1.center)  node[pos=0.5, above, black]{$e_1$}
        (v1.center) -- (v2.center)  node[pos=0.5, above, black]{$e_2$}
        (v2.center) -- (v3.center)  node[pos=0.5, above, black]{$e_3$}
        (v3.center) -- (v4.center)  node[pos=0.5, above, black]{$e_4$}
        (v4.center) -- (v5.center)  node[pos=0.5, above, black]{$e_5$}
        (v5.center) -- (v6.center)  node[pos=0.5, above, black]{$e_6$};
    \draw []
        (v0.center) -- (x2.center)
        (x6.center) -- (v6.center);
    \draw [|-, double, blue]
        (x2.center) -- (v3.center);
    \draw [-, double, violet]
        (v2.center) -- (v5.center);
    \draw [-|, double, violet]
        (v5.center) -- (x6.center);
    \draw [black, fill=black]
        (v0) circle [radius=0.05]
        (v1) circle [radius=0.05]
        (v2) circle [radius=0.05]
        (v3) circle [radius=0.05]
        (v4) circle [radius=0.05]
        (v5) circle [radius=0.05]
        (v6) circle [radius=0.05];
    \end{tikzpicture}
    \caption{Path graph}
  \end{subfigure}
\caption{(a) A star graph with six edges and (b) its corresponding path graph. A segment $[x, y]$ along the path graph (double lines) corresponds to at most two connected pieces in the star graph.} \label{fig:psn}
\end{figure}

We now consider an arbitrary graph $G$. 
Let $\phi$ be a bijection of the edges of $G$ onto the edges of a path graph with the same number of edges as $G$, where the orientation of the edges along the path can be chosen.
By an abuse of notation, let $\phi(G)$ be the corresponding path graph.
Define the \emph{path similarity number of $G$ associated with $\phi$}, denoted by $\psn{G, \phi}$, as the smallest number $k$ such that any segment along $\phi(G)$ corresponds to at most $k$ connected pieces in $G$.
Any results pertaining to connected allocations of an interval cake\footnote{For instance, \citet{BarmanKu22} presented a polynomial-time algorithm that computes a (roughly) $1/4$-additive-EF and $2$-EF allocation of an interval cake.} can be directly applied to allocations of $G$ in which each agent receives at most $\psn{G, \phi}$ connected pieces.

\begin{prop}
\label{prop:bijection}
Let $G$ be a graph and $\phi$ be a bijection of the edges of $G$ onto $\phi(G)$ with orientation such that $\phi$ can be found in polynomial time. If there exists a polynomial-time algorithm that computes a connected $\alpha$-EF allocation of an interval cake, then there exists a polynomial-time algorithm that computes an $\alpha$-EF allocation of a graphical cake $G$ where each agent receives at most $\psn{G, \phi}$ connected pieces. An analogous statement holds for $\alpha$-additive-EF.
\end{prop}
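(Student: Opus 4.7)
The plan is to reduce the problem on $G$ to the problem on the interval cake $\phi(G)$ by transferring valuations along $\phi$, running the hypothesized interval algorithm, and then pulling the resulting allocation back to $G$. First I would construct an instance on $\phi(G)$ by defining, for each agent $i$, a valuation $\mu_i^\phi$ on $\phi(G)$ as follows: since each edge $e' = \phi(e)$ of $\phi(G)$ is isomorphic to the edge $e$ of $G$ under $\phi$ (with the chosen orientation), set $\mu_i^\phi(I') := \mu_i(\phi^{-1}(I'))$ for every interval $I'$ of $\phi(G)$ and extend additively. Because $\phi$ is a bijection on the edges, $\mu_i^\phi$ inherits normalization, divisibility, and additivity from $\mu_i$, so $(\phi(G), N, (\mu_i^\phi)_{i \in N})$ is a valid interval-cake instance.

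Next, I would run the assumed polynomial-time algorithm for connected $\alpha$-EF allocations of an interval cake on this instance to obtain shares $B_1, \ldots, B_n$, each of which is a connected subinterval of $\phi(G)$, such that $\mu_i^\phi(B_i) \geq \mu_i^\phi(B_j)/\alpha$ for all $i,j \in N$. Then I would define the pulled-back allocation on $G$ by $A_i := \phi^{-1}(B_i)$ for every $i \in N$. By the defining property of the path similarity number, any connected subinterval of $\phi(G)$ pulls back to at most $\psn{G,\phi}$ connected pieces in $G$; hence each $A_i$ consists of at most $\psn{G,\phi}$ connected pieces. The $A_i$ are pairwise disjoint and cover $G$ (up to a finite set of points) because the $B_i$ partition $\phi(G)$ and $\phi$ is a bijection on the edges.

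Finally, I would verify fairness: by construction $\mu_i(A_j) = \mu_i^\phi(B_j)$ for all $i,j$, so the $\alpha$-EF guarantee on $\phi(G)$ transfers verbatim to $G$, giving $\mu_i(A_i) \geq \mu_i(A_j)/\alpha$. The running time is polynomial because computing $\phi$, translating the valuation queries (a \textsc{Cut} or \textsc{Eval} query on $\phi(G)$ is answered by a single \textsc{Cut} or \textsc{Eval} query on $G$ via $\phi$, possibly together with a bookkeeping step to locate the appropriate edge), running the interval algorithm, and pulling back the shares through $\phi^{-1}$ are each polynomial. The analogous statement for $\alpha$-additive-EF follows by the same argument with the inequality $\mu_i(A_i) \geq \mu_i(A_j) - \alpha$ in place of the multiplicative one.

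The only non-routine point is the careful bookkeeping at edges that are split by $\phi$'s induced orientation: I would need to confirm that when an interval $B_j$ of $\phi(G)$ crosses a vertex of $\phi(G)$, the corresponding pieces in $G$ meet the definition of connectedness in $G$ (or split into at most $\psn{G,\phi}$ connected pieces), which is precisely what the definition of $\psn{G,\phi}$ encodes; once this correspondence is set up cleanly, everything else is a direct verification.
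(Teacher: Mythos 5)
Your proposal is correct and matches the paper's (largely implicit) argument: the paper likewise defines the induced valuations on $\phi(G)$, runs the interval-cake algorithm there, and pulls each connected interval back to at most $\psn{G,\phi}$ connected pieces of $G$ with values preserved, so the envy bound transfers directly. The only difference is that you spell out the valuation transfer, query simulation, and disjointness/coverage checks explicitly, which the paper leaves to the reader.
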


To complement \Cref{prop:bijection}, we provide upper bounds of $\psn{G, \phi}$ where $\phi$ can be computed in polynomial time.

\begin{thm} \label{thm:tree}
Let $G$ be a tree of height $h$. Then there exists a bijection $\phi$ that can be computed in polynomial time such that $\psn{G, \phi} \leq h + 1$.
\end{thm}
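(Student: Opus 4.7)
The plan is to take $\phi$ to be the edge order induced by a pre-order depth-first search of $G$ rooted at the vertex realizing height~$h$: label the edges $f_1, \ldots, f_m$ in the order they are first traversed, and orient each path-edge so that its parent-side endpoint (the one closer to the root in $G$) is on the left. The DFS is polynomial, so $\phi$ is polynomial-time computable.

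To bound $\psn(G, \phi)$, I would consider an arbitrary segment $[x, y]$ on $\phi(G)$ with $x \in f_i$, $y \in f_j$, and $i \le j$. Its preimage $U \subseteq G$ is the union of a child-side portion of $f_i$ (containing the child endpoint), the full edges $f_{i+1}, \ldots, f_{j-1}$, and a parent-side portion of $f_j$ (containing the parent endpoint). I would then aim to show that $U$ has at most $h+1$ connected components by arguing that the topmost (minimum-depth) vertices of distinct components of $U$ must sit at distinct depths in $G$; since depths range over $\{0, 1, \ldots, h\}$, the bound on $\psn$ follows.

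The core argument is a contradiction via LCA. Assume two components $C_1 \ne C_2$ of $U$ have topmost vertices $u_1 \ne u_2$ at the same depth $d$, and let $w = \mathrm{LCA}(u_1, u_2)$ at depth $d' < d$. Without loss of generality, the DFS visits $u_1$'s subtree before $u_2$'s. Since $u_1 \in C_1$, the component $C_1$ contains an edge-piece incident to $u_1$ at some DFS position $\ell \in [i, j]$; a short case check over the three possibilities (a full edge, the boundary partial $e_i'$, the boundary partial $e_j'$) gives $\ell \le t_1^{\mathrm{end}}$, the end-time of $u_1$'s subtree in the DFS, with the $e_j'$ configuration ruled out because it would force $t_1^{\mathrm{end}} \ge j$ and contradict $u_1$'s subtree being visited first. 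Analogously, $C_2$ contains an edge-piece incident to $u_2$ at position $\ell' \in [i, j]$ with $\ell' > t_2$, where $t_2$ is the first-entry time of $u_2$; the analogous boundary case via $e_i'$ is excluded because it would force $t_2 \le i$.

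After finishing $u_1$'s subtree, the DFS backtracks to $w$ and descends along the unique $w$-to-$u_2$ path of length $d-d'$, visiting those edges at DFS times $s_1 < \cdots < s_{d-d'} = t_2$. The bounds $s_1 > t_1^{\mathrm{end}} \ge i$ and $s_{d-d'} = t_2 < \ell' \le j$ place every $s_k$ strictly inside $[i, j]$, so each corresponding path-edge is a \emph{full} edge in $U$. This chain of full edges links $w$ to $u_2$ inside $U$, placing $w \in C_2$; since $w$ has depth $d' < d$, this contradicts that $u_2$ is topmost in $C_2$. The hard part of the proof will be the boundary-case bookkeeping that certifies each $s_k$ is strictly inside $[i, j]$; once that is in hand, distinct components occupy distinct top-depths and the bound $\psn(G, \phi) \le h + 1$ follows.
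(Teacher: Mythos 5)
Your proposal is correct, and it reaches the bound by a genuinely different route than the paper. You use the same bijection $\phi$ (depth-first edge order with the child-side endpoint oriented rightwards), but where the paper argues constructively---it peels the segment $[x,y]$ from the right into contiguous subsegments $[x,z_k],[z_k,z_{k-1}],\dots,[z_1,y]$, each mapping to a connected piece, with the cut points $z_1,\dots,z_k$ corresponding to a chain of non-leaf vertices in ancestor--descendant relation, so $k\le h$---you instead bound the number of connected components of the preimage $U$ directly, by showing that the map sending each component to the depth of its topmost vertex is injective into $\{0,1,\dots,h\}$. Your injectivity argument is sound: if two components had topmost vertices $u_1\ne u_2$ at equal depth, the DFS-interval inequalities $s_1>t_1^{\mathrm{end}}\ge i$ and $s_{d-d'}=t_2\le j$ force the entire $w$-to-$u_2$ path to consist of full edges of $U$, merging the components through $w=\mathrm{LCA}(u_1,u_2)$. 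I checked that the boundary cases you flag do close, including the corner case $t_2=j$, where the piece of $f_j$ containing $u_2$ exists only if $y$ is the child endpoint, in which case $f_j$ lies entirely in $U$ and the chain argument still goes through. One small imprecision: you write $\ell'>t_2$ for the edge-piece witnessing $u_2\in U$, but every edge incident to $u_2$ has DFS position at least $t_2$, with equality attained by the parent edge, so the correct statement is $t_2\le\ell'\le j$; this still yields $t_2\le j$ and the rest proceeds as you describe. The trade-off between the two proofs: the paper's peeling argument is shorter and immediately exhibits a decomposition of $[x,y]$ into at most $h+1$ contiguous subsegments whose images are connected, with connectivity following directly from the DFS structure, whereas your argument needs the more delicate bookkeeping around the two partial edges $f_i$ and $f_j$ but is a clean pigeonhole on depths that never has to identify the decomposition points explicitly; both yield $\psn{G,\phi}\le h+1$.
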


\begin{proof}
Define $\phi(G)$ as follows: arrange the edges from left to right according to their appearance in a depth-first search of $G$, and orient their directions so that the farthest point of each edge from the root vertex in $G$ appears towards the \emph{right} of the edge in $\phi(G)$. Clearly, $\phi$ can be computed in polynomial time. We claim that any segment along $\phi(G)$ corresponds to at most $h + 1$ connected pieces in $G$.

Let $[x, y]$ be a segment along $\phi(G)$, and consider its corresponding piece(s) in $G$---see \Cref{fig:psn_thm52} for an illustration. If $[x, y]$ corresponds to a path in $G$, then this path is a connected piece in $G$. Otherwise, as we traverse $[x, y]$ from left to right in $\phi(G)$, the corresponding traversal in $G$ will eventually return to some ancestor vertex before another branch in $G$ is searched; this process may be repeated. Therefore, there exists some point in $[x, y]$ that corresponds to a vertex $v$ in $G$ such that $[x, y]$ corresponds to a subgraph of the subtree rooted at $v$ (for example, in \Cref{fig:psn_thm52}, $v$ is the root of the tree $G$). Let $z_1 \in [x, y]$ be the leftmost point that corresponds to vertex~$v$ (in \Cref{fig:psn_thm52}, only one point in $[x,y]$ corresponds to $v$). Note that $[z_1, y]$ corresponds to a connected piece in $G$---this is because $z_1$ corresponds to the root vertex of the subtree and a depth-first search from $z_1$ to $y$ ensures that the subgraph is connected. 

By repeating this process on $[x, z_i]$, we can find a sequence of points $z_1, \ldots, z_k$ in $[x, y]$ such that $[x, z_k], [z_k, z_{k-1}], \ldots, [z_2, z_1], [z_1, y]$ each corresponds to a connected piece in $G$---note that there are at most $k + 1$ connected pieces. Furthermore, $z_1, \ldots, z_k$ correspond to a chain of vertices with an ancestor-descendant relationship, i.e., $z_i$ is an ancestor of $z_j$ for $i < j$ in the corresponding graph $G$. Since none of the $z_i$'s correspond to leaf vertices, the chain has at most $h$ vertices, i.e., $k \leq h$. Therefore, there are at most $h + 1$ connected pieces in $G$.
\end{proof}

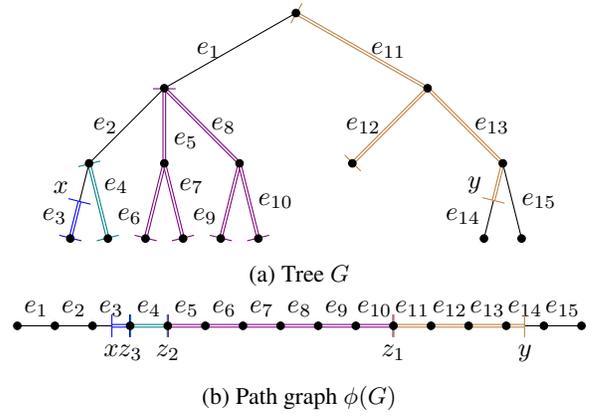
\begin{figure}[t]
\centering
  \begin{subfigure}[b]{.45\textwidth}
    \centering 
    \begin{tikzpicture}[scale=0.9]
    \usetikzlibrary{positioning}
    \node at (0,0) (v0) {};
    \node at (-1.75cm,-1cm) (v1) {};
    \node at (-2.75cm,-2cm) (v2) {};
    \node at (-3.0cm,-3cm) (v3) {};
    \node at (-2.5cm,-3cm) (v4) {};
    \node at (-1.75cm,-2cm) (v5) {};
    \node at (-2.0cm,-3cm) (v6) {};
    \node at (-1.5cm,-3cm) (v7) {};
    \node at (-0.75cm,-2cm) (v8) {};
    \node at (-1.0cm,-3cm) (v9) {};
    \node at (-0.5cm,-3cm) (v10) {};
    \node at (1.75cm,-1cm) (v11) {};
    \node at (0.75cm,-2cm) (v12) {};
    \node at (2.75cm,-2cm) (v13) {};
    \node at (2.5cm,-3cm) (v14) {};
    \node at (3.0cm,-3cm) (v15) {};
    \node[label={[xshift=-0.25cm,yshift=-0.15cm]\footnotesize $x$}] at (-2.875cm,-2.5cm) (x) {};
    \node[label={[xshift=-0.25cm,yshift=-0.15cm]\footnotesize $y$}] at (2.625cm,-2.5cm) (y) {};
    \draw [white]
        (v0.center) -- (v1.center)  node[pos=0.5, left, black]{\footnotesize $e_1$}
        (v1.center) -- (v2.center)  node[pos=0.5, left, black]{\footnotesize $e_2$}
        (v2.center) -- (v3.center)  node[pos=0.7, left, black]{\footnotesize $e_3$}
        (v2.center) -- (v4.center)  node[pos=0.3, right, black]{\footnotesize $e_4$}
        (v1.center) -- (v5.center)  node[pos=0.7, right, black]{\footnotesize $e_5$}
        (v5.center) -- (v6.center)  node[pos=0.7, left, black]{\footnotesize $e_6$}
        (v5.center) -- (v7.center)  node[pos=0.3, right, black]{\footnotesize $e_7$}
        (v1.center) -- (v8.center)  node[pos=0.5, right, black]{\footnotesize $e_8$}
        (v8.center) -- (v9.center)  node[pos=0.7, left, black]{\footnotesize $e_9$}
        (v8.center) -- (v10.center)  node[pos=0.5, right, black]{\footnotesize $e_{10}$}
        (v0.center) -- (v11.center)  node[pos=0.5, right, black]{\footnotesize $e_{11}$}
        (v11.center) -- (v12.center)  node[pos=0.5, left, black]{\footnotesize $e_{12}$}
        (v11.center) -- (v13.center)  node[pos=0.5, right, black]{\footnotesize $e_{13}$}
        (v13.center) -- (v14.center)  node[pos=0.7, left, black]{\footnotesize $e_{14}$}
        (v13.center) -- (v15.center)  node[pos=0.5, right, black]{\footnotesize $e_{15}$};
    \draw []
        (v0.center) -- (v1.center)
        (v1.center) -- (v2.center)
        (v2.center) -- (x.center)
        (y.center) -- (v14.center)
        (v13.center) -- (v15.center);
    \draw [|-|, double, blue]
        (x.center) -- (v3.center);
    \draw [|-|, double, teal]
        (v2.center) -- (v4.center);
    \draw [|-, double, violet]
        (v1.center) -- (v5.center);
    \draw [-, double, violet]
        (v1.center) -- (v8.center);
    \draw [-|, double, violet]
        (v5.center) -- (v6.center);
    \draw [-|, double, violet]
        (v5.center) -- (v7.center);
    \draw [-|, double, violet]
        (v8.center) -- (v9.center);
    \draw [-|, double, violet]
        (v8.center) -- (v10.center);
    \draw [|-, double, brown]
        (v0.center) -- (v11.center);
    \draw [-|, double, brown]
        (v11.center) -- (v12.center);
    \draw [-, double, brown]
        (v11.center) -- (v13.center);
    \draw [-|, double, brown]
        (v13.center) -- (y.center);
    \draw [black, fill=black]
        (v0) circle [radius=0.05]
        (v1) circle [radius=0.05]
        (v2) circle [radius=0.05]
        (v3) circle [radius=0.05]
        (v4) circle [radius=0.05]
        (v5) circle [radius=0.05]
        (v6) circle [radius=0.05]
        (v7) circle [radius=0.05]
        (v8) circle [radius=0.05]
        (v9) circle [radius=0.05]
        (v10) circle [radius=0.05]
        (v11) circle [radius=0.05]
        (v12) circle [radius=0.05]
        (v13) circle [radius=0.05]
        (v14) circle [radius=0.05]
        (v15) circle [radius=0.05];
    \end{tikzpicture}
    \caption{Tree $G$}
  \end{subfigure}
  \begin{subfigure}[b]{.45\textwidth}
    \centering 
    \begin{tikzpicture}[scale=0.9]
    \usetikzlibrary{positioning}
    \node at (0,0) (v0) {};
    \node at (0:0.5cm) (v1) {};
    \node at (0:1.0cm) (v2) {};
    \node[label=below:\footnotesize $z_3$] at (0:1.5cm) (v3) {};
    \node[label=below:\footnotesize $z_2$] at (0:2.0cm) (v4) {};
    \node at (0:2.5cm) (v5) {};
    \node at (0:3.0cm) (v6) {};
    \node at (0:3.5cm) (v7) {};
    \node at (0:4.0cm) (v8) {};
    \node at (0:4.5cm) (v9) {};
    \node[label=below:\footnotesize $z_1$] at (0:5.0cm) (v10) {};
    \node at (0:5.5cm) (v11) {};
    \node at (0:6.0cm) (v12) {};
    \node at (0:6.5cm) (v13) {};
    \node at (0:7.0cm) (v14) {};
    \node at (0:7.5cm) (v15) {};
    \node[label=below:\footnotesize $x$] at (0:1.25cm) (x) {};
    \node[label=below:\footnotesize $y$] at (0:6.75cm) (y) {};
    \draw [white]
        (v0.center) -- (v1.center)  node[pos=0.5, above, black]{\footnotesize $e_1$}
        (v1.center) -- (v2.center)  node[pos=0.5, above, black]{\footnotesize $e_2$}
        (v2.center) -- (v3.center)  node[pos=0.5, above, black]{\footnotesize $e_3$}
        (v3.center) -- (v4.center)  node[pos=0.5, above, black]{\footnotesize $e_4$}
        (v4.center) -- (v5.center)  node[pos=0.5, above, black]{\footnotesize $e_5$}
        (v5.center) -- (v6.center)  node[pos=0.5, above, black]{\footnotesize $e_6$}
        (v6.center) -- (v7.center)  node[pos=0.5, above, black]{\footnotesize $e_7$}
        (v7.center) -- (v8.center)  node[pos=0.5, above, black]{\footnotesize $e_8$}
        (v8.center) -- (v9.center)  node[pos=0.5, above, black]{\footnotesize $e_9$}
        (v9.center) -- (v10.center)  node[pos=0.5, above, black]{\footnotesize $e_{10}$}
        (v10.center) -- (v11.center)  node[pos=0.5, above, black]{\footnotesize $e_{11}$}
        (v11.center) -- (v12.center)  node[pos=0.5, above, black]{\footnotesize $e_{12}$}
        (v12.center) -- (v13.center)  node[pos=0.5, above, black]{\footnotesize $e_{13}$}
        (v13.center) -- (v14.center)  node[pos=0.5, above, black]{\footnotesize $e_{14}$}
        (v14.center) -- (v15.center)  node[pos=0.5, above, black]{\footnotesize $e_{15}$};
    \draw []
        (v0.center) -- (x.center)
        (y.center) -- (v15.center);
    \draw [|-|, double, blue]
        (x.center) -- (v3.center);
    \draw [|-|, double, teal]
        (v3.center) -- (v4.center);
    \draw [|-|, double, violet]
        (v4.center) -- (v10.center);
    \draw [|-|, double, brown]
        (v10.center) -- (y.center);
    \draw [black, fill=black]
        (v0) circle [radius=0.05]
        (v1) circle [radius=0.05]
        (v2) circle [radius=0.05]
        (v3) circle [radius=0.05]
        (v4) circle [radius=0.05]
        (v5) circle [radius=0.05]
        (v6) circle [radius=0.05]
        (v7) circle [radius=0.05]
        (v8) circle [radius=0.05]
        (v9) circle [radius=0.05]
        (v10) circle [radius=0.05]
        (v11) circle [radius=0.05]
        (v12) circle [radius=0.05]
        (v13) circle [radius=0.05]
        (v14) circle [radius=0.05]
        (v15) circle [radius=0.05];
    \end{tikzpicture}
    \caption{Path graph $\phi(G)$}
  \end{subfigure}
\caption{(a) A tree $G$ of height $3$ and (b) its corresponding path graph $\phi(G)$ based on a depth-first search of $G$. A segment $[x, y]$ along the path graph (double lines) corresponds to at most \emph{four} connected pieces in the tree.} \label{fig:psn_thm52}
\end{figure}

\begin{thm} \label{thm:diameter}
Let $G$ be a connected graph, and let $d$ be the diameter of a spanning tree of $G$ with the minimum diameter. Then there exists a bijection $\phi$ that can be computed in polynomial time such that $\psn{G, \phi} \leq \left\lceil d/2 \right\rceil + 2$.
\end{thm}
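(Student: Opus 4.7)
The plan is to reduce to the tree case (Theorem~\ref{thm:tree}) and then insert the remaining edges into the linear order so that they contribute at most one additional connected piece to any segment.

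First I would compute, in polynomial time via a classical minimum-diameter spanning tree algorithm (e.g., Hassin--Tamir), a spanning tree $T$ of $G$ of diameter $d$, and root $T$ at its center so that $T$ has height $h = \lceil d/2 \rceil$. Applying the construction of Theorem~\ref{thm:tree} to $T$ with this root linearly orders the tree edges via a depth-first search from the root, orienting each tree edge so that its child-side lies on the right of $\phi(G)$. I would extend this ordering to all of $E(G)$ as follows: for every non-tree edge $e = (u,v) \in E(G)\setminus E(T)$, let $u$ be the endpoint of $e$ at strictly greater depth in $T$ (ties broken arbitrarily), and insert $e$ immediately after the tree edge $(p(u),u)$ in the linear order. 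The resulting bijection $\phi\colon E(G) \to E(\phi(G))$ is clearly computable in polynomial time.

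To bound $\psn{G,\phi}$, I would fix an arbitrary segment $[x,y]$ of $\phi(G)$, let $H \subseteq G$ be the corresponding subgraph, and split $H = T_S \cup N_S$ with $T_S = H \cap E(T)$ and $N_S = H \setminus E(T)$. Theorem~\ref{thm:tree} applied to $T$ bounds the number of connected components of $T_S$ inside $T$ by $h + 1$. The crux is then to show that $N_S$ contributes at most one further component to $H$. The ordering $\phi$ has a natural block structure: for each tree edge $t_u = (p(u),u)$, the non-tree edges inserted immediately after $t_u$ all share the vertex $u$, and this block is then followed by the next tree edge of the DFS order. Consequently, for any block that lies strictly inside the segment, $t_u \in T_S$, which places $u$ into the vertex set of $T_S$ and attaches every non-tree edge of that block to the tree component containing $t_u$. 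The only block for which $t_u$ may fail to lie in $T_S$ is the one in which the segment begins: there all non-tree edges of $N_S$ from that block still share $u$, so together they contribute at most one new component (which may itself merge with a subsequent tree component through $u$, if $u$ has children and the segment reaches $(u,c_1)$, or through one of the other endpoints). A symmetric analysis shows that the ``final'' block cannot independently produce a second orphan component, because including any of its non-tree edges in the segment forces the opening tree edge of that block into $T_S$ as well. Combining these observations gives $|\text{components}(H)| \le (h+1) + 1 = \lceil d/2 \rceil + 2$.

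The main obstacle I expect is the careful boundary case analysis: tracking all subcases of whether $u$ is a leaf, handling the partial edges at the two endpoints of $[x,y]$ and how they contribute vertices to the vertex set of $T_S$, and verifying that the argument does not require $T$ to itself be a DFS tree of $G$---the DFS is run only on $T$, and the block structure of $\phi$ is defined purely from the linear ordering, so non-tree edges whose endpoints are in an ancestor--descendant relation are treated uniformly with cross edges. Once these details are pinned down, the tree estimate from Theorem~\ref{thm:tree} together with the at-most-one-orphan-block observation yield the claimed bound of $\lceil d/2 \rceil + 2$.
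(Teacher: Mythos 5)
Your construction is sound (indeed, your ordering is exactly a depth-first ordering of the auxiliary tree the paper builds, with pendant edges visited first), but the key accounting step in your analysis has a genuine gap. You claim that the non-tree edges $N_S$ contribute at most one further connected piece, with the only problematic block being the one in which the segment begins, "since all non-tree edges of $N_S$ from that block still share $u$." This is false when the segment starts strictly inside a non-tree edge of that block: the included portion of that edge need not contain $u$ at all (it contains only the side of the far endpoint $v$), so it can form its own orphan piece, while the remaining full non-tree edges of the block, anchored at $u$, can form a second orphan piece whenever $u$ is a leaf of $T$ (so that no subsequent tree edge of the segment contains $u$) and $v$ and the other far endpoints lie in parts of the tree ordered before $x$. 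Choosing the opposite orientation for non-tree edges only moves the dangling-partial problem to the $y$-end, so your "symmetric analysis" of the final block does not dispose of it either. Thus your count can be $(h+1)+2$, exceeding the claimed bound. The bound is still true, but rescuing your route requires an extra compensation lemma you neither state nor prove: when the segment begins inside a non-tree block, the first \emph{tree} edge of the segment is fully included, and for such segments the argument of \Cref{thm:tree} yields at most $h$ (not $h+1$) tree pieces, because the chain of vertices $z_1,\dots,z_k$ then consists of ancestors of the left endpoint's vertex and the final piece degenerates.

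The paper avoids this boundary analysis entirely: it attaches each non-tree edge $e=[u,v]$ as a new pendant leaf edge $[u,v']$ hanging from $u$, obtaining a tree $T'$ of height at most $\lceil d/2\rceil + 1$ whose connected pieces correspond to connected pieces of $G$, and then simply applies \Cref{thm:tree} to $T'$, giving $\psn{G,\phi''\circ\phi'} \leq \psn{T',\phi''} \leq \lceil d/2\rceil + 2$ with all endpoint cases already handled by that theorem's proof. Since your ordering is a special case of the paper's, the cleanest fix is to recast your argument in that form rather than to patch the component count by hand.
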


\begin{proof}
A spanning tree $T$ of $G$ with the minimum diameter, and a vertex $r$ that minimizes its maximum distance to any vertex in $T$, can be found in polynomial time \citep{HassinTa95}.\footnote{$r$ is known as an \emph{absolute $1$-center of $G$} \citep{HassinTa95}.} Let $T$ be rooted at $r$.
We claim that the height of $T$ is at most $\left\lceil d/2 \right\rceil$.
To see this, let $v_1,\dots,v_k$ be the children of $r$, and assume for contradiction that some node $v$ is of distance greater than $\left\lceil d/2 \right\rceil$ from $r$; without loss of generality, suppose $v$ is a descendant of $v_1$.
If there exists a descendant $w$ of some node in $\{v_2,\dots,v_k\}$ of distance at least $\left\lceil d/2 \right\rceil$ from $r$, then the path from $v$ to $w$ must pass through $r$ since $T$ is a tree, and the distance between $v$ and $w$ is greater than $\left\lceil d/2 \right\rceil + \left\lceil d/2 \right\rceil \ge d$, contradicting the fact that $d$ is the diameter of $T$.
Therefore, every descendant of $v_2,\dots,v_k$ must be of distance less than $\left\lceil d/2 \right\rceil$ from $r$.
It follows that $r$ is not a node that minimizes its maximum distance to any vertex in $T$, because $v_1$ reduces this quantity by one.
This contradicts the definition of $r$.
Hence, the height of $T$ is at most $\left\lceil d/2 \right\rceil$.

Now, construct the tree $T'$ as follows: start with all vertices and edges in $T$ and with root vertex $r$; then for each edge $e = [u, v]$ in $G \setminus T$, add a new leaf vertex $v'$ as a child of $u$ in $T'$, where the edge $[u, v']$ in $T'$ corresponds to the edge $e$ in $G$. 
If $v$ appears multiple times across different edges in $G \setminus T$, then a \emph{new} leaf vertex is created each time.
The constructed graph $T'$ is a tree with corresponding edges in $G$, and has height at most one more than the height of $T$; furthermore, any connected piece in $T'$ corresponds to a connected piece in $G$. 
Let $\phi'$ be the corresponding bijection of the edges from $G$ to $T'$.
By \Cref{thm:tree} applied to $T'$ with height $h \le \left\lceil d/2 \right\rceil + 1$, there exists a bijection $\phi''$ that can be computed in polynomial time such that $\psn{T', \phi''} \leq \left\lceil d/2 \right\rceil + 2$.
Then, we have $\psn{G, \phi'' \circ \phi'} \leq \psn{T', \phi''} \leq \left\lceil d/2 \right\rceil + 2$, as desired.
\end{proof}

\section{Conclusion and Future Work} \label{sec:conclusion}

In this paper, we have studied the existence and computation of approximately envy-free allocations in graphical cake cutting.
For general graphs, we devised polynomial-time algorithms for computing a $1/2$-additive-envy-free allocation in the case of non-identical valuations and a $(2+\epsilon)$-envy-free allocation in the case of identical valuations.
For star graphs, our efficient algorithms provide a multiplicative envy factor of $3+\epsilon$ for non-identical valuations and $2$ for identical valuations.
Our bounds in the case of identical valuations are (essentially) tight.
We also explored envy-freeness guarantees when the connectivity assumption is relaxed, through the notion of path similarity number.

An interesting question left open by our work is whether a connected allocation with a constant multiplicative approximation of envy-freeness can be guaranteed for general graphs and non-identical valuations---the techniques that we developed for star graphs (\Cref{subsec:nonidenvalue-star}) do not seem sufficient for answering this question.
If a non-connected allocation is allowed, then tightening the bounds for the path similarity number (\Cref{sec:beyond}) will reduce the number of connected pieces each agent receives to achieve the same envy-freeness approximation.
Another intriguing direction for non-connected allocations is whether we can obtain improved envy bounds using a different approach than converting the graph into a path.

\section*{Acknowledgments}

This work was partially supported by the Singapore Ministry of Education under grant number MOE-T2EP20221-0001 and by an NUS Start-up Grant.
We thank the anonymous IJCAI 2023 and Discrete Applied Mathematics reviewers for their valuable feedback.

\bibliographystyle{plainnat}
\bibliography{main}

\begin{thebibliography}{42}
\providecommand{\natexlab}[1]{#1}
\providecommand{\url}[1]{\texttt{#1}}
\expandafter\ifx\csname urlstyle\endcsname\relax
  \providecommand{\doi}[1]{doi: #1}\else
  \providecommand{\doi}{doi: \begingroup \urlstyle{rm}\Url}\fi

\bibitem[Abebe et~al.(2017)Abebe, Kleinberg, and Parkes]{AbebeKlPa17}
Rediet Abebe, Jon Kleinberg, and David~C. Parkes.
\newblock Fair division via social comparison.
\newblock In \emph{Proceedings of the 16th Conference on Autonomous Agents and
  MultiAgent Systems (AAMAS)}, pages 281--289, 2017.

\bibitem[Arunachaleswaran et~al.(2019)Arunachaleswaran, Barman, Kumar, and
  Rathi]{ArunachaleswaranBaRa19}
Eshwar~Ram Arunachaleswaran, Siddharth Barman, Rachitesh Kumar, and Nidhi
  Rathi.
\newblock Fair and efficient cake division with connected pieces.
\newblock In \emph{Proceedings of the 15th Conference on Web and Internet
  Economics (WINE)}, pages 57--70, 2019.
\newblock Extended version available at CoRR abs/1907.11019v4.

\bibitem[Aumann and Dombb(2015)]{AumannDo15}
Yonatan Aumann and Yair Dombb.
\newblock The efficiency of fair division with connected pieces.
\newblock \emph{ACM Transactions on Economics and Computation}, 3\penalty0
  (4):\penalty0 23:1--23:16, 2015.

\bibitem[Aziz and Mackenzie(2016)]{AzizMa16}
Haris Aziz and Simon Mackenzie.
\newblock A discrete and bounded envy-free cake cutting protocol for any number
  of agents.
\newblock In \emph{Proceedings of the 57th Annual Symposium on Foundations of
  Computer Science (FOCS)}, pages 416--427, 2016.

\bibitem[Barman and Kulkarni(2023)]{BarmanKu22}
Siddharth Barman and Pooja Kulkarni.
\newblock Approximation algorithms for envy-free cake division with connected
  pieces.
\newblock In \emph{Proceedings of the 50th International Colloquium on
  Automata, Languages, and Programming (ICALP)}, pages 16:1--16:19, 2023.

\bibitem[Bei and Suksompong(2021)]{BeiSu21}
Xiaohui Bei and Warut Suksompong.
\newblock Dividing a graphical cake.
\newblock In \emph{Proceedings of the 35th AAAI Conference on Artificial
  Intelligence (AAAI)}, pages 5159--5166, 2021.

\bibitem[Bei et~al.(2012)Bei, Chen, Hua, Tao, and Yang]{BeiChHu12}
Xiaohui Bei, Ning Chen, Xia Hua, Biaoshuai Tao, and Endong Yang.
\newblock Optimal proportional cake cutting with connected pieces.
\newblock In \emph{Proceedings of the 26th AAAI Conference on Artificial
  Intelligence (AAAI)}, pages 1263--1269, 2012.

\bibitem[Bei et~al.(2017)Bei, Qiao, and Zhang]{BeiQiZh17}
Xiaohui Bei, Youming Qiao, and Shengyu Zhang.
\newblock Networked fairness in cake cutting.
\newblock In \emph{Proceedings of the 26th International Joint Conference on
  Artificial Intelligence (IJCAI)}, pages 3632--3638, 2017.

\bibitem[Bei et~al.(2020)Bei, Sun, Wu, Zhang, Zhang, and Zi]{BeiSuWu20}
Xiaohui Bei, Xiaoming Sun, Hao Wu, Jialin Zhang, Zhijie Zhang, and Wei Zi.
\newblock Cake cutting on graphs: a discrete and bounded proportional protocol.
\newblock In \emph{Proceedings of the 31st Annual ACM-SIAM Symposium on
  Discrete Algorithms (SODA)}, pages 2114--2123, 2020.

\bibitem[Bei et~al.(2022)Bei, Igarashi, Lu, and Suksompong]{BeiIgLu22}
Xiaohui Bei, Ayumi Igarashi, Xinhang Lu, and Warut Suksompong.
\newblock The price of connectivity in fair division.
\newblock \emph{SIAM Journal on Discrete Mathematics}, 36\penalty0
  (2):\penalty0 1156--1186, 2022.

\bibitem[Bei et~al.(2024)Bei, Lam, Lu, and Suksompong]{BeiLaLu24}
Xiaohui Bei, Alexander Lam, Xinhang Lu, and Warut Suksompong.
\newblock Welfare loss in connected resource allocation.
\newblock In \emph{Proceedings of the 33rd International Joint Conference on
  Artificial Intelligence (IJCAI)}, 2024.
\newblock Forthcoming.

\bibitem[Bil\`{o} et~al.(2022)Bil\`{o}, Caragiannis, Flammini, Igarashi,
  Monaco, Peters, Vinci, and Zwicker]{BiloCaFl22}
Vittorio Bil\`{o}, Ioannis Caragiannis, Michele Flammini, Ayumi Igarashi,
  Gianpiero Monaco, Dominik Peters, Cosimo Vinci, and William~S. Zwicker.
\newblock Almost envy-free allocations with connected bundles.
\newblock \emph{Games and Economic Behavior}, pages 197--221, 2022.

\bibitem[Bouveret et~al.(2017)Bouveret, Cechl\'{a}rov\'{a}, Elkind, Igarashi,
  and Peters]{BouveretCeEl17}
Sylvain Bouveret, Katar\'{i}na Cechl\'{a}rov\'{a}, Edith Elkind, Ayumi
  Igarashi, and Dominik Peters.
\newblock Fair division of a graph.
\newblock In \emph{Proceedings of the 26th International Joint Conference on
  Artificial Intelligence (IJCAI)}, pages 135--141, 2017.

\bibitem[Brams and Taylor(1996)]{BramsTa96}
Steven~J. Brams and Alan~D. Taylor.
\newblock \emph{Fair Division: From Cake-Cutting to Dispute Resolution}.
\newblock Cambridge University Press, 1996.

\bibitem[Caragiannis et~al.(2022)Caragiannis, Micha, and
  Shah]{CaragiannisMiSh22}
Ioannis Caragiannis, Evi Micha, and Nisarg Shah.
\newblock A little charity guarantees fair connected graph partitioning.
\newblock In \emph{Proceedings of the 36th AAAI Conference on Artificial
  Intelligence (AAAI)}, pages 4908--4916, 2022.

\bibitem[Cechl\'{a}rov\'{a} and Pill\'{a}rov\'{a}(2012)]{CechlarovaPi12}
Katar\'{i}na Cechl\'{a}rov\'{a} and Eva Pill\'{a}rov\'{a}.
\newblock On the computability of equitable divisions.
\newblock \emph{Discrete Optimization}, 9\penalty0 (4):\penalty0 249--257,
  2012.

\bibitem[Cechl\'{a}rov\'{a} et~al.(2013)Cechl\'{a}rov\'{a}, Dobo\v{s}, and
  Pill\'{a}rov\'{a}]{CechlarovaDoPi13}
Katar\'{i}na Cechl\'{a}rov\'{a}, Jozef Dobo\v{s}, and Eva Pill\'{a}rov\'{a}.
\newblock On the existence of equitable cake divisions.
\newblock \emph{Information Sciences}, 228:\penalty0 239--245, 2013.

\bibitem[Chu et~al.(2010)Chu, Wu, Wang, and Chao]{ChuWuWa10}
An-Chiang Chu, Bang~Ye Wu, Hung-Lung Wang, and Kun-Mao Chao.
\newblock A tight bound on the min-ratio edge-partitioning problem of a tree.
\newblock \emph{Discrete Applied Mathematics}, 158\penalty0 (14):\penalty0
  1471--1478, 2010.

\bibitem[Chu et~al.(2013)Chu, Wu, and Chao]{ChuWuCh13}
An-Chiang Chu, Bang~Ye Wu, and Kun-Mao Chao.
\newblock A linear-time algorithm for finding an edge-partition with max-min
  ratio at most two.
\newblock \emph{Discrete Applied Mathematics}, 161\penalty0 (7--8):\penalty0
  932--943, 2013.

\bibitem[Deligkas et~al.(2021)Deligkas, Eiben, Ganian, Hamm, and
  Ordyniak]{DeligkasEiGa21}
Argyrios Deligkas, Eduard Eiben, Robert Ganian, Thekla Hamm, and Sebastian
  Ordyniak.
\newblock The parameterized complexity of connected fair division.
\newblock In \emph{Proceedings of the 30th International Joint Conference on
  Artificial Intelligence (IJCAI)}, pages 139--145, 2021.

\bibitem[Deligkas et~al.(2022)Deligkas, Eiben, Ganian, Hamm, and
  Ordyniak]{DeligkasEiGa22}
Argyrios Deligkas, Eduard Eiben, Robert Ganian, Thekla Hamm, and Sebastian
  Ordyniak.
\newblock The complexity of envy-free graph cutting.
\newblock In \emph{Proceedings of the 31st International Joint Conference on
  Artificial Intelligence (IJCAI)}, pages 237--243, 2022.

\bibitem[Dubins and Spanier(1961)]{DubinsSp61}
Lester~E. Dubins and Edwin~H. Spanier.
\newblock How to cut a cake fairly.
\newblock \emph{American Mathematical Monthly}, 68\penalty0 (1):\penalty0
  1--17, 1961.

\bibitem[Elkind et~al.(2021)Elkind, Segal-Halevi, and
  Suksompong]{ElkindSeSu21-Graph}
Edith Elkind, Erel Segal-Halevi, and Warut Suksompong.
\newblock Graphical cake cutting via maximin share.
\newblock In \emph{Proceedings of the 30th International Joint Conference on
  Artificial Intelligence (IJCAI)}, pages 161--167, 2021.

\bibitem[Gahlawat and Zehavi(2023)]{GahlawatZe23}
Harmender Gahlawat and Meirav Zehavi.
\newblock Parameterized complexity of incomplete connected fair division.
\newblock In \emph{Proceedings of the 43rd IARCS Annual Conference on
  Foundations of Software Technology and Theoretical Computer Science
  (FSTTCS)}, pages 14:1--14:18, 2023.

\bibitem[Ghalme et~al.(2023)Ghalme, Huang, Machino, and Rathi]{GhalmeHuMa22}
Ganesh Ghalme, Xin Huang, Yuka Machino, and Nidhi Rathi.
\newblock A discrete and bounded locally envy-free cake cutting protocol on
  trees.
\newblock In \emph{Proceedings of the 19th International Conference on Web and
  Internet Economics (WINE)}, pages 310--328, 2023.

\bibitem[Goldberg et~al.(2020)Goldberg, Hollender, and
  Suksompong]{GoldbergHoSu20}
Paul~W. Goldberg, Alexandros Hollender, and Warut Suksompong.
\newblock Contiguous cake cutting: Hardness results and approximation
  algorithms.
\newblock \emph{Journal of Artificial Intelligence Research}, 69:\penalty0
  109--141, 2020.

\bibitem[Hassin and Tamir(1995)]{HassinTa95}
Refael Hassin and Arie Tamir.
\newblock On the minimum diameter spanning tree problem.
\newblock \emph{Information Processing Letters}, 53\penalty0 (2):\penalty0
  109--111, 1995.

\bibitem[Igarashi and Peters(2019)]{IgarashiPe19}
Ayumi Igarashi and Dominik Peters.
\newblock Pareto-optimal allocation of indivisible goods with connectivity
  constraints.
\newblock In \emph{Proceedings of the 33rd AAAI Conference on Artificial
  Intelligence (AAAI)}, pages 2045--2052, 2019.

\bibitem[Igarashi and Zwicker(2024)]{IgarashiZw23}
Ayumi Igarashi and William~S. Zwicker.
\newblock Fair division of graphs and of tangled cakes.
\newblock \emph{Mathematical Programming}, 203\penalty0 (1--2):\penalty0
  931--975, 2024.

\bibitem[Lonc and Truszczynski(2020)]{LoncTr20}
Zbigniew Lonc and Miroslaw Truszczynski.
\newblock Maximin share allocations on cycles.
\newblock \emph{Journal of Artificial Intelligence Research}, 69:\penalty0
  613--655, 2020.

\bibitem[Procaccia(2013)]{Procaccia13}
Ariel~D. Procaccia.
\newblock Cake cutting: Not just child’s play.
\newblock \emph{Communications of the ACM}, 56\penalty0 (7):\penalty0 78--87,
  2013.

\bibitem[Procaccia(2016)]{Procaccia16}
Ariel~D. Procaccia.
\newblock Cake cutting algorithms.
\newblock In Felix Brandt, Vincent Conitzer, Ulle Endriss, J\'{e}r\^{o}me Lang,
  and Ariel~D. Procaccia, editors, \emph{Handbook of Computational Social
  Choice}, chapter~13, pages 311--329. Cambridge University Press, 2016.

\bibitem[Robertson and Webb(1998)]{RobertsonWe98}
Jack Robertson and William Webb.
\newblock \emph{Cake-Cutting Algorithms: Be Fair if You Can}.
\newblock Peters/CRC Press, 1998.

\bibitem[Segal-Halevi and Suksompong(2021)]{SegalhaleviSu21}
Erel Segal-Halevi and Warut Suksompong.
\newblock How to cut a cake fairly: a generalization to groups.
\newblock \emph{American Mathematical Monthly}, 128\penalty0 (1):\penalty0
  79--83, 2021.

\bibitem[Segal-Halevi and Suksompong(2023)]{SegalhaleviSu23}
Erel Segal-Halevi and Warut Suksompong.
\newblock Cutting a cake fairly for groups revisited.
\newblock \emph{American Mathematical Monthly}, 130\penalty0 (3):\penalty0
  203--213, 2023.

\bibitem[Stromquist(1980)]{Stromquist80}
Walter Stromquist.
\newblock How to cut a cake fairly.
\newblock \emph{American Mathematical Monthly}, 87\penalty0 (8):\penalty0
  640--644, 1980.

\bibitem[Stromquist(2008)]{Stromquist08}
Walter Stromquist.
\newblock Envy-free cake divisions cannot be found by finite protocols.
\newblock \emph{The Electronic Journal of Combinatorics}, 15:\penalty0 \#R11,
  2008.

\bibitem[Su(1999)]{Su99}
Francis~Edward Su.
\newblock Rental harmony: Sperner's lemma in fair division.
\newblock \emph{American Mathematical Monthly}, 106\penalty0 (10):\penalty0
  930--942, 1999.

\bibitem[Suksompong(2019)]{Suksompong19}
Warut Suksompong.
\newblock Fairly allocating contiguous blocks of indivisible items.
\newblock \emph{Discrete Applied Mathematics}, 260:\penalty0 227--236, 2019.

\bibitem[Suksompong(2021)]{Suksompong21}
Warut Suksompong.
\newblock Constraints in fair division.
\newblock \emph{ACM SIGecom Exchanges}, 19\penalty0 (2):\penalty0 46--61, 2021.

\bibitem[Tucker-Foltz(2023)]{Tuckerfoltz23}
Jamie Tucker-Foltz.
\newblock Thou shalt covet the average of thy neighbors' cakes.
\newblock \emph{Information Processing Letters}, 180:\penalty0 106341, 2023.

\bibitem[Wu et~al.(2007)Wu, Wang, Kuan, and Chao]{WuWaKu07}
Bang~Ye Wu, Hung-Lung Wang, Shih~Ta Kuan, and Kun-Mao Chao.
\newblock On the uniform edge-partition of a tree.
\newblock \emph{Discrete Applied Mathematics}, 155\penalty0 (10):\penalty0
  1213--1223, 2007.

\end{thebibliography}

\appendix

\section{Proof of \texorpdfstring{\Cref{prop:relationships}}{Proposition 1}} \label{app:relationships}
\begin{unnumberedthm}{\Cref{prop:relationships}}
Let $\mathcal{A}$ be an allocation for $n \geq 2$ agents, and let $\alpha \geq 1$. 
\begin{itemize}
    \item If $\mathcal{A}$ is $\alpha$-EF, then it is $\left(\alpha - \frac{\alpha - 1}{n}\right)$-proportional.
    \item If $\mathcal{A}$ is $\alpha$-EF, then it is $\left( \frac{\alpha - 1}{\alpha + 1} \right)$-additive-EF.
    \item If $\mathcal{A}$ is $\alpha$-proportional, then it is $\left(1 - \frac{2}{\alpha n}\right)$-additive-EF.
\end{itemize}
\end{unnumberedthm}

\begin{proof}
Let $i \in N$.
We prove the three statements in turn.
\begin{itemize}
    \item Suppose that $\mathcal{A}$ is $\alpha$-EF, and let $\xi = \val[i]{A_i}$.
    The share of every other agent $j \in N$ is worth at most $\alpha \xi$ to agent~$i$, so 
    \[
    1 = \sum_{j \in N} \val[i]{A_j} \leq \xi + (n-1) \alpha \xi = \xi (\alpha n - \alpha + 1).
    \]
    This gives 
    \[
    \xi \geq \frac{1}{\alpha n - \alpha + 1} = \frac{1}{\left(\alpha - \frac{\alpha - 1}{n}\right)n},
    \]
    which establishes $(\alpha - \frac{\alpha - 1}{n})$-proportionality.
    \item Suppose that $\mathcal{A}$ is $\alpha$-EF, and let $j \in N$. 
    If $\val[i]{A_j} > \frac{\alpha}{\alpha + 1}$, then $\val[i]{A_i} < 1 - \frac{\alpha}{\alpha + 1} = \frac{1}{\alpha} \left( \frac{\alpha}{\alpha + 1} \right) <\frac{\val[i]{A_j}}{\alpha}$, and the allocation is not $\alpha$-EF, a contradiction.
    Hence, $\val[i]{A_j} \leq \frac{\alpha}{\alpha + 1}$. 
    By definition of $\alpha$-EF we have $\val[i]{A_i} \geq \val[i]{A_j} / \alpha$, and so 
    \begin{align*}
        \val[i]{A_j} - \val[i]{A_i} &\leq \val[i]{A_j} - \frac{\val[i]{A_j}}{\alpha} \\
        &= \left(1 - \frac{1}{\alpha} \right) \val[i]{A_j} 
        \leq \left(1 - \frac{1}{\alpha} \right) \left( \frac{\alpha}{\alpha + 1} \right) 
        = \frac{\alpha - 1}{\alpha + 1};
    \end{align*}
    this shows $\left( \frac{\alpha - 1}{\alpha + 1} \right)$-additive-EF.
    \item Suppose that $\mathcal{A}$ is $\alpha$-proportional, and let $j \in N$. 
    Then $\val[i]{A_i} \geq \frac{1}{\alpha n}$ and $\val[i]{A_j} \leq 1 - \val[i]{A_i} \leq 1 - \frac{1}{\alpha n}$, so 
    \[
    \val[i]{A_j} - \val[i]{A_i} \leq 1 - \frac{2}{\alpha n},
    \]
    proving $(1 - \frac{2}{\alpha n})$-additive-EF.
\end{itemize}
This completes the proof.
\end{proof}

\section{\texorpdfstring{\textsc{Divide}}{Divide}} \label{app:divide}

In this appendix, we describe the algorithm \textsc{Divide} (\Cref{alg:divide}), which takes as input a connected subgraph~$H$ worth $\beta_0$ to some agent in $N' \subseteq N$, a positive threshold $\beta \leq \beta_0$, and a root vertex $r$ of $H$. 
The output of \textsc{Divide} satisfies the conditions stated in \Cref{lem:divide}.

\begin{algorithm}[!h]
    \caption{\textsc{Divide}$(H, N', \beta, r)$.}
    \label{alg:divide}
    \textbf{Input}: Connected subgraph $H$, set of agents $N'$, threshold $\beta \in (0, \beta_0]$ where $\beta_0 = \val[i]{H}$ for some $i \in N'$, vertex $r$ of~$H$. \\
    \textbf{Output}: Graphs $H_1$ and $H_2$. \\
    \vspace{-3.5mm}
    \begin{algorithmic}[1]
        \STATE $L \leftarrow \emptyset$ (a list of vertices) \label{ln:divide-decycle1}
        \WHILE{there exists a cycle in $H$}
            \STATE $[v_1, v_2] \leftarrow$ any edge on the cycle \label{ln:cycle1}
            \STATE add a new vertex $v_2'$ to the graph
            \STATE append $v_2'$ to $L$
            \STATE $[v_1, v_2'] \leftarrow [v_1, v_2]$
            \STATE delete edge $[v_1, v_2]$ from the graph \label{ln:cycle2}
        \ENDWHILE \ {\scriptsize (this converts $H$ into a tree)} \label{ln:divide-decycle2}
        \STATE $v \leftarrow r$ \label{ln:divide-split1}
        \WHILE{there exists $i \in N'$ and a child vertex $w$ of $v$ such that $\val[i]{S_w} \geq \beta$}
            \STATE $v \leftarrow$ any child vertex $w$ of $v$ with $\val[i]{S_w} \geq \beta$
        \ENDWHILE \ {\scriptsize ($S_v$ is worth at least $\beta$ to some agent)}
        \IF{there exists $i \in N'$ and a child vertex $w$ of $v$ such that $\val[i]{S_{v,w}} \geq \beta$}
            \STATE $w^* \leftarrow$ any child vertex $w$ of $v$ with $\val[i]{S_{v,w}} \geq \beta$ \label{ln:divide-if1}
            \STATE $x \leftarrow$ the point in $[w^*, v]$ closest to $w^*$ such that there exists $i \in N'$ with $\val[i]{S_{x,w^*}} = \beta$ 
            \STATE $H_1 \leftarrow S_{x,w^*}$
            \STATE $H_2 \leftarrow H \setminus H_1$ (add the point $x$ to $H_2$) \label{ln:divide-if2}
        \ELSE
            \STATE $C \leftarrow \emptyset$ (a list of child vertices of $v$) \label{ln:divide-else1}
            \WHILE{$\sum_{w \in C} \val[i]{S_{v,w}} < \beta$ for all $i \in N'$}
                \STATE add to $C$ any child vertex of $v$ that is not yet in $C$
            \ENDWHILE
            \STATE $H_1 \leftarrow \bigcup_{w \in C} S_{v, w}$
            \STATE $H_2 \leftarrow H \setminus H_1$ (add the point $v$ to $H_2$) \label{ln:divide-else2}
        \ENDIF \label{ln:divide-split2}
        \WHILE{$L \neq \emptyset$} \label{ln:divide-recycle1}
            \STATE reverse lines \ref{ln:cycle1} to \ref{ln:cycle2} based on the last element of $L$
            \STATE remove from $L$ its last element
            \STATE adjust $H_1$ and $H_2$ accordingly
        \ENDWHILE \ {\scriptsize (this converts the graph back to the original $H$)} \label{ln:divide-recycle2}
        \STATE \textbf{return} $(H_1, H_2)$
    \end{algorithmic}
\end{algorithm}

We will work with rooted trees. 
Let $T = (V, E)$ be a rooted tree. 
For a vertex $v \in V$, let $S_v$ be the subtree at~$v$, that is, $S_v$ is the subgraph induced by $v$ and all of its descendants in $T$.
For a child vertex $w$ of $v$, let $S_{v, w}$ be the subgraph induced by $v$, $w$, and all descendants of $w$ in $T$.

We begin by converting the graph $H$ into a tree (Lines \ref{ln:divide-decycle1} to \ref{ln:divide-decycle2}). 
As long as $H$ contains a cycle, select an edge $[v_1, v_2]$ belonging to the cycle, add a new vertex $v_2'$, and replace the edge $[v_1, v_2]$ with the edge $[v_1, v_2']$ while keeping the remaining edges of the graph. 
Note that the new edge created should have the same value to each agent as the one it replaces. 
This procedure decreases the number of cycles in the graph by at least one. 
Therefore, by repeating this procedure, $H$ eventually becomes a tree.
Note that any connected share of this tree corresponds to a connected share in the original graph with the same value for every agent.

We now split the tree $H$ into two subtrees $H_1$ and $H_2$ (Lines \ref{ln:divide-split1} to \ref{ln:divide-split2}). 
To do so, traverse the tree from the root vertex~$r$ until some vertex~$v$ is reached such that the subtree~$S_v$ is worth at least $\beta$ to some agent in $N'$, while the subtree~$S_w$ at each child vertex~$w$ of $v$ is worth less than~$\beta$ to all agents in $N'$. 
Since $S_r$ is worth at least $\beta$ to some agent while $S_z$ is worth $0$ to all agents for every leaf vertex $z$, there must be some vertex $v$ where the condition holds. 
We consider two cases.

\begin{itemize}
    \item \textbf{Case 1 (Lines \ref{ln:divide-if1} to \ref{ln:divide-if2}): There exists an agent $i \in N'$ and a child vertex~$w$ of $v$ such that $\val[i]{S_{v, w}} \geq \beta$.} \\
    Let $w^*$ be one such child vertex. By assumption, the subtree $S_{w^*}$ is worth less than~$\beta$ to all agents. 
    Find the point $x \in [w^*, v]$ closest to $w^*$ such that $S_{x, w}$ is worth exactly $\beta$ to some agent $i^*$ and at most $\beta$ to all other agents.
    (Here we abuse notation slightly and treat $x$ as a vertex.)
    Let $S_{x, w}$ be the first share, and the remaining portion of $H$ be the second share. Note that a new vertex $x \in [w^*, v]$ is created and belongs to both shares.
    The first share is worth $\beta$ to some agent and at most $\beta$ to every agent.
    \item \textbf{Case 2 (Lines \ref{ln:divide-else1} to \ref{ln:divide-else2}): 
    For all agents $i \in N'$ and all child vertices $w$ of~$v$, we have $\val[i]{S_{v, w}} < \beta$.} \\
    Initialize $H_1$ to be a graph with only vertex $v$. 
    For each child vertex $w$ of $v$, iteratively add $S_{v, w}$ to $H_1$ until $H_1$ is worth at least $\beta$ to some agent $i^*$. 
    Let $H_1$ be the first share, and the remaining portion of $H$ be the second share. 
    Note that the vertex $v$ belongs to both shares.
    The first share is worth at least $\beta$ to some agent and less than $2 \beta$ to all agents, because $H_1$ is worth less than~$\beta$ \emph{before} the last $S_{v, w}$ is added, and the last $S_{v, w}$ is also worth less than $\beta$, so their combined value is less than $2 \beta$ to every agent. 
\end{itemize}

Lastly, we convert the tree back to the original graph (Lines \ref{ln:divide-recycle1} to \ref{ln:divide-recycle2}). This is done simply by reversing the steps taken to convert the original graph into a tree. Along the way, we also adjust the two shares accordingly. At each step, each of the two shares is connected since the vertices that are removed are always leaf vertices.

\end{document}